\newtheorem{theorem}{Theorem}
\newtheorem{remark}[theorem]{Remark}
\newtheorem{lemma}[theorem]{Lemma}
\newtheorem{corollary}[theorem]{Corollary}
\newtheorem{conjecture}[theorem]{Conjecture}
\numberwithin{equation}{section}
\numberwithin{theorem}{section}
\DeclareMathOperator{\Pf}{Pf}
\DeclareMathOperator{\Tr}{Tr}
\DeclareMathOperator{\diag}{diag}
\title{Schur function expansion in non-Hermitian ensembles and averages of characteristic polynomials}
\author{Alexander Serebryakov and Nick Simm}
\address{Department of Mathematics, University of Sussex, Brighton, BN1 9RH, United Kingdom}
\email{serebryakov\_alexander@proton.me, n.j.simm@sussex.ac.uk}
\begin{document}

\begin{abstract}
We study $k$-point correlators of characteristic polynomials in non-Hermitian ensembles of random matrices, focusing on the Ginibre and truncated unitary random matrices. Our approach is based on the technique of character expansions, which expresses the correlator as a sum over partitions involving Schur functions. We show how to sum the expansions in terms of representations which interchange the role of $k$ with the matrix size $N$. We also provide a probabilistic interpretation of the character expansion analogous to the Schur measure, linking the correlators to the distribution of the top row in certain Young diagrams. In more specific examples we evaluate these expressions explicitly in terms of $k \times k$ determinants or Pfaffians.
\end{abstract}
\maketitle
\section{Introduction and main results}
Characteristic polynomials of random matrices play an important role in both mathematical and physical applications of random matrix theory. For example, they have been successfully used to model statistical properties of the Riemann zeta-function or other L-functions \cite{KS00a, KS00b}. This motivated several works on the problem of computing multi-point correlators of characteristic polynomials \cite{BH00,BH01,CFKRS03}, defined as averaging a product of characteristic polynomials with respect to the distribution of the underlying random matrix. Bump and Gamburd \cite{BG06} performed such calculations using a character expansion technique in the setting of classical compact groups, building on earlier work of Diaconis and Shashahani \cite{DS94} who used it to study power traces in the same setting. The approach has been successfully applied to computing other integrals over the unitary group \cite{B00, O04, AAW23}. The evaluation of group integrals and the character expansion technique in general has a wide applicability in several topics including quantum field theory, quantum chromodynamics and string theory, see \cite{B00,SW03} and references therein.

The purpose of this paper is to develop a character expansion technique in the context of multi-point correlators of \textit{non-Hermitian} ensembles. We work with the three classical Ginibre ensembles, though we also apply the method to some non-Gaussian models such as truncations of random unitary matrices. The Ginibre ensembles are defined in terms of $N \times N$ matrices $G$ whose entries $G_{ij}$ are i.i.d.\ standard Gaussian random variables over the real, complex, or quaternion number systems. In the complex case, this is equivalent to a probability measure on the set $\mathbb{C}^{N \times N}$ of all complex $N \times N$ matrices of the form
\begin{equation}
\mu_{N}(dG) = \frac{1}{\pi^{N^{2}}}\,\text{exp}\left(-\mathrm{Tr}(GG^{\dagger})\right)\,dG \label{ginmeas}
\end{equation}
where $dG$ is the Lebesgue measure on $\mathbb{C}^{N \times N}$. Such ensembles were introduced by Ginibre \cite{G65} as non-Hermitian counterparts of the ensembles such as the Gaussian Unitary Ensemble (GUE) that were previously introduced by Wigner in 1955. Hence \eqref{ginmeas} is sometimes also known as the Ginibre Unitary Ensemble (GinUE).

The characteristic polynomial of the complex Ginibre ensemble appeared in the work of Rider and Virag \cite{RV07} who showed how it is closely related to the Gaussian Free Field - see the survey \cite{BK21} on the recent activity surrounding such connections. In the case of the real Ginibre ensemble, related averages have appeared in the analysis of spin glasses \cite{P22} and neural networks \cite{WW21}. See \cite{BY22,BY23} for two recent review articles regarding progress on the Ginibre ensembles. 

\subsection{Complex ensembles}
Before stating our results we fix some notation that will be used throughout the paper. Let $\bm{z} = (z_1,\ldots,z_k)$ and $\bm{w} = (w_1,\ldots,w_k)$ be vectors in $\mathbb{C}^{k}$ and define the following correlator of characteristic polynomials
\begin{equation}
R^{\mathrm{GinUE}}_{N}(\bm{z},\bm{w};\Omega,\Sigma) = \mathbb{E}\left(\prod_{j=1}^{k}\det(\Omega G-z_{j}I_{N})\det(\Sigma G^{\dagger}-w_{j}I_{N})\right) \label{cmult}
\end{equation}
where $\Omega$ and $\Sigma$ are $N \times N$ deterministic complex matrices, with the expectation defined with respect to \eqref{ginmeas}. The matrix $\Omega G$ can be viewed as a multiplicative perturbation which deforms the original matrix $G$, see e.g.\ \cite{F01,FR09,FKP23}. If $\Omega = \Sigma = I_{N}$ we refer to \eqref{cmult} as $R^{\mathrm{GinUE}}_{N}(\bm{z},\bm{w})$. Throughout the paper, we reserve the notation
\begin{equation}
\Delta(\bm z) := \prod_{1 \leq i < j \leq k}(z_{j}-z_{i}) = \det \bigg\{z_{i}^{j-1}\bigg\}_{i,j=1}^{k}
\end{equation}
for the Vandermonde determinant in the variables $z_{1},\ldots,z_{k}$, whose dimension may vary depending on the context. Finally, for any two matrices $A$ and $B$, $A \otimes B$ is their Kronecker product.
\begin{theorem}
\label{th:ginue}
Let $G$ be a complex Ginibre matrix of size $N \times N$ and consider the multi-point correlator \eqref{cmult}, setting $Z = \mathrm{diag}(\bm z)$ and $W = \mathrm{diag}(\bm w)$. Then
\begin{equation}
R^{\mathrm{GinUE}}_{N}(\bm z, \bm w;\Omega,\Sigma) = \det(ZW)^{N}\,\mathbb{E}_{X}\left[\det(I_{kN}+\Omega\Sigma \otimes Z^{-1}XW^{-1}X^{\dagger})\right] \label{cgin}
\end{equation}
where $\mathbb{E}_{X}$ denotes the expectation with respect to a complex Ginibre matrix $X$ of size $k \times k$. If $\Omega = \Sigma = I_N$ we have
	\begin{align}
R^{\mathrm{GinUE}}_{N}(\bm z, \bm w) &= \frac{1}{\pi^{k^{2}}}
		\int_{\mathbb{C}^{k \times k}} dX e^{- \Tr XX^\dag }
		\det 
		\left (
	\begin{smallmatrix}
			Z && X\\
			-X^\dag && W
		\end{smallmatrix}
	\right )^N \label{cgindual}
		\\ 
		&=
		\left(\prod_{j=1}^{k}(N+j-1)!\right)\,\frac{\det \bigg\{ \sum_{l=0}^{ N+k- 1} \frac{( z_i w_j )^l}{l!}\bigg\}_{i,j=1}^{k}}{\Delta( \bm z) \Delta( \bm w)}. \label{cgindet}
	\end{align}
\end{theorem}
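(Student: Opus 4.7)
The theorem has three parts: the duality \eqref{cgin} between the $N\times N$ and $k\times k$ Gaussian averages, its specialization \eqref{cgindual} to $\Omega=\Sigma=I_N$, and the explicit evaluation \eqref{cgindet}. I would treat them in sequence.

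For \eqref{cgin}, the approach is a fermionic Hubbard--Stratonovich duality. Each of the $2k$ determinantal factors is represented as a Grassmann integral, $\det(\Omega G - z_jI_N) = \int d\bar\chi_j\,d\chi_j\,e^{-\bar\chi_j^T(\Omega G - z_jI)\chi_j}$ and similarly for the $\Sigma G^\dagger$ factors, with $N$-component Grassmann vectors. The resulting exponent is linear in $G$ and $G^\dagger$, so Gaussian integration against \eqref{ginmeas} via $\mathbb{E}_G[\exp(\Tr(J^\dagger G) + \Tr(G^\dagger K))] = \exp(\Tr(J^\dagger K))$ produces an effective quartic Grassmann term of the form $-\Tr(YY')$, where $Y = \bar\chi^T\Omega\eta$ and $Y' = \bar\eta^T\Sigma\chi$ are $k\times k$ Grassmann bilinears (the sign arising from the cyclic rearrangement of a trace containing odd-degree Grassmann factors), together with quadratic mass terms encoding $Z$ and $W$. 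I linearize via the Hubbard--Stratonovich identity $\exp(-\Tr(YY')) = \pi^{-k^2}\int dX\,e^{-\Tr(XX^\dagger) - \Tr(YX) + \Tr(Y'X^\dagger)}$, introducing an auxiliary $k\times k$ complex Gaussian matrix $X$. The Grassmann action is now quadratic and its fermionic integral evaluates to a $2kN\times 2kN$ block determinant with tensor-product blocks $I_N\otimes Z$, $I_N\otimes W$, $-\Omega\otimes X^T$, $\Sigma\otimes\bar X$. A Schur complement yields $\det(ZW)^N\det(I_{kN} + \Omega\Sigma\otimes Z^{-1}X^TW^{-1}\bar X)$, and the change of variable $X\mapsto X^T$ (which preserves the Ginibre measure on $X$) converts $X^TW^{-1}\bar X$ into $XW^{-1}X^\dagger$, giving \eqref{cgin}.

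Equation \eqref{cgindual} follows from \eqref{cgin} by specializing $\Omega = \Sigma = I_N$: the tensor identity $\det(I_{kN} + I_N\otimes M) = \det(I_k + M)^N$ collapses the integrand to the $N$-th power of a $k\times k$ determinant, and the Schur complement $\det\left(\begin{smallmatrix}Z & X\\ -X^\dagger & W\end{smallmatrix}\right) = \det(W)\det(Z + XW^{-1}X^\dagger)$ identifies this with the claimed $2k\times 2k$ block-determinant integrand.

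For \eqref{cgindet}, rather than attacking the $k\times k$ integral directly, I would return to the joint eigenvalue density of the GinUE, proportional to $|\Delta(\bm\lambda)|^2\prod_i e^{-|\lambda_i|^2}$. Using $\prod_{a,i}(z_a - \lambda_i) = \Delta(\bm\lambda, \bm z)/(\Delta(\bm\lambda)\Delta(\bm z))$ and its $\bar{\bm\lambda}, \bm w$-conjugate, the integrand simplifies to $\Delta(\bm\lambda, \bm z)\Delta(\bar{\bm\lambda}, \bm w)/(\Delta(\bm z)\Delta(\bm w))\cdot\prod_i e^{-|\lambda_i|^2}$. Laplace expansion of each $(N+k)$-dimensional Vandermonde along its $k$ fixed ($\bm z$- or $\bm w$-)rows, followed by Andreief's identity on the remaining $N$-fold eigenvalue integral using the orthogonality $\int_{\mathbb{C}}\lambda^s\bar\lambda^t e^{-|\lambda|^2}\,d^2\lambda = \pi s!\delta_{s,t}$, leaves only diagonal terms; Cauchy--Binet then reassembles the sum into the single $k\times k$ determinant $\det[\sum_{l=0}^{N+k-1}(z_iw_j)^l/l!]$, and the GinUE eigenvalue normalization $Z_N = N!\pi^N\prod_{j=0}^{N-1}j!$ supplies the prefactor $\prod_{j=1}^k(N+j-1)!$.

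The principal obstacle is the combined sign and conjugation bookkeeping in \eqref{cgin}: the Grassmann cyclic sign, the structure of the Schur complement, and the invariances of the Ginibre measure on $X$ under $X\mapsto X^T$ and $X\mapsto X^\dagger$ must conspire precisely to produce $+\Omega\Sigma\otimes Z^{-1}XW^{-1}X^\dagger$ rather than any of the several nearby transposed or conjugated variants. In \eqref{cgindet} the only delicate point is carefully separating the fixed ($\bm z, \bm w$) rows from the integration rows when Laplace-expanding the two Vandermondes before applying Cauchy--Binet.
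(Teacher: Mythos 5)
Your proposal is mathematically sound, but it follows a genuinely different route from the paper, and in fact it reassembles the two prior methods that the paper explicitly sets out to avoid. For \eqref{cgin} you use the fermionic Hubbard--Stratonovich duality (Grassmann representation of each determinant, Gaussian integration over $G$, decoupling of the quartic term by an auxiliary $k\times k$ Gaussian matrix $X$, Schur complement of the resulting block determinant) -- this is essentially the Nishigaki--Kamenev / Tribe--Zaboronski supersymmetry derivation, and it does work, though the sign and conjugation bookkeeping you flag is real and must be carried out in full for the general $\Omega,\Sigma$ case. For \eqref{cgindet} you switch to an entirely separate computation from the GinUE joint eigenvalue density via Vandermonde concatenation, the planar orthogonality $\int_{\mathbb{C}}\lambda^s\bar\lambda^t e^{-|\lambda|^2}d^2\lambda=\pi s!\,\delta_{s,t}$, Andreief and Cauchy--Binet -- this is the Akemann--Vernizzi route, and your normalization bookkeeping ($\prod_{j=N}^{N+k-1}j!=\prod_{j=1}^{k}(N+j-1)!$) is correct. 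The paper instead derives everything from a single character expansion: the dual Cauchy identity \eqref{dual_Cauchy_identity} turns the correlator into a sum over partitions $\sum_\eta s_\eta(\bm z^{-1})s_\eta(\bm w^{-1})d'_{\eta'}s_{\eta'}(\Omega\Sigma)$ via the orthogonality relation \eqref{beta2orthog}; the splitting identity \eqref{beta2split} then resums this into the dual integral \eqref{cgin}, while the coefficient identity \eqref{coeff_r} together with Cauchy--Binet resums the \emph{same} expansion into the determinant \eqref{cgindet}. What the paper's approach buys is (i) a direct explanation of why \eqref{cgindual} and \eqref{cgindet} agree, which in your proposal is only indirect (both independently equal $R_N^{\mathrm{GinUE}}$), and (ii) portability to non-Gaussian ensembles such as truncated unitary matrices, where neither the Gaussian source-integration step of your Grassmann argument nor a tractable eigenvalue jpdf is readily available. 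Your approach buys a derivation that avoids symmetric-function machinery entirely and stays within classical Gaussian integration, at the cost of two disjoint arguments and delicate fermionic sign-tracking.
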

\begin{proof}
See Section \ref{sec:cgin}.
\end{proof}
Formula \eqref{cgindual} was obtained by Nishigaki and Kamenev \cite{NK02} by the use of Grassmann variables, in special cases where the vectors $\bm z$ and $\bm w$ contain several coinciding points. The determinantal expression \eqref{cgindet} was obtained by Akemann and Vernizzi \cite{AV03} by explicit knowledge of the joint probability density function of eigenvalues of $G$, see also \cite{Ber04}. Formula \eqref{cgin} generalises these two results and our approach does not rely on the joint distribution of eigenvalues of $G$. These expressions replace the initial average over $N \times N$ random matrices with an average over a smaller ensemble of $k \times k$ matrices, or an explicit $k \times k$ determinant, a phenomenon sometimes referred to as \textit{duality} in the literature. It is particularly convenient for asymptotic analysis as $N \to \infty$ with fixed $k$ \cite{DS20}. 

For the Gaussian measure considered in Theorem \ref{th:ginue}, an approach based on the diffusion equation has been developed, see \cite{G16} and recent work of Liu and Zhang \cite{LZ22} who obtained formulae \eqref{cgin} and \eqref{cgindual} with the method. However, this approach does not seem to be obviously applicable to the non-Gaussian measures we will discuss below, such as truncations of random unitary matrices.  

Our approach is quite different to \cite{AV03,NK02,LZ22} mentioned above. It is based on expanding \eqref{cmult} as a sum over partitions involving Schur functions. The utility of a Schur function approach is somewhat less appreciated in the non-Hermitian context, though it has been successfully applied to computing moments of characteristic polynomials in \cite{FK07, FR09, SS22} and was applied to circular ensembles in \cite{FS09}. It was applied to Christoffel-Darboux type random matrix ensembles in \cite{ST21}. Building on these works we show how to evaluate the general quantity \eqref{cmult} and resum the resulting expansions as determinants or as dual matrix integrals, revealing directly the connection between \eqref{cgindual} and \eqref{cgindet}.

We also provide a probabilistic interpretation behind these expansions. Consider the set $\mathcal{P}_{k}$ of all partitions $\eta = (\eta_1,\ldots,\eta_k)$ of length $l(\eta) \leq k$, where $\eta_{1} \geq \eta_{2} \geq \ldots \geq \eta_{k}$ are weakly decreasing non-negative integers. To each $\eta \in \mathcal{P}_{k}$ we introduce the probability
\begin{equation}
p(\eta) = \frac{1}{\mathcal{Z}_{k}}\,s_{\eta}(\bm z)s_{\eta}(\bm w)\prod^k_{j=1}
	\frac{1}{(\eta_j+k-j)!}, \qquad \bm z, \bm w \in \mathbb{R}_{+}^{k}\label{measPartsintro}
\end{equation}
where $s_{\eta}(\bm z)$ is the Schur function, and $\mathcal{Z}_{k}$ is a normalization constant. See Section \ref{sec:complex} for further background on partitions and Schur functions, in particular Section \ref{sec:prob} where $\mathcal{Z}_{k}$ is computed explicitly.
\begin{theorem} \label{th:probrep}
When $\Omega = \Sigma = I_{N}$, we have the following equivalent representation for the quantity \eqref{cmult}
\begin{equation}
	R^{\mathrm{GinUE}}_{N}(\bm z, \bm w) = 
	\sum_{\eta, l(\eta) \leq k, \eta_1 \leq N}
	s_{\eta}(\bm z) s_{\eta}(\bm w)
	\prod^k_{j=1}
	\frac{(N+j-1)!}{(\eta_j + k - j)!}.
\end{equation}
If the entries of $\bm z$ and $\bm w$ are positive, we have
\begin{equation}
R^{\mathrm{GinUE}}_{N}(\bm z, \bm w) = 
	\mathcal{Z}_{k}\left(\prod^k_{j=1}(N+j-1)!\right)\mathbb{P}(\eta_{1} \leq N),
	\end{equation}
i.e.\ the distribution of the largest part with respect to \eqref{measPartsintro}.
\end{theorem}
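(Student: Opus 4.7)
The plan is to derive Theorem \ref{th:probrep} directly from the $k\times k$ determinantal formula \eqref{cgindet} of Theorem \ref{th:ginue} by running the Cauchy--Binet (Andr\'eief) identity in reverse so as to expose the Schur structure, and then to read off the probabilistic reformulation from the definition \eqref{measPartsintro} of $p(\eta)$.

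Concretely, I would write each entry of the determinant in \eqref{cgindet} as an inner product
\begin{equation*}
\sum_{l=0}^{N+k-1}\frac{(z_iw_j)^l}{l!}=\sum_{l=1}^{N+k} z_i^{l-1}\cdot\frac{w_j^{l-1}}{(l-1)!},
\end{equation*}
and apply the Cauchy--Binet formula to the two resulting $k\times(N+k)$ matrices. This produces a sum over $k$-subsets $\{l_1<\cdots<l_k\}\subseteq\{1,\ldots,N+k\}$ of products of $k\times k$ minors. Reindexing by $\lambda_t:=l_{k-t+1}-1$ (so that $\lambda_1>\cdots>\lambda_k\geq 0$ and $\lambda_1\leq N+k-1$) and then $\eta_t:=\lambda_t-(k-t)$ (so that $\eta$ ranges over all partitions with $l(\eta)\leq k$ and $\eta_1\leq N$), each column reversal contributes a sign $(-1)^{k(k-1)/2}$ whose product is $+1$, and the factorial denominators collect to $\prod_t(\eta_t+k-t)!$. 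After dividing by $\Delta(\bm z)\Delta(\bm w)$ and invoking the bialternant formula $s_\eta(\bm z)=\det\{z_i^{\eta_t+k-t}\}_{i,t}/\Delta(\bm z)$ twice, one arrives at
\begin{equation*}
\frac{\det\Big\{\sum_{l=0}^{N+k-1}(z_iw_j)^l/l!\Big\}_{i,j=1}^{k}}{\Delta(\bm z)\Delta(\bm w)}=\sum_{\substack{\eta\,:\,l(\eta)\leq k\\ \eta_1\leq N}} s_\eta(\bm z)\,s_\eta(\bm w)\prod_{t=1}^{k}\frac{1}{(\eta_t+k-t)!}.
\end{equation*}
Multiplying through by $\prod_j(N+j-1)!$ and invoking \eqref{cgindet} yields the first assertion of the theorem.

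For the second assertion, assuming $\bm z,\bm w\in\mathbb{R}_+^k$ so that $p(\eta)\geq 0$ defines a genuine probability, summing \eqref{measPartsintro} over $\{\eta_1\leq N\}$ gives
\begin{equation*}
\mathbb{P}(\eta_1\leq N)=\frac{1}{\mathcal{Z}_k}\sum_{\substack{\eta\,:\,l(\eta)\leq k\\ \eta_1\leq N}} s_\eta(\bm z)\,s_\eta(\bm w)\prod_{t=1}^{k}\frac{1}{(\eta_t+k-t)!},
\end{equation*}
which, combined with the first assertion, produces the stated formula for $R^{\mathrm{GinUE}}_{N}(\bm z,\bm w)$. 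The only delicate point in the whole argument is the sign bookkeeping for the two column reversals in the Andr\'eief step -- the signs happen to cancel -- and apart from that the proof is a bare application of Cauchy--Binet together with the bialternant definition of Schur functions. (An alternative route, more in line with the character-expansion philosophy advertised by the paper, would be to prove the Schur sum directly from a Cauchy-type identity applied to the product of characteristic polynomials and then run the same combinatorial identity forwards to recover \eqref{cgindet}; but given Theorem \ref{th:ginue}, the direction described above is shorter.)
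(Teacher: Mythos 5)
Your argument is correct and is in substance the paper's own: the identity you obtain by running Cauchy--Binet backwards on \eqref{cgindet} is exactly the paper's equation \eqref{cauchybinetgin2} (same reindexing $l_j=\eta_j+k-j$, same cancelling column-reversal signs), and the paper simply quotes the resulting Schur-sum expression \eqref{beta2refc} from the middle of the proof of Theorem \ref{th:ginue} rather than re-deriving it from the final determinant. The one point the paper's proof supplies that you leave implicit is the finiteness of $\mathcal{Z}_k$, needed for \eqref{measPartsintro} to be a genuine probability measure; it follows from the same Andr\'eief computation with the cutoff $\eta_1\leq N$ removed, giving $\mathcal{Z}_k=\det\{e^{z_iw_j}\}_{i,j=1}^{k}/(\Delta(\bm z)\Delta(\bm w))$.
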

\begin{proof}
See Section \ref{sec:prob}.
\end{proof}
The distribution \eqref{measPartsintro} is reminiscent of the \textit{Schur measure} which plays an important role in problems related to KPZ universality, see \cite{Joh00, O01}. Character sums of this type also appeared in \cite{FR07} related to an inhomogeneous model of last passage percolation with geometric weights. 

Our method is well adapted to obtaining analogous results for another non-Hermitian ensemble known as truncations. Let $\mathrm{O}(N)$, $\mathrm{U}(N)$ and $\mathrm{Sp}(2N)$ denote the three classical compact groups of N × N orthogonal,
unitary or symplectic matrices. These matrix groups each come with a unique translation invariant
measure known as Haar measure, see \cite{M19book} for details. Choosing an element $U$ from one of $\mathrm{O}(N)$, $\mathrm{U}(N)$ or $\mathrm{Sp}(2N)$ with respect to the Haar measure, we consider the sub-block decomposition
\begin{equation}
U = \begin{pmatrix} T & *\\ * & * \end{pmatrix}, \label{trunc}
\end{equation}
where the principal sub-matrix $T$ is of size $M \times M$ with $M < N$. Then the random matrix $T$ is said to belong to the truncated orthogonal, unitary or symplectic ensemble (denoted TOE, TUE or TSE respectively). These ensembles were introduced and studied in the works \cite{ZS00, KSZ10, KL21}. When $M$ is fixed and $N \to \infty$, after suitable rescaling they recover the Ginibre ensembles discussed previously. On the other hand, if $N-M$ stays of finite order their eigenvalue statistics resemble those from the corresponding classical compact group. In particular, we point out that our results below continue to apply for averages over the whole group, i.e.\ when $N=M$. In particular, if $\Sigma = \Omega = I_{N}$, as a special case our results reduce to averages of characteristic polynomials over the CUE (Circular Unitary Ensemble).

For definiteness consider the complex case. Let $T$ denote a truncated unitary random matrix and define
\begin{equation}
R^{\mathrm{TUE}}_{N,M}(\bm{z},\bm{w};\Omega,\Sigma) = \mathbb{E}\left(\prod_{j=1}^{k}\det(\Omega T-z_{j}I_{M})\det(\Sigma T^{\dagger}-w_{j}I_{M})\right) \label{tmult}
\end{equation}
where $\Omega$ and $\Sigma$ are again deterministic source matrices. Here the expectation is taken with respect to truncated matrices $T$ introduced above. We mention that multiplicative perturbations of the form $\Omega T$ appeared recently in the context of extreme eigenvalue statistics of rank-one perturbations \cite{FKP23}. 

In order to state our results we need to define the appropriate analogue of the dual averaging $\mathbb{E}_{X}$ in Theorem \ref{th:ginue}. We define the following probability measure on $\mathbb{C}^{k \times k}$
\begin{equation}
\mu_{k}(dX) = \frac{1}{S^{(2)}_{k}}\,\det ( I_k + XX^\dag)^{-N-2k}\,dX \label{truncxpdf}
\end{equation}
where $S^{(2)}_{k}$ is a normalization constant and $dX$ is the Lesbesgue measure on $\mathbb{C}^{k \times k}$. This dual measure also played an important role in the work \cite{FK07}.
\begin{theorem}\label{th:tue}
Let $T$ belong to the truncated unitary ensemble \eqref{trunc} and consider multi-point correlator \eqref{tmult}, setting $Z = \diag(\bm z)$ and $W = \diag(\bm w)$. Then we have
	\begin{equation}
	\begin{split}
R^{\mathrm{TUE}}_{N,M}(\bm z, \bm w;\Omega,\Sigma) =\det(ZW)^{M}\mathbb{E}_{X}\left(\det ( I_{kM} + \Omega \Sigma \otimes Z^{-1} X W^{-1} X^\dag )\right) \label{truncr1}
	\end{split}
	\end{equation}
where $\mathbb{E}_{X}$ denotes expectation with respect to \eqref{truncxpdf}. If $\Omega = \Sigma = I_M$ we have
	\begin{align}
	R^{\mathrm{TUE}}_{N,M}(\bm z, \bm w) &= \frac{1}{S^{(2)}_{k}}
		\int_{\mathbb{C}^{k \times k}} dX
		\det ( I_k + XX^\dag)^{-N-2k} 
		\det 
		\left (
	\begin{smallmatrix}
			Z && X\\
			-X^\dag && W
		\end{smallmatrix}
	\right )^M
		\label{truncr2}
		\\
		&=
		D^{(2)}_{k}
		\frac{ \det \bigg\{ \sum_{l=0}^{M + k - 1} \frac{(N - M + l)!}{l!} (z_i w_j)^{l} \bigg\}_{i,j=1}^{k} }{\Delta(\bm z) \Delta(\bm w)} \label{truncr3}
	\end{align}
where
\begin{equation}
D^{(2)}_{k} = \prod_{j=1}^{k}\frac{(M+j-1)!}{(N+j-1)!} \label{dk2}
\end{equation}
and
\begin{equation}
S^{(2)}_{k} = \int_{\mathbb{C}^{k \times k}}dX \det(I_{k}+XX^{\dagger})^{-N-2k} = \pi^{k^{2}}\prod_{j=1}^{k}\frac{(N+j-1)!}{(N+k+j-1)!}. \label{s2k_coeff}
\end{equation}
\end{theorem}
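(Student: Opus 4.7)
My strategy mirrors the proof of Theorem \ref{th:ginue} for the Ginibre case, with Haar orthogonality on $U(N)$ replacing Gaussian orthogonality. The first step is to apply the dual Cauchy identity to expand each product of determinants as a Schur sum:
\begin{equation*}
\prod_{j=1}^{k}\det(\Omega T-z_{j}I_{M})=(-1)^{Mk}\det(Z)^{M}\sum_{\lambda\subseteq(k^{M})}(-1)^{|\lambda|}s_{\lambda}(\Omega T)\,s_{\lambda'}(\bm z^{-1}),
\end{equation*}
and analogously for the product in $\Sigma T^{\dagger}$ and $\bm w$. Substituting into \eqref{tmult} rewrites $R^{\mathrm{TUE}}_{N,M}$ as a double character sum over partitions $\lambda,\mu$ weighted by the Haar averages $\mathbb{E}[s_{\lambda}(\Omega T)\,s_{\mu}(\Sigma T^{\dagger})]$.

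The central step is to evaluate this TUE average. Since $T$ is the principal $M\times M$ corner of $U\in U(N)$, I would pad $\Omega,\Sigma$ to $N\times N$ matrices $\tilde\Omega,\tilde\Sigma$ by adjoining zero rows and columns. For $l(\lambda)\leq M$ one then has $s_{\lambda}(\Omega T)=s_{\lambda}(\tilde\Omega U)$ (and zero otherwise, by rank), so the classical identity
\begin{equation*}
\int_{U(N)}s_{\lambda}(\tilde\Omega U)\,s_{\mu}(U^{\dagger}\tilde\Sigma)\,dU=\delta_{\lambda\mu}\,\frac{s_{\lambda}(\Omega\Sigma)}{s_{\lambda}(I_{N})}
\end{equation*}
collapses the double sum to a single one over $\lambda\subseteq(k^{M})$, weighted by the hook-content factor $1/s_{\lambda}(I_{N})$.

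To deduce \eqref{truncr1} I would then match this character sum to the claimed dual expectation by computing Schur moments under \eqref{truncxpdf}. Because \eqref{truncxpdf} is bi-unitary invariant, two applications of the identity $\int_{U(k)}s_{\nu}(AVBV^{\dagger})\,dV=s_{\nu}(A)s_{\nu}(B)/s_{\nu}(I_{k})$ reduce matters to Schur moments of $XX^{\dagger}$, whose eigenvalues follow a Jacobi-type density proportional to $\Delta(\xi)^{2}\prod_{i}(1+\xi_{i})^{-N-2k}$; Andreief's identity together with Beta-function evaluations then produces exactly the Gamma-function ratios matching $1/s_{\lambda}(I_{N})$. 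Expanding $\det(I_{kM}+\Omega\Sigma\otimes Z^{-1}XW^{-1}X^{\dagger})$ by the dual Cauchy identity and taking the expectation term by term reproduces the character sum, giving \eqref{truncr1}. The specialisation $\Omega=\Sigma=I_{M}$, combined with the Schur-complement identity $\det\bigl(\begin{smallmatrix}Z&X\\-X^{\dagger}&W\end{smallmatrix}\bigr)=\det(ZW)\det(I_{k}+Z^{-1}XW^{-1}X^{\dagger})$ raised to the power $M$ and the Kronecker-product identity $\det(I_{kM}+I_{M}\otimes A)=\det(I_{k}+A)^{M}$, immediately yields \eqref{truncr2}.

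For the determinantal form \eqref{truncr3}, I would return to the single character sum, substitute the Jacobi--Trudi-type ratio $s_\lambda(\bm z)=\det(z_i^{\lambda_j+k-j})/\Delta(\bm z)$, and apply Cauchy--Binet to exchange summation with the determinantal operation. The inner sum over the parts of $\lambda$ telescopes into the finite series $\sum_{l=0}^{M+k-1}(N-M+l)!(z_{i}w_{j})^{l}/l!$, and the products of hook-content factors combine into the prefactor $D_{k}^{(2)}=\prod_{j=1}^{k}(M+j-1)!/(N+j-1)!$. The normalisation $S_{k}^{(2)}$ is then computed as the Jacobi-eigenvalue Selberg integral $\int_{[0,\infty)^{k}}\Delta(\xi)^{2}\prod_{i}(1+\xi_{i})^{-N-2k}d\xi$. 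The main obstacle I foresee is the matching step between the character sum and the dual matrix integral in \eqref{truncr1}: it requires precise tracking of the Gamma-function quotients coming from the Jacobi density against the hook-content expression for $s_{\lambda}(I_{N})$, a computation sensitive to the specific exponent $N+2k$ in \eqref{truncxpdf} and the rectangle constraint $\lambda\subseteq(k^{M})$.
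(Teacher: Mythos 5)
Your proposal is correct and follows essentially the same route as the paper: dual Cauchy expansion, collapse of the double sum via Schur orthogonality for truncations, conversion of the surviving coefficient into an expectation over the measure \eqref{truncxpdf}, resummation by the dual Cauchy identity, the block-determinant identity for \eqref{truncr2}, Cauchy--Binet for \eqref{truncr3}, and a Selberg integral for $S^{(2)}_k$. The only differences are in how the two auxiliary identities are justified --- you derive the orthogonality by padding $\Omega,\Sigma$ to $N\times N$ and invoking $U(N)$ orthogonality where the paper cites \eqref{tueorthog} from \cite{SS22}, and you verify the Schur moments of $XX^{\dagger}$ by Andreief plus Beta integrals where the paper uses the Kaneko--Kadell integral (Lemma \ref{Kaneko_Kadel_integral_dual}) --- both of which are equivalent and sound.
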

\begin{proof}
See Section \ref{sec:ctue}.
\end{proof}

\subsection{Real ensembles}
An emphasis of the present work is the question of whether the above results have analogues for the real Ginibre ensembles. In Section \ref{sec:quat} we also present results for quaternionic ensembles, but to keep the introduction concise we restrict ourselves here to real matrices. The real Ginibre ensemble is defined by the following probability measure on the set $\mathbb{R}^{N \times N}$ of real $N \times N$ matrices of the form
\begin{equation}
\mu_{N}(dG) = \frac{1}{(2\pi)^{\frac{N^{2}}{2}}}\,e^{-\frac{1}{2}\mathrm{Tr}(GG^{\mathrm{T}})}\,dG, \label{realginmeas}
\end{equation}
where $dG$ is the Lebesgue measure on $\mathbb{R}^{N \times N}$. This ensemble is sometimes also known as the Ginibre Orthogonal Ensemble (GinOE). For real ensembles we assume throughout for simplicity that $N$ is even. The corresponding multi-point correlator is
\begin{equation}
R^{\mathrm{GinOE}}_{N}(\mathbf{z};\Omega) = \mathbb{E}\left(\prod_{j=1}^{2k}\det(\Omega G-z_{j}I_{N})\right) \label{rmult}
\end{equation}
where $\bm{z} = (z_1,\ldots,z_{2k}) \in \mathbb{C}^{2k}$ and $\Omega$ is again an $N \times N$ deterministic source matrix. 

We denote the space of $2k \times 2k$ complex anti-symmetric matrices by $\mathcal{A}_{2k}(\mathbb{C})$.  The \textit{Pfaffian} of an anti-symmetric matrix $A \in \mathcal{A}_{2k}(\mathbb{C})$ is a polynomial in the entries of $A$ defined by the formula
\begin{equation}
\mathrm{Pf}(A) = \frac{1}{2^{k}k!}\sum_{\sigma \in S_{2k}}\mathrm{sgn}(\sigma)\prod_{i=1}^{k}A_{\sigma(2i-1),\sigma(2i)}
\end{equation}
where $S_{2k}$ is the group of permutations on $2k$ symbols. The Pfaffian satisfies many properties analogous to the determinant, to which it is related by the formula $\det(A) = \mathrm{Pf}(A)^{2}$. Note that we only define the Pfaffian for anti-symmetric matrices, so we will frequently use a notation that only indexes the upper triangular part of $A$, i.e.\ $\mathrm{Pf}(A) = \mathrm{Pf}(A_{i,j})_{1 \leq i < j \leq 2k}$. We introduce the following dual probability measure on $\mathcal{A}_{2k}(\mathbb{C})$,
\begin{equation}
\mu_{k}(dX) = \frac{1}{\pi^{k(2k-1)}}\,e^{-\frac{1}{2}\mathrm{Tr}(XX^{\dagger})}\,dX \label{jpdfx}
\end{equation}
where $dX$ denotes the product of Lesbesgue measures over the upper triangular entries of $X$.
\begin{theorem}
\label{th:ginoe}
Let $G$ be a real Ginibre matrix of size $N \times N$ and consider the multi-point correlator \eqref{rmult}, setting $Z = \mathrm{diag}(\bm z)$. Then
\begin{equation}
R^{\mathrm{GinOE}}_{N}(\mathbf{z};\Omega) = \det(Z)^{N}\,\mathbb{E}_{X}\left[\mathrm{det}(I_{2kN}+\Omega\Omega^{\mathrm{T}}\otimes Z^{-1}XZ^{-1}X^{\dagger})^{\frac{1}{2}}\right] \label{rgingen}
\end{equation}
where $\mathbb{E}_{X}$ denotes expectation with respect to $2k \times 2k$ anti-symmetric matrices distributed according to \eqref{jpdfx}. If $\Omega = I_N$, we have
	\begin{align}
R^{\mathrm{GinOE}}_{N}(\mathbf{z}) &= \frac{1}{\pi^{k(2k-1)}}
		\int_{\mathcal{A}_{2k}(\mathbb{C})} dX e^{- \frac{1}{2}\Tr XX^\dag }\,
		\mathrm{Pf}
		\left (
	\begin{smallmatrix}
			X && Z\\
			-Z && X^{\dagger}
		\end{smallmatrix}
	\right )^{N} \label{rgindual}\\
		&=
		\left(\prod_{j=1}^{k}(N+2j-2)!\right)\frac{\mathrm{Pf} \bigg\{(z_{j}-z_{i}) \sum_{l=0}^{ N+2k- 2} \frac{( z_i z_j )^l}{l!}\bigg\}_{1 \leq i < j \leq 2k}}{\Delta(\bm z)}. \label{rginpfaff}
	\end{align}
\end{theorem}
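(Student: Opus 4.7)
The plan is to adapt the Schur function character expansion strategy used for the complex case (Theorem \ref{th:ginue}) to the real symmetry class, where the natural dual object is a Gaussian measure on $2k \times 2k$ \emph{antisymmetric} matrices rather than general $k \times k$ ones. The three formulas would be established in order: \eqref{rgingen} via character expansion and identification of the dual measure; \eqref{rgindual} as a specialization of \eqref{rgingen} via a determinant-to-Pfaffian identity; and \eqref{rginpfaff} by evaluating the Gaussian integral explicitly. For \eqref{rgingen} specifically, I would begin with the dual Cauchy identity
\begin{equation*}
\prod_{i,j}(1 - x_i y_j) = \sum_{\lambda}(-1)^{|\lambda|}s_{\lambda}(\bm x)s_{\lambda'}(\bm y)
\end{equation*}
to expand $\prod_{j=1}^{2k}\det(\Omega G - z_j I_N) = \det(Z)^N \prod_j \det(I_N - z_j^{-1}\Omega G)$ as a sum over partitions involving Schur polynomials in the eigenvalues of $\Omega G$ and in $-1/\bm z$. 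Taking the expectation then reduces to computing moments of Schur functions under GinOE, and the key contrast with the complex case is that these moments exhibit a \emph{Pfaffian} (rather than diagonal) coupling structure between partitions. Recognizing the resulting weighted partition sum as the series expansion of a Gaussian integral over antisymmetric $2k \times 2k$ matrices with measure \eqref{jpdfx} would then produce \eqref{rgingen}.

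The transition from \eqref{rgingen} to \eqref{rgindual} is algebraic, using the classical identity $\det(M) = \mathrm{Pf}(M)^2$ for an antisymmetric matrix $M$. One first checks that the $4k \times 4k$ block matrix $\bigl(\begin{smallmatrix}X & Z \\ -Z & X^\dagger\end{smallmatrix}\bigr)$ is antisymmetric (thanks to $X^T = -X$ and $Z^T = Z$, together with $X^\dagger = -\overline{X}$), and then a block-determinant computation using Schur complements gives
\begin{equation*}
\det\begin{pmatrix}X & Z \\ -Z & X^\dagger\end{pmatrix} = \det(Z)^{2}\det\!\bigl(I_{2k} + Z^{-1}XZ^{-1}X^\dagger\bigr),
\end{equation*}
after applying $\det(I + AB) = \det(I + BA)$. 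Raising to the $N/2$-th power (using that $N$ is assumed even) and combining with the Kronecker identity $\det(I_{2kN} + I_N \otimes A) = \det(I_{2k} + A)^N$ converts \eqref{rgingen} into \eqref{rgindual}.

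The final step, \eqref{rginpfaff}, requires evaluating the Gaussian integral in \eqref{rgindual}. I would expand $\mathrm{Pf}\bigl(\begin{smallmatrix}X & Z \\ -Z & X^\dagger\end{smallmatrix}\bigr)^N$ via the defining sum over pairings, then compute the Gaussian moments of the resulting monomials in $X$ by Wick's theorem for the antisymmetric Gaussian measure \eqref{jpdfx}. A de Bruijn--type identity then repackages the resulting combinatorial sum as a single $2k \times 2k$ Pfaffian, whose entries are two-point moments that evaluate explicitly to $(z_j - z_i)\sum_{l=0}^{N+2k-2}(z_i z_j)^l/l!$; the Vandermonde $\Delta(\bm z)$ in the denominator of \eqref{rginpfaff} then emerges by factoring out the appropriate antisymmetric prefactor. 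The \textbf{main obstacle} is precisely this combinatorial evaluation: the $N$-th power of a $4k \times 4k$ Pfaffian is unwieldy, and the reorganization of Wick contractions into a single $2k \times 2k$ Pfaffian with the claimed entries is delicate. A cleaner route, likely the one realized in the paper, is to bypass this direct expansion by using the character expansion once more---writing $\mathrm{Pf}^N$ as a sum over partitions of (products of) Schur polynomials via a Pfaffian analogue of the Cauchy identity, exploiting orthogonality against the antisymmetric Gaussian measure to collapse the partition sum, and then resumming via a Pfaffian Cauchy--Binet identity to match the explicit form \eqref{rginpfaff}.
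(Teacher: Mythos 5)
Your outline has the right skeleton, but it misidentifies the key structural input and omits the machinery that actually makes \eqref{rgingen} work. First, the GinOE Schur moments do not exhibit a ``Pfaffian coupling structure between partitions'': since the product $\prod_j\det(\Omega G-z_jI_N)$ involves only $G$ (no $G^\dagger$), the expansion produces a \emph{single} Schur function per partition, and the relevant average is $\mathbb{E}[s_\mu(\Omega G)]=\delta_{\mu=2\eta}\,h_\eta(2)P_\eta^{(2)}(\Omega\Omega^T)$ (equation \eqref{GinOE_schur_average}): it is supported on partitions with all parts even and is valued in a Jack/zonal polynomial at parameter $\alpha=2$, not in a Pfaffian of pair couplings. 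Second, and more seriously, ``recognizing the weighted partition sum as the series expansion of a Gaussian integral'' cannot be done with the ordinary Schur-function dual Cauchy identity, because the target \eqref{rgingen} contains $\det(\cdots)^{1/2}$; a square root of a determinant is the hallmark of the \emph{Jack} dual Cauchy identity at $\alpha=\tfrac12$, equation \eqref{dual_Cauchy_identity_Jack}. The paper's proof therefore converts the summand $s_{\eta^2}(Z^{-1})h_{\eta'}(2)$ into a Gaussian integral of $P_\eta^{(1/2)}(Z^{-1}XZ^{-1}X^\dagger)$ over $\mathcal{A}_{2k}(\mathbb{C})$ (Lemma \ref{zonal_func_gaussian_integral_antisymmetric}, which rests on zonal splitting and Kaneko--Kadell integrals), and only then resums with \eqref{dual_Cauchy_identity_Jack}. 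Without the $\alpha=\tfrac12,2$ Jack apparatus your step from the partition sum to the antisymmetric Gaussian integral has no mechanism.

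On \eqref{rginpfaff}: your primary plan (Wick expansion of $\mathrm{Pf}(\cdot)^N$ plus a de Bruijn identity) is the obstacle you yourself flag, and the paper does not attempt it. Instead \eqref{rginpfaff} is derived \emph{independently} of the Gaussian integral \eqref{rgindual}, directly from the same character expansion: the support condition $\mu'$ even means the sum runs over repeated partitions, and Lemma \ref{Schur_sum_repeated_partitions} (built on the Ishikawa--Wakayama Pfaffian Cauchy--Binet identity, Lemma \ref{Cauchy_Binet_Pfaffian_identity}) collapses that sum into the $2k\times2k$ Pfaffian with entries $(z_j-z_i)\sum_{l=0}^{N+2k-2}(z_iz_j)^l/l!$. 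Your ``cleaner route'' paragraph correctly guesses this in spirit, but as stated it is a placeholder rather than a proof; the specific summation lemma over repeated partitions is the missing ingredient. The block-Pfaffian step from \eqref{rgingen} to \eqref{rgindual} in your proposal is correct and matches the paper.
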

\begin{proof}
See Sections \ref{sec:closed_forms_real} and \ref{sec:dual_integrals_real}.
\end{proof}
Expression \eqref{rgindual} recovers a result due to Tribe and Zaboronski \cite{TZ14,TZ23} derived using Grassmann calculus, including an odd number of products in \eqref{rmult}. See again \cite{LZ22} where the diffusion method remains applicable to the real Gaussian measure. However, the exact Pfaffian formula \eqref{rginpfaff} does not appear in \cite{TZ14,TZ23,LZ22}, except in the limiting case $N \to \infty$. Derivation of \eqref{rginpfaff} is possible following along the same lines as the complex case in \cite{AV03}, by using the explicit knowledge of the joint probability density function of eigenvalues, see \cite{A08,KG10,AKP10,P22}. Analogously to the complex case, our approach does not make use of the joint probability density function of eigenvalues of $G$ and again reveals the connection between the dual integral \eqref{rgindual} and its Pfaffian evaluation \eqref{rginpfaff}. Asymptotic expansion as $N \to \infty$ for the real Ginibre averages considered here has recently attracted attention, for example in the works \cite{P22, A20, TZ14, TZ23, WW21}. In Section \ref{sec:asymptotics} we discuss the known asymptotic results for the real Ginibre ensemble and show how Theorem \ref{th:ginoe} allows one to produce some new types of asymptotic expansions.

To conclude our statement of results, we consider analogous results for truncations. Let $T$ be an $M \times M$ truncation of a Haar distributed orthogonal matrix from $\mathrm{O}(N)$, and define
\begin{equation}
R^{\mathrm{TOE}}_{N,M}(\bm z) = \mathbb{E}\left(\prod_{j=1}^{2k}\det(\Omega T-z_{j}I_{M})\right) \label{toemult}
\end{equation}
where $\bm{z} = (z_1,\ldots,z_{2k}) \in \mathbb{C}^{2k}$. Now introduce the following dual probability measure on $\mathcal{A}_{2k}(\mathbb{C})$,
\begin{equation}
\mu_{k}(dX) = \frac{1}{S^{(1)}_{k}}\,\det(I_{2k}+XX^{\dagger})^{-\frac{N}{2}+1-2k}\,dX \label{toex}
\end{equation}
where $S^{(1)}_{k}$ is a normalization constant and $dX$ as in \eqref{jpdfx}.
\begin{theorem}
\label{th:toe}
Let $T$ belong to the truncated orthogonal ensemble and consider the multi-point correlator \eqref{toemult}, setting $Z = \mathrm{diag}(\bm z)$. Then
\begin{equation}
R^{\mathrm{TOE}}_{N,M}(\mathbf{z};\Omega) = \det(Z)^{M}\,\mathbb{E}_{X}\left[\mathrm{det}(I_{2kM}+\Omega\Omega^{\mathrm{T}}\otimes Z^{-1}XZ^{-1}X^{\dagger})^{\frac{1}{2}}\right]
\end{equation}
where $\mathbb{E}_{X}$ denotes expectation with respect to $2k \times 2k$ complex anti-symmetric matrices $X$ distributed according to \eqref{toex}. If $\Omega = I_N$, we have
	\begin{align}
		R^{\mathrm{TOE}}_{N,M}(\bm z) &= \frac{1}{S^{(1)}_{k}}
		\int_{\mathcal{A}_{2k}(\mathbb{C})} dX \det(I_{2k}+XX^{\dagger})^{-\frac{N}{2}+1-2k}\,\mathrm{Pf}\left (
	\begin{smallmatrix}
			X && Z\\
			-Z && X^{\dagger}
		\end{smallmatrix}
	\right )^{M} \label{toedual}\\
		&=
		D^{(1)}_{k}\frac{\mathrm{Pf} \bigg\{(z_{j}-z_{i}) \sum_{l=0}^{M+2k- 2} \frac{(N-M+l)!}{l!}\,( z_i z_j )^{l}\bigg\}_{1 \leq i < j \leq 2k}}{\Delta(\bm z)} \label{toepfaff}
	\end{align}
where
\begin{equation}
D^{(1)}_{k} = \prod_{j=1}^{k}\frac{(M+2j-2)!}{(N+2j-2)!} \label{dk1}
\end{equation} 
and
\begin{equation}
\begin{split}
S^{(1)}_{k} &= \int_{\mathcal{A}_{2k}(\mathbb{C})}dX\,\det(I_{2k}+XX^{\dagger})^{-\frac{N}{2}+1-2k} = \pi^{k(2k-1)}\prod_{j=1}^{k}\frac{(N+2j-2)!}{(N+2k+2j-3)!}.
\end{split}
\end{equation}
\end{theorem}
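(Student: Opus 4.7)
The plan is to parallel the proof of Theorem \ref{th:ginoe} for the real Ginibre ensemble, substituting the Gaussian weight by the Jacobi-type measure that arises from truncations of Haar orthogonal matrices. The key inputs are a Schur function expansion of the product of characteristic polynomials, an explicit evaluation of Schur-function averages over the truncated orthogonal ensemble, and a de Bruijn Pfaffian identity that resums the resulting sum over partitions.

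First, I would expand
\begin{equation*}
\prod_{j=1}^{2k}\det(\Omega T - z_{j} I_{M}) = \sum_{\eta:\,l(\eta) \leq 2k} c_{\eta}(\bm z)\,s_{\eta}(\Omega T)
\end{equation*}
using a dual Cauchy-type identity in Schur functions, exactly as in the GinOE case (this step uses only algebraic properties of $\det$, and is identical across ensembles). Next, I would take the Haar expectation of $s_{\eta}(\Omega T)$ in the case $\Omega = I_{M}$, reducing the problem to the evaluation of $\mathbb{E}[s_{\eta}(T)]$ against the truncated orthogonal measure. Using the Jacobi-type joint density of the singular values of $T$ established in \cite{KSZ10, KL21}, these Schur averages can be computed as products of Gamma-function ratios; the support is restricted to partitions with $l(\eta) \leq 2k$ and $\eta_{1} \leq M$, which together produce the truncation $l \leq M+2k-2$ and the prefactor $D^{(1)}_{k}$ appearing in \eqref{toepfaff}.

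I would then apply de Bruijn's identity to convert the sum over partitions of length $\leq 2k$ into a Pfaffian in the variables $z_{1},\ldots,z_{2k}$, exactly as in the GinOE case. The factor $(z_{j}-z_{i})$ inside the Pfaffian entries arises from the antisymmetrisation intrinsic to de Bruijn's identity. For the dual integral \eqref{toedual}, the same character expansion applied to the $2k \times 2k$ integrand with the measure \eqref{toex} produces an identical sum over partitions, because the Schur averages against $\det(I_{2k}+XX^{\dagger})^{-N/2+1-2k}$ on $\mathcal{A}_{2k}(\mathbb{C})$ match those from the previous step (the orthogonal analogue of the unitary duality of \cite{FK07}). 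This yields the desired identity between \eqref{toedual} and \eqref{toepfaff}, and the general source-$\Omega$ formula follows by retaining $\Omega$ through the same expansion and invoking Kronecker-product identities. Finally, $S^{(1)}_{k}$ is a Selberg-type Pfaffian integral evaluated by standard means.

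The main obstacle I anticipate is the precise evaluation of $\mathbb{E}[s_{\eta}(T)]$ in the truncated orthogonal setting. Unlike the unitary case, Schur polynomials are not the irreducible characters of $\mathrm{O}(N)$, so one cannot invoke Schur orthogonality directly; instead the average must be computed either via a Littlewood-type branching of $s_{\eta}$ into $\mathrm{O}(N)$-characters, or by direct integration against the Jacobi JPDF of $TT^{\mathrm{T}}$. Tracking signs and normalisations carefully — so that the emerging sum assembles into the de Bruijn Pfaffian with exactly the factor $(z_{j}-z_{i})$, the correct summation cap, and the constant $D^{(1)}_{k}$ — is the central technical step; once it is in hand, the remainder of the argument follows the same blueprint as the GinOE case in Sections \ref{sec:closed_forms_real} and \ref{sec:dual_integrals_real}.
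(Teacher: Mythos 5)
Your outline tracks the paper's strategy at a high level (character expansion, Schur average over the truncated ensemble, Pfaffian resummation, dual integral), but two of the steps you leave as "to be tracked carefully" are precisely where the content of the proof lives, and your description of them is not correct as stated. First, the Schur average over the truncated orthogonal ensemble is not merely "a product of Gamma-function ratios": the essential structural fact, taken from \cite{SS22} (cf.\ \eqref{TOE_schur_average}), is that $\mathbb{E}[s_\mu(\Omega T)]$ \emph{vanishes unless $\mu$ is a doubled partition} $\mu=2\eta$, in which case it equals $\delta_{\mu=2\eta}\,2^{-|\eta|}h_\eta(2)[N/2]^{(2)}_\eta{}^{-1}P^{(2)}_\eta(\Omega\Omega^{\mathrm{T}})$. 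Since the average is applied to the conjugate $\mu'$, the surviving partitions are exactly the \emph{repeated} ones ($\mu'$ even). This selection rule is what makes a sum over partitions of length $\leq 2k$ collapse onto $k$ free parts and assemble into a $2k\times 2k$ Pfaffian. Relatedly, the factor $(z_j-z_i)$ does not come from "antisymmetrisation intrinsic to de Bruijn's identity": in the paper's Lemma \ref{Schur_sum_repeated_partitions} it arises from the adjacency constraint $l_{2j-1}=l_{2j}+1$ forced by the repeated-parts condition, which turns the Pfaffian kernel into a tridiagonal matrix $A_{lp}=f(l-1)\delta_{l+1,p}$; feeding this into the Ishikawa--Wakayama minor-summation identity (Lemma \ref{Cauchy_Binet_Pfaffian_identity}) produces $(z_j-z_i)\sum_l f(l)(z_iz_j)^l$. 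Without the evenness constraint your de Bruijn step has nothing to act on, so this is a genuine gap rather than a bookkeeping issue.

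Second, for the dual integral \eqref{toedual} and the general-$\Omega$ statement, the matching of coefficients is not a "Schur average" over the measure \eqref{toex} and cannot be finished by Kronecker-product identities. The paper's route is through Jack polynomials: the coefficient $s_{\eta^2}(\bm z^{-1})h_{\eta'}(2)\,(-1)^{|\eta|}[-N]^{(1/2)}_\eta{}^{-1}$ (after the transposition identity \eqref{gen_hypergeom_coef_transposition_property}) is realised as an integral of $P^{(1/2)}_\eta(Z^{-1}XZ^{-1}X^\dagger)$ against $\det(I_{2k}+XX^\dagger)^{-N/2+1-2k}$ on $\mathcal{A}_{2k}(\mathbb{C})$ (Lemma \ref{zonal_func_jacobi_dual_integral_antisymmetric}, which rests on Kaneko--Kadell-type integrals), and the resummation uses the generalized dual Cauchy identity \eqref{dual_Cauchy_identity_Jack} for the pair $(P^{(1/2)}_\eta, P^{(2)}_{\eta'})$ — this is what produces the square root $\det(\cdots)^{1/2}$ in the duality formula, which an ordinary Schur/determinant resummation cannot yield. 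Your identification of $S^{(1)}_k$ as a Selberg-type integral is correct.
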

\begin{proof}
See Sections \ref{sec:toe} and \ref{sec:dual_integrals_real}.
\end{proof}
We remark that the results given for the Ginibre ensembles can be viewed as limiting cases of the results for truncations, which are more general in the following sense. Rescaling $X \to X/\sqrt{N}$ and rescaling the characteristic variables $Z \to Z/\sqrt{N}$, $W \to W/\sqrt{N}$, taking the limit $N \to \infty$ with fixed $M$ recovers Theorems \ref{th:ginue} and \ref{th:ginoe} as limiting cases of Theorems \ref{th:tue} and \ref{th:toe} respectively.

The structure of this paper is the following. We begin in Section \ref{sec:complex} with the complex ensembles, giving the proofs of Theorems \ref{th:ginue}, \ref{th:probrep} and \ref{th:tue}. This allows us to recall several key notions of character expansion techniques. In Section \ref{sec:closed_forms_real} we give the proofs for the Pfaffian expressions of the real correlators, arriving at equations \eqref{rginpfaff} and \eqref{toepfaff}. Then in Section \ref{sec:dual_integrals_real} we complete the full proof of Theorems \ref{th:ginoe} and \ref{th:toe}. In Section \ref{sec:quat} we present analogous results for quaternionic ensembles. Finally, on the basis of these exact results, in Section \ref{sec:asymptotics} we study the asymptotics $N \to \infty$ for the real Ginibre ensemble.

\section*{Acknowledgements}
Both authors are grateful for financial support from the Royal Society, grant URF\textbackslash R1\textbackslash 180707. We thank an anonymous referee for their constructive comments and feedback.

\section{Character expansion technique for complex ensembles}
\label{sec:complex}
The goal of this section is to evaluate multi-point correlators in the complex Ginibre ensemble and truncated unitary ensemble. We begin by recalling some standard facts about partitions. The book by Macdonald \cite{Macdonald} provides excellent background. 
\subsection{Partitions and Schur function expansions}
A partition $\lambda$ is a weakly decreasing sequence $\lambda = ( \lambda_1, \lambda_2,\ldots, \lambda_N,0,0,\ldots )$
with finitely many non-zero positive integer terms $\lambda_1 \geq \lambda_{2} \geq \ldots$, where each term in the sequence is referred to as a \textit{part}. We only work with the non-zero parts, so one may write $\lambda = ( \lambda_1, \lambda_2,\ldots, \lambda_N)$. The number of non-zero terms in such a sequence is called the \textit{length} of the partition and is denoted as $l(\lambda) = N$ if $\lambda_N > 0$. We define the \textit{weight} associated with the partition $\lambda$ as the sum of its parts, i.e.\ $|\lambda| = \sum_{i=1}^{N} \lambda_i$. 

Partitions are represented by their \textit{Young diagram}. We adopt the convention of drawing boxes top-down and placing $\lambda_i$ boxes left to right for each $j=1,\ldots, N$. The \textit{conjugate} of $\lambda$, denoted $\lambda'$, is the partition obtained by transposing the Young diagram: row $j$ of $\lambda'$ is built from column $j$ of $\lambda$. For example, let $\lambda = ( 4, 2,1 )$. Then the corresponding Young diagram can be seen below in Figure \ref{fig_young_(4,2,1)} with its conjugate in Figure \ref{fig_young_(3,2,1,1)}.

\begin{figure}[H]
	\centering
	\begin{minipage}[b]{.49\linewidth}
		\centering
		\begin{tabular}{@{}c@{}}
			\begin{ytableau}
			$$&&&\\ & \\{} 
		\end{ytableau}
		\end{tabular}
		\caption{$\lambda = ( 4, 2, 1 )$}
		\label{fig_young_(4,2,1)}
	\end{minipage}
	\begin{minipage}[b]{.49\linewidth}
		\centering
		\begin{tabular}{@{}c@{}}
		\begin{ytableau}
			$$&&\\ & \\{} \\ {}
		\end{ytableau}		
		\end{tabular}
		\caption{$\lambda' = ( 3 , 2, 1, 1 )$}
	\label{fig_young_(3,2,1,1)}
	\end{minipage}
	\captionsetup{labelformat=empty}
 \end{figure}

It will be useful to introduce the following coefficient
\begin{equation}
d'_{\lambda} = \prod_{(i,j) \in \lambda}\left(a(i,j)+l(i,j)+1\right) \label{dpdef}
\end{equation}
where $a(i,j)$ is the number of boxes to the right of $(i,j)$, known as \textit{arm length}, while $l(i,j)$ is the number of boxes below $(i,j)$, known as \textit{leg length}. The term $a(i,j)+b(i,j)+1$ in \eqref{dpdef} is the hook length associated to box $(i,j)$ and hence $d'_{\lambda}$ is the product of all hook lengths in the diagram. Note that conjugating a partition has the effect of interchanging the arm and leg lengths which leaves \eqref{dpdef} unchanged, implying that
\begin{equation}
d'_{\lambda'} = d'_{\lambda}. \label{beta2duality}
\end{equation}

A function $f$ of $N$ variables is said to be symmetric if for each $\sigma \in S_N$, the group of permutations on $N$ symbols, we have $f(x_1,\ldots,x_N) = f(x_{\sigma(1)},\ldots,x_{\sigma(N)})$. In general, if $X$ is a matrix with eigenvalues $x_{1},\ldots,x_{N}$, we define $f(X) = f(x_1,\ldots,x_N)$. An important basis in the space of symmetric polynomials is given by the Schur functions
\begin{align} \label{schur_function_determinant_representation}
	s_\lambda(X) = \frac{\det\bigg\{x_i^{\lambda_j + N - j}\bigg\}_{i,j=1}^N}{\Delta(\bm{x})}.
\end{align}
If $l(\lambda) \leq N$ the definition above continues to hold provided the partition is padded out with an appropriate number of $0$s. The Schur functions are characters of irreducible representations of the unitary group, see e.g.\ \cite{B00, M19book}. In that context \eqref{schur_function_determinant_representation} is a particular case of the Weyl character formula and satisfies various orthogonality relations with respect to integration over the unitary group. In particular, the coefficient $s_{\lambda}(I_N)$ appears in these formulas and is closely related to the dimension of an irreducible representation with signature $\lambda$. This quantity is related to \eqref{dpdef} via
\begin{equation}
d'_{\lambda} = \frac{[N]_{\lambda}}{s_{\lambda}(I_{N})} \label{beta2hookform}
\end{equation}
where the hypergeometric coefficient is defined as
\begin{equation}
[u]_{\lambda} = \prod_{j=1}^{N}\frac{\Gamma(\lambda_{j}+u-j+1)}{\Gamma(u-j+1)}. \label{beta2coeff}
\end{equation}
Furthermore, Schur functions are a convenient basis in which to expand the correlators \eqref{rmult} and \eqref{cmult}, due to the dual Cauchy identity
\begin{align} \label{dual_Cauchy_identity}
\prod_{i=1}^{k}\prod_{j=1}^{N}(1 +y_{j}x_{i}) = \sum_{\eta} s_{\eta}(X) s_{\eta'}(Y).
\end{align}
Note that both sides of the above are polynomial expressions in $y_{j}'s$ and $x_{j}'s$, as expected. This is because the summation in \eqref{dual_Cauchy_identity} only extends over partitions satisfying $l(\eta) \leq k$ and $l(\eta') \leq N$, i.e.\ only the partitions $\eta$ whose Young diagrams belong to the rectangle $R = \{1,\ldots,k\} \times \{1,\ldots,N\}$ are relevant. Given a partition $\eta$ in $R$, we define the \textit{complement partition} $\tilde{\eta}$ to be the complement of $\eta$ in $R$, with its parts reversed so that it is weakly decreasing, see Figure \ref{fig_young_complement} below. Then $\tilde{\eta}$ has parts 
\begin{equation}
\tilde{\eta}_{j} := N-\eta_{k-j+1}, \qquad j=1,\ldots,k. \label{comp-parts}
\end{equation}
\begin{figure}[h]
	\centering
	\begin{minipage}{.49\linewidth}
        \centering
        \begin{ytableau}
			 $$&&&& *(black!20)  \\  && &*(black!20) & *(black!20) \\ && *(black!20) & *(black!20) & *(black!20)\\ $$& & *(black!20) & *(black!20) & *(black!20) {} 
		\end{ytableau}
		\captionsetup{labelformat=empty}
		\caption{$\eta$}
%		\label{fig_young_}
    \end{minipage}
    \begin{minipage}{.49\linewidth}
        \centering
        \begin{ytableau}
			 $$&&& *(black!20) & *(black!20)  \\ &&& *(black!20) & *(black!20)  \\ && *(black!20) & *(black!20) & *(black!20)\\ $$&*(black!20) & *(black!20) & *(black!20) & *(black!20) {} 
		\end{ytableau}
		\captionsetup{labelformat=empty}
		\caption{$\tilde{\eta}$}
%		\label{fig_young_}
     \end{minipage}
     \caption{$N=5$, $k=4$, $\eta = ( 4, 3, 2, 2 )$ and its complement $\tilde{\eta} = ( 3, 3, 2, 1 )$.}
     \label{fig_young_complement}
\end{figure}

The complement partition arises due to the following property of Schur functions:
\begin{equation}
s_{\eta}(X^{-1}) = \det(X)^{-N}s_{\tilde{\eta}}(X). \label{schurinv}
\end{equation}

In order to sum the expansions that occur in practice, the following plays a central role.
\begin{lemma}[Cauchy-Binet identity] \label{Cauchy_Binet_identity}
	Let $A$ and $B$ be matrices of size $M \times N$ and $N \times M$ respectively. Then
	\begin{equation}
	\begin{split}
		\det ( AB )
		&=
		\det \bigg\{ \sum_{l=1}^{N} A_{il} B_{lj} \bigg\}_{i,j=1}^{M}
		\\
		&=
		\sum_{1 \leq l_1 < l_2 < \ldots < l_M \leq N}
		\det ( \mathrm{cols}_{l_1,\ldots,l_M} (A) ) \det ( \mathrm{cols}_{l_1,\ldots,l_M} (B) )
	\end{split}
	\end{equation}
	where $l_i \in \{ 1, \ldots, N \}$ and $\mathrm{cols}_{l_1,\ldots,l_M} (A)$ is the matrix of columns of $A$ labelled by $l_1,\ldots,l_M$.
\end{lemma}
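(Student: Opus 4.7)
The plan is to prove Lemma~\ref{Cauchy_Binet_identity} by the standard strategy of exploiting the multilinear and alternating properties of the determinant. The starting point is $(AB)_{ij} = \sum_{l=1}^{N} A_{il} B_{lj}$, which expresses the $j$-th column of $AB$ as $\sum_{l=1}^{N} B_{lj}\,A_{\cdot,l}$, a linear combination of the $N$ columns of $A$. Applying multilinearity of $\det$ in each of its $M$ columns produces the expansion
\[
\det(AB) = \sum_{l_1,\ldots,l_M=1}^{N} B_{l_1,1}\cdots B_{l_M,M}\,\det\{A_{i,l_j}\}_{i,j=1}^{M},
\]
where the inner $M \times M$ determinant is formed from the selected columns of $A$.

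Next, I would discard the tuples containing a repeated index: by the alternating property of the determinant every such term vanishes, so the surviving contributions come only from injections $(l_1,\ldots,l_M)$ of $\{1,\ldots,M\}$ into $\{1,\ldots,N\}$. Each injection is uniquely encoded by its increasing rearrangement $1 \leq l_1 < \cdots < l_M \leq N$ together with a permutation $\sigma \in S_M$ recording the original order. Sorting the columns of the $A$-determinant into increasing order introduces a sign $\mathrm{sgn}(\sigma)$, so that the contribution of each increasing index set $\{l_1,\ldots,l_M\}$ collects as
\[
\det\{A_{i,l_j}\}_{i,j=1}^{M}\,\sum_{\sigma \in S_M}\mathrm{sgn}(\sigma)\,B_{l_{\sigma(1)},1}\cdots B_{l_{\sigma(M)},M}.
\]

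The inner alternating sum is by definition $\det\{B_{l_i,j}\}_{i,j=1}^{M}$, namely the $M \times M$ minor of $B$ on rows $l_1,\ldots,l_M$ (this is what $\mathrm{cols}_{l_1,\ldots,l_M}(B)$ denotes in the statement, since $B$ is $N \times M$). Summing over the $\binom{N}{M}$ increasing choices of indices then gives exactly the identity claimed.

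I do not anticipate any substantive obstacle: Cauchy--Binet is a classical result and the argument above is its textbook proof. The only delicate point is the sign bookkeeping when sorting the index tuple into increasing order, but this sign is precisely what re-assembles the $B$-factor into a determinant. The same identity could alternatively be deduced from the Laplace expansion by complementary minors, or from the exterior-algebra statement $\wedge^M(AB) = (\wedge^M A)(\wedge^M B)$ applied to the top exterior power; the multilinear expansion, however, is the most direct route and matches well with the subsequent use of the lemma in Section~\ref{sec:complex}.
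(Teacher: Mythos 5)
Your proof is correct and is the standard multilinearity argument for the Cauchy--Binet identity; the paper states Lemma \ref{Cauchy_Binet_identity} as a classical fact without providing any proof, so there is nothing in the paper to compare against and your argument simply supplies the omitted justification. You also rightly flag the only notational subtlety: since $B$ is $N \times M$, the symbol $\mathrm{cols}_{l_1,\ldots,l_M}(B)$ must be read as selecting the rows $l_1,\ldots,l_M$ of $B$ (equivalently the columns of $B^{\mathrm{T}}$), which is exactly the minor your alternating inner sum produces and is consistent with how the lemma is used in \eqref{cauchybinetgin2}.
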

\subsection{Complex Ginibre Ensemble}
\label{sec:cgin}
The goal of this subsection is to prove Theorem \ref{th:ginue}. The starting point will be to expand the determinants in \eqref{cmult} using the dual Cauchy identity \eqref{dual_Cauchy_identity}. The subsequent averaging will then be done with the help of the orthogonality relation
\begin{equation}
\mathbb{E}(s_{\mu}(\Omega G)s_{\lambda}(G^{\dagger}\Sigma)) = \delta_{\mu\lambda}d'_{\lambda}s_{\lambda}(\Omega \Sigma) \label{beta2orthog}.
\end{equation}
This is proved in \cite{FR09} by applying the left and right invariance of $G$ under unitary transformations, allowing one to exploit the better known orthogonality of Schur functions over $U(N)$. One of our goals will be to write \eqref{cmult} as a $k \times k$ matrix integral. For this purpose, let $X$ be a standard complex Ginibre random matrix of size $k \times k$. The following splitting relation will be used
\begin{equation}
s_\eta(A) s_\eta(B)d'_{\eta} = \mathbb{E}_{X}(s_{\eta}(AX^{\dagger}BX)) \label{beta2split}
\end{equation}
where $A$ and $B$ are deterministic $k \times k$ complex matrices. Identities of this type were discussed in \cite{HSS92}, see Lemma \ref{lem:ginuesplit} for a proof.

% by exploiting the fact that the complex Ginibre matrix $G$ is invariant under left and right multiplication by unitary matrices. 
%The latter can be evaluated using the Kaneko-Kadell integral giving the coefficient $\mathbb{E}(s_{\eta}(GG^{\dagger})) = [N]_{\eta}$. The coefficient $[N]_{\eta}$ is
%\begin{equation}
%[N]_{\eta} = \prod_{j=1}^{k}\frac{\Gamma(j+\eta_{j})}{\Gamma(j)}
%\end{equation}
%and it can be shown that \eqref{dpdef} can be written in the form
%\begin{equation}
%d'_{\eta} = \frac{[N]_{\eta}}{s_{\eta}(I_N)} \label{dpexpr}
%\end{equation}
%for any partition with $l(\eta) \leq N$.

\begin{proof}[Proof of Theorem \ref{th:ginue}]
By the dual Cauchy identity \eqref{dual_Cauchy_identity} we expand the multi-point average in \eqref{cmult} as
\begin{equation}
\begin{split}
&\det(ZW)^{N}
	\sum_{ \substack{ \eta, l(\eta) \leq k, \eta_1 \leq N \\ \mu, l(\mu) \leq k, \mu_1 \leq N } }
	s_\eta(- \bm z^{-1}) s_\mu(- \bm w^{-1})
	\mathbb{E}
	\left [ 
	s_{\eta'}(\Omega G)  s_{\mu'}(\Sigma G^\dag)
	\right ]\\
		&=
	\det(ZW)^{N}
	\sum_{\eta, l(\eta) \leq k, \eta_1 \leq N}
	s_\eta(\bm z^{-1}) s_\eta(\bm w^{-1})
	d'_{\eta'}s_{\eta'}(\Omega \Sigma) \label{GinUE_schur_expansion}
	\end{split}
\end{equation}
where we applied the orthogonality relation \eqref{beta2orthog}. Next we make use of $d'_{\eta'} = d'_{\eta}$ and apply the splitting relation \eqref{beta2split}, so that the sum in expansion \eqref{GinUE_schur_expansion} becomes
\begin{align}
	&\sum_{\eta, l(\eta) \leq k, \eta_1 \leq N}
	\mathbb{E}_{X}\left [ s_\eta(Z^{-1} X W^{-1} X^\dag) \right ]
	s_{\eta'}(\Omega \Sigma)
	\nonumber
	\\
	&=
	\mathbb{E}_{X}\left [
	\sum_{\eta, l(\eta) \leq k, \eta_1 \leq N}
	s_\eta(Z^{-1} X W^{-1} X^\dag)s_{\eta'}(\Omega \Sigma)  \right ]. \label{finalbeta2}
\end{align}
Now the dual integral \eqref{cgin} follows directly from evaluating the sum in \eqref{finalbeta2} again using the dual Cauchy identity \eqref{dual_Cauchy_identity}. To obtain representation \eqref{cgindual}, let $\Omega = \Sigma = I_N$. Then \eqref{cgin} can be simplified using the formula for the determinant of a block matrix. This gives
\begin{equation}
\begin{split}
&\det ( Z W )^{N}\det\left(I_{kN} +  \Omega \Sigma \otimes Z^{-1} X W^{-1} X^\dag\right)\\
&= \det\begin{pmatrix}
		Z & X\\
		-X^\dag & W
	\end{pmatrix}^{N},
	\end{split} \label{detblockgin}
\end{equation}
which completes the proof of \eqref{cgindual}.

Finally we prove the determinantal formula \eqref{cgindet}. Starting from expression \eqref{GinUE_schur_expansion} with $\Omega=\Sigma=I_{N}$, from \eqref{beta2hookform} we have $d'_{\eta'}s_{\eta'}(I_{N}) = [N]_{\eta'}$. Then we use Lemma \ref{lem:coeff}, equation \eqref{coeff_r} with $u=N$ and $\alpha=1$ to express $[N]_{\eta'}$ as
\begin{equation}
[N]_{\eta'} = \prod_{j=1}^{k}\frac{(N+j-1)!}{(\tilde{\eta}_{j}+k-j)!} \label{netaprimegin}
\end{equation}
where $\tilde{\eta}$ is the complement partition of $\eta$, recall definition \eqref{comp-parts}. This writes expansion \eqref{GinUE_schur_expansion} in the form
\begin{equation}
	\prod_{j=1}^{k}(N +j-1)!
	\sum_{\eta, l(\eta) \leq k, \eta_1 \leq N}
	s_{\tilde{\eta}}(\bm z) s_{\tilde{\eta}}(\bm w)
	\prod^k_{j=1}
	\frac{1}{(\tilde{\eta}_{j}+k-j)!} \label{beta2refc}
\end{equation}
where we used \eqref{schurinv}. In \eqref{beta2refc} we can replace the complement partition $\tilde{\eta}$ with $\eta$ as this corresponds to a rearrangement of terms in the sum. To compute it, introduce indices
\begin{equation*}
l_{j} = \eta_{j}+k-j, \quad j=1,\ldots,k.
\end{equation*}
Then the weakly decreasing property of $\eta$ holds if and only if $l_{j+1} < l_{j}$ for all $j=1,\ldots,k-1$, and furthermore $\eta_{1} \leq N$ if and only if $l_{1} \leq N+k-1$. Using definition \eqref{schur_function_determinant_representation} of Schur functions, the sum in \eqref{beta2refc} becomes
\begin{equation}
\begin{split}
\sum_{0 \leq l_{k} < l_{k-1} < \ldots < l_{1} \leq N+k-1}&\frac{\det\bigg\{\frac{z_{i}^{l_{j}}}{(l_{j})!}\bigg\}_{i,j=1}^{k}\det\bigg\{w_{i}^{l_{j}}\bigg\}_{i,j=1}^{k}}{\Delta(\bm z)\Delta(\bm w)}\\
&= \frac{\det\bigg\{\sum_{l=0}^{N+k-1}\frac{(z_{i}w_{j})^{l}}{l!}\bigg\}_{i,j=1}^{k}}{\Delta(\bm z)\Delta(\bm w)}, 
\end{split}\label{cauchybinetgin2}
\end{equation}
which is a consequence of the Cauchy-Binet identity given in Lemma \ref{Cauchy_Binet_identity}. Inserting this into \eqref{beta2refc} establishes the determinantal form of the correlator in \eqref{cgindet} and completes the proof.
\end{proof}
\begin{corollary}
\label{cor:ginAB}
	Let $A,B$ be invertible $k \times k$ matrices with eigenvalues $\{a_j\}_{j=1}^k$ and $\{b_j\}_{j=1}^k$, respectively. Then
	\begin{align}
		\frac{\det(AB)^N}{\pi^{k^{2}}}
		\int_{\mathbb{C}^{k \times k}} dX e^{- \Tr XX^\dag }
		&\det \left ( I_k + A^{-1} X^\dag B^{-1} X \right )^N
		\nonumber
		\\
		&=
		\prod_{j=1}^{k} (N+j-1)!
		\frac{\det \bigg\{ \sum_{l=0}^{N+k-1} \frac{( a_i b_j)^{l} }{l!} \bigg\}_{i,j=1}^k}{\Delta(\bm a)\Delta(\bm b)}. \label{gin-cor}
	\end{align}
\end{corollary}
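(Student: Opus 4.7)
The plan is to reduce \eqref{gin-cor} to the diagonal case $A = \diag(\bm a),\, B = \diag(\bm b)$, which is essentially Theorem \ref{th:ginue}, after first showing that the matrix integral on the left-hand side depends on $A$ and $B$ only through their eigenvalues $\bm a$ and $\bm b$. This eigenvalue invariance is not obvious from the integrand itself (since the Gaussian weight is only unitarily, not conjugation, invariant), so it must be extracted by a Schur expansion.

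Concretely, I would first apply the dual Cauchy identity \eqref{dual_Cauchy_identity} with its ``$X$-variable'' taken to be the $k \times k$ matrix $A^{-1} X^\dagger B^{-1} X$ and its ``$Y$-variable'' taken to be $I_N$, giving
\begin{equation*}
\det(I_k + A^{-1} X^\dagger B^{-1} X)^N = \sum_{\eta:\, l(\eta) \leq k,\, \eta_1 \leq N} s_\eta(A^{-1} X^\dagger B^{-1} X)\, s_{\eta'}(I_N).
\end{equation*}
The sum is finite, so termwise integration against $\pi^{-k^2} e^{-\Tr X X^\dagger} dX$ is immediate, and the splitting relation \eqref{beta2split} applied with $A \mapsto A^{-1}$, $B \mapsto B^{-1}$ evaluates each summand to $d'_\eta\, s_\eta(A^{-1}) s_\eta(B^{-1}) s_{\eta'}(I_N)$. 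Since each Schur function depends on its matrix argument only through its eigenvalues, the resulting expression depends on $A$ and $B$ only through $\bm a$ and $\bm b$.

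It then suffices to verify \eqref{gin-cor} for diagonal $A = \diag(\bm a),\, B = \diag(\bm b)$. Using the Sylvester-type identity $\det(I_k + A^{-1} X^\dagger B^{-1} X) = \det(I_k + B^{-1} X A^{-1} X^\dagger)$ and the Kronecker product formula $\det(I_{kN} + I_N \otimes M) = \det(I_k + M)^N$, the left-hand side of \eqref{gin-cor} is recognized as the dual integral representation of $R^{\mathrm{GinUE}}_N(\bm b, \bm a)$ from \eqref{cgin}--\eqref{cgindual} of Theorem \ref{th:ginue}. The closed form \eqref{cgindet} then delivers the right-hand side of \eqref{gin-cor}, upon noting that $\bigl\{\sum_l (b_i a_j)^l/l!\bigr\}_{i,j}$ is the transpose of $\bigl\{\sum_l(a_i b_j)^l/l!\bigr\}_{i,j}$ and hence has the same determinant, while $\Delta(\bm a)\Delta(\bm b)$ is symmetric in $\bm a \leftrightarrow \bm b$.

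The main obstacle is conceptual rather than computational: realizing that the Schur expansion is the natural device for isolating the dependence of the non-invariant Gaussian integral on the eigenvalues of $A$ and $B$ alone. Once this is achieved, the reduction to the diagonal case and the appeal to Theorem \ref{th:ginue} are essentially bookkeeping.
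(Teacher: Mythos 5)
Your proposal is correct and follows essentially the same route as the paper: expand $\det(I_k+A^{-1}X^\dagger B^{-1}X)^N$ by the dual Cauchy identity, use the splitting relation \eqref{beta2split} to land on the character expansion \eqref{GinUE_schur_expansion}, and then resum via the complement-partition and Cauchy--Binet steps already carried out in the proof of Theorem \ref{th:ginue}. Your framing of the intermediate expansion as establishing eigenvalue-only dependence (so that one may pass to diagonal $A,B$ and quote \eqref{cgindet} directly) is just a mild reorganization of the paper's ``reverse the steps from \eqref{detblockgin} and apply \eqref{netaprimegin}--\eqref{cauchybinetgin2}.''
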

\begin{proof}
Identifying $Z=A$, $W=B$, we reverse the steps starting from the left-hand side of \eqref{detblockgin}, so that the left-hand side of \eqref{gin-cor} is written in the form of the character expansion in \eqref{GinUE_schur_expansion}. Then apply the steps \eqref{netaprimegin}-\eqref{cauchybinetgin2} to sum the expansion and obtain the right-hand of \eqref{gin-cor}.
\end{proof}
\subsection{Truncated Unitary Ensemble}
\label{sec:ctue}
Let $U$ be an $N \times N$ random unitary matrix sampled uniformly with respect to Haar measure from $U(N)$. We let $T$ denote the upper-left $M \times M$ sub-block of $U$. Our goal here is the explicit computation of the associated multi-point correlator
\begin{equation}
R^{TUE}_{N,M}(\bm z, \bm w;\Omega,\Sigma) = \mathbb{E}\left [
		\prod^{k}_{j=1}
		\det  (\Omega T - z_j I_{M} ) \det (\Sigma T^\dag - w_j I_{M})
		\right ], \label{mptue}
\end{equation}
where the expectation is over the $M \times M$ matrix $T$, and where $\Omega$ and $\Sigma$ are arbitrary deterministic complex matrices.

It turns out that the two important ingredients needed in the Ginibre setting generalize to truncations. As shown in \cite{SS22}, we have the orthogonality relation
	\begin{align}
		\mathbb{E}\left [s_\mu (\Omega T)s_\lambda (T^\dag \Sigma)\right ] 
		= \delta_{\mu \lambda} \frac{d'_{\lambda}}{\left [ N \right ]_{\lambda}} s_{\lambda}(\Omega \Sigma). \label{tueorthog}
	\end{align}
In order to describe the appropriate analogue of \eqref{beta2split} in this context we introduce a $k \times k$ matrix $X$ with complex entries sampled with respect to the joint probability density function proportional to
\begin{equation}
 \det ( I_k + XX^\dag)^{-N-2k} \label{truncxpdf2}.
\end{equation}
Then the following splitting identity holds
\begin{align}
s_{\eta}(A)s_{\eta}(B)\frac{d'_\eta}{(-1)^{|\eta|}[-N]_{\eta}} = \mathbb{E}_{X}(s_\eta(A X B X^\dag)) \label{tuesplit}
\end{align}
where $\mathbb{E}_{X}$ is expectation with respect to \eqref{truncxpdf2}, for the proof see Lemma \ref{lem:tuesplit}. With \eqref{tueorthog} and \eqref{tuesplit} in hand, we can proceed to proving our main result for truncated unitary matrices.

\begin{proof}[Proof of Theorem \ref{th:tue}]
The proof follows a similar pattern to the proof of Theorem \ref{th:ginue} so we just emphasise the main points. Proceeding similarly but now using orthogonality \eqref{tueorthog} we expand \eqref{mptue} as
	\begin{equation}
	\begin{split}
\det(ZW)^{M}\sum_{ \eta, l(\eta) \leq k, \eta_1 \leq M }
		s_{\eta}(\bm z^{-1}) s_{\eta}(\bm w^{-1})
		\frac{d'_{\eta}}{(-1)^{|\eta|}\left [ -N \right ]_{\eta}} s_{\eta'}(\Omega \Sigma) \label{tuestep1}
		\end{split}
	\end{equation}
	where we made use of \eqref{beta2duality} and \eqref{gen_hypergeom_coef_transposition_property}. Now applying integral representation \eqref{tuesplit} in \eqref{tuestep1} results in a sum involving only two Schur functions, which is amenable to the dual Cauchy identity \eqref{dual_Cauchy_identity}. This leads directly to expression \eqref{truncr1}. Setting $\Omega = \Sigma = I_{M}$  in \eqref{truncr1} and using the formula for the determinant of a block matrix immediately yields \eqref{truncr2}. The explicit computation of the normalization constant in \eqref{truncr2} is given in Lemma \ref{lem:threeints}.
	
To obtain the determinantal representation \eqref{truncr3}, we use $d'_{\eta'}s_{\eta'}(I_M) = [M]_{\eta'}$, so that the right-hand side of \eqref{tuestep1} becomes
	\begin{equation}
		\prod_{j=1}^k ( z_j w_j )^M\sum_{ \eta, l(\eta) \leq k, \eta_1 \leq M }
		s_{\eta}(\bm z^{-1}) s_{\eta}(\bm w^{-1})
		\frac{\left [M \right ]_{\eta'}}{\left [N \right ]_{\eta'}}. \label{tuesum2}
		\end{equation}
Using \eqref{coeff_r}, we write expression \eqref{tuesum2} as
	\begin{equation}
		\prod_{j=1}^{k}\frac{(M+j-1)!}{(N+j-1)!}
		\sum_{ \tilde{\eta}, l(\tilde{\eta}) \leq k, \tilde{\eta}_1 \leq M }
		s_{\tilde{\eta}}( \bm z ) s_{\tilde{\eta}}(\bm w)
		\prod^k_{j=1}
		\frac{(\tilde{\eta}_{j}+N - M + k - j)!}{(\tilde{\eta}_{j}+k-j)!}
		. \label{tuesum3}
	\end{equation} 
	Proceeding as in the proof of Theorem \ref{th:ginue} we can express the sum in \eqref{tuesum3} as
\begin{equation}
\sum_{0 \leq l_{k} < l_{k-1} < \ldots < l_{1} \leq M+k-1}\frac{\det\bigg\{f(l_{j})z_{i}^{l_{j}}\bigg\}_{i,j=1}^{k}\det\bigg\{w_{i}^{l_{j}}\bigg\}_{i,j=1}^{k}}{\Delta(\bm z)\Delta(\bm w)} \label{tuesum4}
\end{equation}	
where
	\begin{align}
		f(l)=\frac{(N - M + l)!}{l!}.
	\end{align}
The closed form evaluation of sum \eqref{tuesum4} again follows from the Cauchy-Binet identity (Lemma \ref{Cauchy_Binet_identity}) and leads directly to the determinantal form \eqref{truncr3}.
\end{proof}
We have an analogous result to Corollary \ref{cor:ginAB}. The proof follows the same pattern so we omit the details.
\begin{corollary}
	Let $A,B$ be invertible $k \times k$ matrices with eigenvalues $\{a_j\}_{j=1}^k$ and $\{b_j\}_{j=1}^k$, respectively. Then
	\begin{align}
		\frac{\det ( AB )^M}{S^{(2)}_{k}}
		&\int_{\mathbb{C}^{k \times k}} dX \det ( I_k + XX^\dag)^{-N-2k} 
		\det (I_k + A^{-1}XB^{-1}X^\dag )^M
		\nonumber
		\\
		&=D^{(2)}_{k}\frac{ \det \bigg\{ \sum_{l=0}^{M + k - 1} \frac{(N - M + l)!}{l!} (a_i b_j)^{l} \bigg\}_{i,j=1}^{k} }{\Delta(\bm a) \Delta(\bm b)}
	\end{align}
where $S^{(2)}_{k}$ is given by \eqref{s2k_coeff} and $D^{(2)}_{k}$ is given by \eqref{dk2}.
\end{corollary}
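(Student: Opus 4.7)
The plan is to mirror the proof of Corollary \ref{cor:ginAB}, substituting the TUE splitting identity \eqref{tuesplit} for the Ginibre splitting \eqref{beta2split}. The argument goes through for arbitrary invertible $A, B$ (not just diagonal matrices) because all Schur functions appearing depend only on eigenvalues, and the splitting identity \eqref{tuesplit} is stated for general complex matrices.

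First I would expand the integrand using the dual Cauchy identity \eqref{dual_Cauchy_identity},
\begin{equation*}
\det(I_k + A^{-1} X B^{-1} X^{\dagger})^M = \sum_{\eta,\, l(\eta) \leq k,\, \eta_1 \leq M} s_\eta(A^{-1}XB^{-1}X^\dagger)\, s_{\eta'}(I_M),
\end{equation*}
and insert this into the left-hand side. Each term's $X$-integral is evaluated by applying the splitting identity \eqref{tuesplit} with matrices $A^{-1}, B^{-1}$. Simplifying the resulting prefactor via $d'_\eta = d'_{\eta'}$ and the transposition relation $[-N]_\eta = (-1)^{|\eta|}[N]_{\eta'}$ converts the combination $d'_\eta\, s_{\eta'}(I_M) / ((-1)^{|\eta|}[-N]_\eta)$ into $[M]_{\eta'}/[N]_{\eta'}$, yielding
\begin{equation*}
\text{LHS} = \det(AB)^M \sum_{\eta,\, l(\eta) \leq k,\, \eta_1 \leq M} s_\eta(\bm a^{-1})\, s_\eta(\bm b^{-1})\, \frac{[M]_{\eta'}}{[N]_{\eta'}}.
\end{equation*}
This is precisely the character sum \eqref{tuestep1} from the proof of Theorem \ref{th:tue}, now with $\bm z \mapsto \bm a$, $\bm w \mapsto \bm b$, and $\Omega = \Sigma = I_M$.

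From this point the argument is identical to the steps \eqref{tuestep1}--\eqref{truncr3} of that proof: apply the Schur inversion identity \eqref{schurinv} (which absorbs the $\det(AB)^M$ prefactor and produces $s_{\tilde\eta}(A)\, s_{\tilde\eta}(B)$ in the complement partitions), re-express the hypergeometric ratio in $\tilde\eta$-variables via \eqref{coeff_r}, and invoke the Cauchy--Binet identity (Lemma \ref{Cauchy_Binet_identity}) to collapse the sum into a $k \times k$ determinant, yielding the stated right-hand side with prefactor $D^{(2)}_k$. No step presents a substantive obstacle; the only point worth explicitly noting is that \eqref{tuesplit} and \eqref{schurinv} apply in the generality of arbitrary invertible $A, B$ rather than only diagonal matrices, which holds since both identities involve only symmetric functions of eigenvalues.
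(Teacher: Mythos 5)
Your proposal is correct and follows exactly the route the paper intends: the paper omits the proof, saying only that it "follows the same pattern" as Corollary \ref{cor:ginAB}, namely reversing the steps of the proof of Theorem \ref{th:tue} via the dual Cauchy identity, the splitting identity \eqref{tuesplit}, and then the summation steps \eqref{tuestep1}--\eqref{truncr3}, which is precisely what you spell out. Your explicit check that $d'_{\eta}s_{\eta'}(I_M)/((-1)^{|\eta|}[-N]_{\eta}) = [M]_{\eta'}/[N]_{\eta'}$ and your remark on the validity for non-diagonal $A,B$ are both accurate.
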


%\begin{corollary}[Moments]
%In the particular case $z_1=\ldots=z_{k} = z$ and $w_1=\ldots=w_k = w$ we obtain
%\begin{equation*}
%\mathbb{E}(\det(A\Omega-I_{M}z)^{k}\det(\Sigma A^{\dagger}-I_{M}w)^{k}) = \frac{1}{C}\int_{\mathbb{C}^{k \times k}}dX\, \det(I_{k}+XX^{\dagger})^{-N-2k}\det(I_{k}+XX^{\dagger}\otimes \Sigma\Omega)^{M}
%\end{equation*}
%and this can be evaluated by changing variables to $Z = XX^{\dagger}$ resulting in 
%\end{corollary}

\subsection{Probabilistic interpretation}
In this section we prove Theorem \ref{th:probrep} and give an analogous result for truncated unitary matrices.
\label{sec:prob}
\begin{proof}[Proof of Theorem \ref{th:probrep}]
In the proof of Theorem \ref{th:ginue}, expression \eqref{beta2refc}, it is shown that for $\Omega = \Sigma = I_{N}$, we have
\begin{equation}
	R^{\mathrm{GinUE}}_{N}(\bm z, \bm w) = 
	\sum_{\eta, l(\eta) \leq k, \eta_1 \leq N}
	s_{\eta}(\bm z) s_{\eta}(\bm w)
	\prod^k_{j=1}
	\frac{(N +j-1)!}{(\eta_j+k - j)!}.
\end{equation}
Consider the set $\mathcal{P}_{k}$ of all partitions $\eta$ of length $l(\eta) \leq k$ and recall the probability distribution introduced in \eqref{measPartsintro} which assigns to each $\eta \in \mathcal{P}_{k}$ the probability
\begin{equation}
p(\eta) = \frac{1}{\mathcal{Z}_{k}}\,s_{\eta}(\bm z)s_{\eta}(\bm w)\prod^k_{j=1}
	\frac{1}{(\eta_j+k - j)!} \label{measParts}
\end{equation}
where $\mathcal{Z}_{k}$ is a normalization factor. We assume that the entries of $\bm z$ and $\bm w$ are positive. With this assumption, by Schur positivity, the factors $s_{\eta}(\bm z)$, $s_{\eta}(\bm w)$ are positive and \eqref{measParts} is a well-defined probability distribution. The normalization in \eqref{measParts} is obtained by summing over all partitions of length less than or equal to $k$. Following the steps below \eqref{beta2refc} without the restriction on $\eta_{1}$ leads to
\begin{equation}
\begin{split}
\mathcal{Z}_{k} &= \sum_{\eta, l(\eta) \leq k}\,s_{\eta}(\bm z)s_{\eta}(\bm w)\prod^k_{j=1}
	\frac{1}{(\eta_j+k - j)!}=\frac{\det\bigg\{e^{z_{i}w_{j}}\bigg\}_{i,j=1}^{k}}{\Delta(\bm z)\Delta(\bm w)}\\
	&= \frac{1}{\prod_{\ell=0}^{k-1}\ell!}\int_{U(k)}dU\,e^{\mathrm{Tr}(UZU^{\dagger}W)}
	\label{hciz-norm-const}
\end{split}
\end{equation}
where the last equality is the Harish Chandra Itzykson-Zuber formula \cite{B00} and $dU$ denotes the normalized Haar measure on $U(k)$. By definition of the probability distribution \eqref{measParts} we arrive at the following probabilistic interpretation for the multi-point correlator in \eqref{cmult}:
\begin{equation}
R^{\mathrm{GinUE}}_{N}(\bm z, \bm w) = 
	\mathcal{Z}_{k}\left(\prod_{j=1}^{k}(N+j-1)!\right)\mathbb{P}(\eta_{1} \leq N), \label{probsec2}
	\end{equation}
which completes the proof.
\end{proof}
\begin{remark}
Like the Schur measure \cite{Joh00, O01}, it is reasonable to expect that \eqref{measParts} gives rise to a determinantal point process, and that the distribution of the top row in \eqref{probsec2} can be expressed as a Fredholm determinant.
\end{remark}
In the case of truncations, an analogous description is possible, replacing the factorials in \eqref{measParts} with the ratio appearing in \eqref{tuesum3}. We define the probability distribution
\begin{equation}
p(\eta) = \frac{1}{\mathcal{Z}_{k}}\,s_{\eta}( \bm z ) s_{\eta}(\bm w)\prod^k_{j=1}
		\frac{(\eta_j+N - M + k - j)!}{(\eta_j+k - j)!}  \label{measPartsTUE}
\end{equation}
 where the normalization constant is
\begin{equation}
\begin{split}
	&\mathcal{Z}_{k} = \sum_{\eta, l(\eta) \leq k}s_{\eta}( \bm z ) s_{\eta}(\bm w)
		\prod^k_{j=1}
		\frac{(\eta_j+N - M + k - j)!}{(\eta_j+k - j)!}\\
	&=
	[(N-M)!]^{k}\frac{\det\bigg\{\frac{1}{(1-z_iw_j)^{N-M+1}}\bigg\}_{i,j=1}^{k}}{\Delta(\bm z)\Delta(\bm w)}
	\\
	&=
	\prod_{\ell = 0}^{k-1} \frac{(N - M + \ell)!}{\ell!}
	\int_{U(k)}dU\,\det (I_{k} -  UZU^{\dagger}W)^{-(N - M + k)}
\end{split} \label{TUE-matrix-int}
\end{equation}
where for this interpretation we assume the parameters $z_{i}, w_{i}$ belong to $(0,1)$ for all $i=1,\ldots,k$. The final equivalence with a group integral in \eqref{TUE-matrix-int}, generalizing identity \eqref{hciz-norm-const}, is due to Orlov \cite[Sec 3]{O04}. The proof of the below Corollary is identical to the proof of Theorem \ref{th:probrep} and we omit the details.
	\begin{corollary}
	When $\Omega = \Sigma = I_{N}$, we have the following probabilistic interpretation for multi-point correlator in \eqref{tmult}:
\begin{equation}
R^{\mathrm{TUE}}_{N,M}(\bm z, \bm w) = 
	\mathcal{Z}_{k}D^{(2)}_{k}\mathbb{P}(\eta_{1} \leq N)
	\end{equation}
where the probability on the right-hand side is defined with respect to distribution \eqref{measPartsTUE}, $\mathcal{Z}_{k}$ is given by \eqref{TUE-matrix-int} and $D^{(2)}_{k}$ is given by \eqref{dk2}.
\end{corollary}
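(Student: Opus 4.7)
The plan is to mirror the proof of Theorem \ref{th:probrep} line for line, replacing the Ginibre character sum by the analogous truncated unitary expansion. Specifically, I would take as my starting point the intermediate identity \eqref{tuesum3} obtained inside the proof of Theorem \ref{th:tue}, namely
\begin{equation*}
R^{\mathrm{TUE}}_{N,M}(\bm z,\bm w) \;=\; D^{(2)}_{k}\sum_{\eta,\, l(\eta)\leq k,\,\eta_{1}\leq M} s_{\eta}(\bm z)\,s_{\eta}(\bm w)\prod_{j=1}^{k}\frac{(\eta_{j}+N-M+k-j)!}{(\eta_{j}+k-j)!},
\end{equation*}
where I have relabelled $\tilde{\eta}\mapsto \eta$ since the sum is symmetric under complementation. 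Every factor in the summand other than $D^{(2)}_{k}$ is precisely $\mathcal{Z}_{k}\,p(\eta)$ with $p(\eta)$ given by \eqref{measPartsTUE}, so reading the truncated sum as $\mathbb{P}(\eta_{1}\leq M)$ (or $\eta_{1}\leq N$, matching the paper's Ginibre convention) with respect to this law will give the stated probabilistic representation, provided I can justify both positivity and normalization of $p(\eta)$.

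For positivity, I would invoke Schur positivity: assuming $z_{i},w_{i}\in(0,1)$ (as stated) all factors $s_{\eta}(\bm z),s_{\eta}(\bm w)$ are non-negative, and the ratio of factorials is manifestly positive. For the normalization I would repeat the closed-form steps from the proof of Theorem \ref{th:ginue}: introduce $l_{j}=\eta_{j}+k-j$, rewrite $s_{\eta}(\bm z)s_{\eta}(\bm w)$ via the bi-determinantal form \eqref{schur_function_determinant_representation}, and apply the Cauchy-Binet identity (Lemma \ref{Cauchy_Binet_identity}) with summation range $0\leq l_{k}<\ldots<l_{1}<\infty$. This reduces $\mathcal{Z}_{k}$ to the $k\times k$ determinant of $\sum_{l\geq 0}\frac{(N-M+l)!}{l!}(z_{i}w_{j})^{l}$. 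The inner series is the standard negative-binomial generating function $(N-M)!\,(1-z_{i}w_{j})^{-(N-M+1)}$, yielding the middle expression for $\mathcal{Z}_{k}$ in \eqref{TUE-matrix-int}; the geometric series converges exactly under the assumption $z_{i}w_{j}\in(0,1)$, which is the reason for imposing it.

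The final equivalence with the unitary group integral in \eqref{TUE-matrix-int} does not require a new argument: it is a direct application of Orlov's generalization of the Harish-Chandra-Itzykson-Zuber formula (cited in \cite{O04}) to the kernel $\det(I_{k}-UZU^{\dagger}W)^{-(N-M+k)}$. Once $\mathcal{Z}_{k}$ is in hand, substituting $p(\eta)=\mathcal{Z}_{k}^{-1}\,s_{\eta}(\bm z)s_{\eta}(\bm w)\prod_{j}\tfrac{(\eta_{j}+N-M+k-j)!}{(\eta_{j}+k-j)!}$ into the displayed sum and restricting to $\eta_{1}\leq M$ produces the claimed factorization $R^{\mathrm{TUE}}_{N,M}(\bm z,\bm w)=\mathcal{Z}_{k}D^{(2)}_{k}\mathbb{P}(\eta_{1}\leq M)$.

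There is no serious obstacle here: all the non-trivial ingredients (orthogonality, splitting identity, Cauchy-Binet, closed-form evaluation, the Orlov integral) were already used or cited in Section \ref{sec:ctue}. The only point that warrants a line of explicit attention is the convergence of the geometric series defining $\mathcal{Z}_{k}$, which forces the restriction $z_{i}w_{j}\in(0,1)$; this is exactly what makes the truncated case slightly more delicate than the Ginibre case, where the analogous series is an entire exponential. Otherwise the argument is a verbatim transcription of the proof of Theorem \ref{th:probrep}, which is why the details are reasonably omitted.
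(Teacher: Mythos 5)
Your proposal is correct and follows essentially the same route as the paper, which itself omits the details by declaring the argument identical to that of Theorem \ref{th:probrep}: start from the character sum \eqref{tuesum3}, identify the summand with $\mathcal{Z}_{k}\,p(\eta)$ for the law \eqref{measPartsTUE}, and evaluate $\mathcal{Z}_{k}$ via Cauchy--Binet and the negative-binomial series exactly as in \eqref{TUE-matrix-int}. You are also right to flag that the natural restriction coming from the dual Cauchy identity is $\eta_{1}\leq M$ (not $N$), and that the condition $z_{i}w_{j}\in(0,1)$ is what guarantees convergence of the series defining $\mathcal{Z}_{k}$.
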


\section{Real and quaternionic ensembles: Pfaffians and duality}
The goal of this section is to establish exact results for multi-point correlators of real and quaternionic ensembles, focusing mainly on the real case. We begin without the multiplicative perturbations in \eqref{rmult}. The reason is mainly pedagogical; for $\Omega= I_{N}$ we can derive Pfaffian closed form expressions for \eqref{mp-ginoe} with only a few modifications of the approach of Section \ref{sec:complex}, while the case of general $\Omega$ will require the use of zonal spherical functions which will be discussed separately in Section \ref{sec:dual_integrals_real}. 

Let $\eta$ be a partition of length $k$. We say that $\lambda = 2\eta = (2\eta_1, 2\eta_2,\ldots,2\eta_{k})$ is a \textit{doubled partition} derived from the partition $\eta$ if it is constructed by doubling each part of the original partition. In what follows we will say that a partition $\lambda$ is an \textit{even partition} if each of its parts is even and consequently there is a partition $\eta$ such that $\lambda = 2\eta$. If we repeat each part of $\lambda$ twice, we then say it is a \textit{repeated partition}, denoted $\lambda^2 = (\lambda_1, \lambda_1, \lambda_2, \lambda_2,\ldots)$.  One may observe, that $(2\lambda)' = (\lambda')^2$. See Figures \ref{fig_young_(8,4,2)} and \ref{fig_doubled_repeated_partitions}.\\
\begin{figure}[H]
	\centering
	\begin{minipage}[b]{.49\linewidth}
		\centering
		\begin{tabular}{@{}c@{}}
        \begin{ytableau}
			$$&&&&&&&\\ &&& \\&{} 
		\end{ytableau}
		\end{tabular}
		\caption{$2\lambda = ( 8, 4, 2 )$}
		\label{fig_young_(8,4,2)}
	\end{minipage}
	\begin{minipage}[b]{.49\linewidth}
		\centering
		\begin{tabular}{@{}c@{}}
        \begin{ytableau}
			$$&&&\\ $$&&&\\ & \\ & \\ \\{} 
		\end{ytableau}
		\end{tabular}
		\caption{$\lambda^{2} = ( 4,4,2,2,1,1)$}
	\label{fig_doubled_repeated_partitions}
	\end{minipage}
	\captionsetup{labelformat=empty}
	\caption{Doubled and repeated partitions for $\lambda = (4,2,1)$.}
 \end{figure}

\subsection{Real Ginibre ensemble: $\Omega=I_{N}$}
\label{sec:closed_forms_real}
We start by computing the multi-point correlator
\begin{equation}
R^{\mathrm{GinOE}}_{N}(\bm z) = \mathbb{E}\left(\prod_{j=1}^{2k}\det(G-z_{j})\right). \label{mp-ginoe}
\end{equation}
After the same expansion of the determinants using the dual Cauchy identity, the relevant Schur function average can be computed with a result of Sommers and Khoruzhenko \cite{S09} and Forrester and Rains \cite{FR09}:
\begin{equation}
\mathbb{E}(s_{\lambda}(G)) = \begin{cases} 2^{|\eta|}[N/2]^{(2)}_{\eta}, & \lambda=2\eta\\ 0, & \mathrm{else}\end{cases} \label{ginoeschur}
\end{equation}
where the \textit{generalized hypergeometric coefficient} is
\begin{align} \label{generalized_pochhammer}
	[u]^{(\alpha)}_\eta	 
	= 
	\prod^N_{j=1}
	\frac{\Gamma(u - (j-1)/ \alpha + \eta_j)}{\Gamma(u - (j-1)/ \alpha )}
	.
\end{align}
Note the following property of generalized hypergeometric coefficients
\begin{align} \label{gen_hypergeom_coef_transposition_property}
	[u]^{(\alpha)}_{\eta'} 
	= 
	(-\alpha)^{-|\eta|} [-\alpha u]^{(1/\alpha)}_{\eta},
\end{align}
see Lemma \ref{lem:coeff}. These results allow one to proceed as in the complex case, starting with an expansion of \eqref{mp-ginoe} into Schur functions via the dual Cauchy identity \eqref{dual_Cauchy_identity} and then applying \eqref{ginoeschur}. This step was essentially pointed out in \cite{S09}. The problem is then to evaluate the resulting sum over partitions which to our knowledge was never achieved. We show that this can be done with the following Pfaffian version of the Cauchy-Binet identity.
\begin{lemma}[Ishikawa-Wakayama Pfaffian identity \cite{IOW96,IW99}] \label{Cauchy_Binet_Pfaffian_identity}
	Let $A$ be an $N \times N$ anti-symmetric matrix and $B$ be an $M \times N$ matrix such that $M \leq N$ with $M$ even. Then
\begin{equation}
\begin{split}
		\Pf ( BAB^T )
		&=
		\Pf \bigg\{ \sum_{1 \leq l < p \leq N} A_{lp} ( B_{il} B_{jp} - B_{ip} B_{jl} )  \bigg\}_{1 \leq i < j \leq M}
		\nonumber
		\\
		&=
		\sum_{1 \leq l_1 < l_2 < \ldots < l_M \leq N}\det\bigg\{ B_{i,l_j} \bigg\}_{i,j=1}^{M}\Pf \bigg\{A_{l_i,l_j}\bigg\}_{1 \leq i < j \leq M}
\end{split}
\end{equation}
	where $l_i \in \{1,\ldots,N\}$. 
\end{lemma}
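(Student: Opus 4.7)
The plan is to prove the identity by direct expansion of the Pfaffian. Writing
\[
\Pf(BAB^T) = \frac{1}{2^{M/2}(M/2)!} \sum_{\sigma \in S_M} \mathrm{sgn}(\sigma) \prod_{i=1}^{M/2} (BAB^T)_{\sigma(2i-1),\sigma(2i)}
\]
and substituting the matrix product $(BAB^T)_{a,b} = \sum_{l,p=1}^{N} B_{a,l} A_{l,p} B_{b,p}$ yields a multi-sum over index tuples $r = (l_1, p_1, \ldots, l_{M/2}, p_{M/2}) \in \{1,\ldots,N\}^M$, weighted by a product of entries of $A$ and of $B$. Interchanging the order of summation, for each fixed $r$ the piece depending on $\sigma$ and $B$ antisymmetrizes to $\det\{B_{i,r_j}\}_{i,j=1}^M$, which is the standard Leibniz-determinant identity.

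Next I group by the underlying ordered subset $S = \{s_1 < s_2 < \cdots < s_M\} \subseteq \{1,\ldots,N\}$. The determinant $\det\{B_{i,r_j}\}$ vanishes unless the entries of $r$ are pairwise distinct, so only tuples whose underlying set has exactly $M$ elements contribute. Each such tuple corresponds uniquely to a choice of $S$ and a permutation $\tau \in S_M$ that records the rearrangement of $(s_1,\ldots,s_M)$ into $r$. Pulling the factor $\det\{B_{i,s_j}\}_{i,j=1}^M$ (together with the $\mathrm{sgn}(\tau)$ that arises when reordering the columns of $B$) out of the sum reduces the residual sum to
\[
\frac{1}{2^{M/2}(M/2)!} \sum_{\tau \in S_M} \mathrm{sgn}(\tau) \prod_{i=1}^{M/2} A_{s_{\tau(2i-1)},s_{\tau(2i)}} = \Pf\{A_{s_i,s_j}\}_{1\leq i<j\leq M},
\]
which yields the claimed factorization.

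The main obstacle is the sign bookkeeping in this reorganization: one has to verify carefully that the $\mathrm{sgn}(\tau)$ arising from sorting the columns of $B$ combines with the original $\mathrm{sgn}(\sigma)$ from the Pfaffian definition so that the remaining sum is exactly the restricted Pfaffian, with the normalization constant $1/(2^{M/2}(M/2)!)$ absorbed correctly. A conceptually cleaner alternative which bypasses most of this combinatorics is the Berezin-integral approach: express $\Pf(BAB^T) = \int d\theta_M \cdots d\theta_1\, \exp(\tfrac{1}{2}\theta^T B A B^T \theta)$, introduce $\phi := B^T \theta$ so that the exponent becomes $\tfrac{1}{2}\phi^T A \phi$, and expand $\exp(\tfrac{1}{2}\phi^T A \phi)$ as a Grassmann polynomial. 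Expressing each degree-$M$ monomial in $\phi$ back in terms of $\theta$ automatically contributes one determinantal factor $\det\{B_{i,l_j}\}$ per $M$-subset of columns, while the Grassmann exponential contributes the corresponding Pfaffian factor $\Pf\{A_{l_i,l_j}\}$, reproducing the identity without explicit permutation arithmetic.
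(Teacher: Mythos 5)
Your proof is correct. Note first that the paper does not actually prove this lemma: it is quoted as a known result and attributed to Ishikawa--Okada--Wakayama and Ishikawa--Wakayama via the citations, so there is no in-paper argument to compare against; your contribution is a self-contained derivation. The direct-expansion route you take is sound: substituting $(BAB^T)_{a,b}=\sum_{l,p}B_{a,l}A_{l,p}B_{b,p}$ into the Pfaffian sum, the $\sigma$-sum over $\prod_i B_{\sigma(2i-1),r_{2i-1}}B_{\sigma(2i),r_{2i}}=\prod_j B_{\sigma(j),r_j}$ is exactly the Leibniz expansion of $\det\{B_{i,r_j}\}$, which kills all tuples with repeated indices; writing each surviving tuple as $r_j=s_{\tau(j)}$ for an ordered $M$-subset $S$ and $\tau\in S_M$ produces the factor $\mathrm{sgn}(\tau)\det\{B_{i,s_j}\}$, and the residual $\frac{1}{2^{M/2}(M/2)!}\sum_\tau\mathrm{sgn}(\tau)\prod_i A_{s_{\tau(2i-1)},s_{\tau(2i)}}$ is precisely $\Pf\{A_{s_i,s_j}\}$ by the paper's definition of the Pfaffian. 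The sign bookkeeping you flag as the main obstacle in fact closes cleanly, since the reordering sign from the columns of $B$ is the same $\mathrm{sgn}(\tau)$ that the restricted Pfaffian requires, and the normalization $1/(2^{M/2}(M/2)!)$ is consumed exactly by the definition of $\Pf\{A_{s_i,s_j}\}$. (The first displayed equality in the lemma is just $(BAB^T)_{ij}=\sum_{l<p}A_{lp}(B_{il}B_{jp}-B_{ip}B_{jl})$, using antisymmetry of $A$, so it needs no separate argument.) Your Berezin-integral alternative is also a standard and valid shortcut; either version would serve as a complete proof.
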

This leads to a general summation identity over repeated partitions which results in a Pfaffian instead of a determinant. We were motivated to prove the following Lemma by work of Betea and Wheeler \cite{BW16} who proved a particular case in relation to so-called \textit{refined Cauchy-Littlewood identities}. 
\begin{lemma} \label{Schur_sum_repeated_partitions}
	Let $\bm x \in \mathbb{C}^{2k}$ and let $f:\mathbb{N} \to \mathbb{C}$. Then
	\begin{align}
		\sum_{\substack{\mu, l(\mu) \leq 2k, \mu_1 \leq N \\ \mu'\,\mathrm{even}}}
		&s_\mu (\bm x)
		\prod_{j=1}^k f(\mu_{2j}+2k-2j)
		\nonumber
		\\
		&=
		\frac{\Pf \bigg\{ (x_j-x_i)\sum_{l=0}^{N+2k-2} f(l)(x_i x_j)^l \bigg\}_{1 \leq i < j \leq 2k}}{\Delta( \bm x )}
		. \label{schurrepeatedidentity}
	\end{align}
\end{lemma}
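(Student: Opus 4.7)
The approach is to recognize both sides of \eqref{schurrepeatedidentity} as manifestations of the Ishikawa-Wakayama Pfaffian identity (Lemma \ref{Cauchy_Binet_Pfaffian_identity}), after expanding the Schur function via its determinantal formula \eqref{schur_function_determinant_representation}. First I would parametrize the summation: the condition that $\mu'$ is even is equivalent to $\mu$ being a repeated partition, $\mu_{2j-1}=\mu_{2j}$, so setting $\lambda_j := \mu_{2j}$ and $\nu_j := \lambda_j + 2k - 2j$, the constraints $l(\mu) \leq 2k$ and $\mu_1 \leq N$ become $\nu_1 > \nu_2 > \cdots > \nu_k \geq 0$ with consecutive gaps $\nu_j - \nu_{j+1} \geq 2$ and $\nu_1 \leq N + 2k - 2$. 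Substituting into \eqref{schur_function_determinant_representation}, the $2k$ column exponents in the Schur numerator are the consecutive pairs $(\nu_m+1, \nu_m)$ for $m = 1, \ldots, k$.

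Next, introduce the $2k \times (N+2k)$ monomial matrix $B$ with entries $B_{i,l} = x_i^l$ for $0 \leq l \leq N + 2k - 1$, together with the antisymmetric $(N+2k) \times (N+2k)$ bidiagonal matrix $A$ defined by $A_{l,l+1} = f(l) = -A_{l+1,l}$ and all other entries zero. Sorting the column exponents above into strictly increasing order $l_1 < l_2 < \cdots < l_{2k}$ produces the interleaved sequence $(\nu_k, \nu_k+1, \nu_{k-1}, \nu_{k-1}+1, \ldots, \nu_1, \nu_1+1)$, up to a fixed overall sign from the column reordering. The key feature of $A$ is that the restricted Pfaffian $\Pf\{A_{l_i, l_j}\}$ vanishes unless the $l_i$'s arrange into consecutive pairs, $l_{2m} = l_{2m-1} + 1$ for each $m$; these non-vanishing index sets are in bijection with the valid tuples $(\nu_1, \ldots, \nu_k)$ above. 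In the non-vanishing case the block-diagonal structure of $\{A_{l_i, l_j}\}$ immediately gives $\Pf = \prod_{m=1}^k f(\nu_m)$.

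Applying the Ishikawa-Wakayama identity then collapses the partition sum on the left-hand side to $\Pf(BAB^T)/\Delta(\bm x)$. A direct computation using the bidiagonal structure of $A$ yields
\begin{equation*}
(BAB^T)_{ij} = \sum_{l=0}^{N+2k-2} f(l)\bigl(x_i^l x_j^{l+1} - x_i^{l+1} x_j^l\bigr) = (x_j - x_i)\sum_{l=0}^{N+2k-2} f(l)(x_i x_j)^l,
\end{equation*}
which is exactly the $(i,j)$-entry of the Pfaffian on the right-hand side of \eqref{schurrepeatedidentity}. The main bookkeeping challenge is verifying that the overall sign produced by the column reordering together with the Vandermonde convention cancels correctly, and that the upper bound $\nu_1 \leq N + 2k - 2$ matches the summation range of $l$ in the Pfaffian entries; once these are in place, the Lemma follows directly from Ishikawa-Wakayama.
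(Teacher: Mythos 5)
Your proposal is correct and follows essentially the same route as the paper: reindex the repeated partitions by strictly increasing exponent sequences, encode the pairing constraint and the factors $f(\mu_{2j}+2k-2j)$ as the Pfaffian of the antisymmetric bidiagonal matrix $A_{l,l+1}=f(l)$ (whose restricted Pfaffians vanish unless the indices form consecutive pairs, by the tridiagonal evaluation $\Pf=\prod a_{2j-1}$), and then collapse the sum via Ishikawa--Wakayama with $B_{i,l}=x_i^l$ and compute $BAB^{T}$. The sign bookkeeping you flag does work out, since reversing the $2k$ determinant columns and reversing the $2k$ Pfaffian indices each contribute $(-1)^{\binom{2k}{2}}$, which cancel.
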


\begin{proof}
Inserting definition \eqref{schur_function_determinant_representation} of the Schur functions, the left-hand side of \eqref{schurrepeatedidentity} is
	\begin{equation}
  \frac{1}{\Delta(\bm x)} \sum_{\mu, l(\mu) \leq 2k, \mu_1 \leq N}\det\bigg\{x_i^{\mu_j+2k-j}\bigg\}_{i,j=1}^{2k}\prod_{j=1}^{k} f(\mu_{2j}+2(k-j)) \delta_{\mu_{2j}, \mu_{2j-1}}. \label{schurrepeatedidentity2}
\end{equation}
We introduce indices $l_{j} = \mu_{j}+2k-j$ for $j=1,\ldots,2k$. Then $\{\mu_{j}\}_{j=1}^{2k}$ satisfies the weakly decreasing property with $\mu_{2k} \geq 0$ and $\mu_{1} \leq N$ if and only if we have the constraints
\begin{equation}
0 \leq l_{2k} < l_{2k-1} < \ldots < l_{1} \leq N+2k-1.
\end{equation}
Furthermore $\mu_{2j}=\mu_{2j-1}$ if and only if $l_{2j}+1=l_{2j-1}$ for each $j=1,\ldots,k$. Then we can write the sum in \eqref{schurrepeatedidentity2} as
\begin{align}
		\sum_{0 \leq l_{2k} < \ldots < l_{1} \leq N+2k-1}\det\bigg\{x_i^{l_j-1}\bigg\}_{i,j=1}^{2k}\prod_{j=1}^{k} f(l_{2j}) \delta_{ l_{2j-1}, l_{2j}}. \label{pfaffdetsum}
	\end{align}
Now we express the product over $j$ as the Pfaffian of an anti-symmetric tridiagonal matrix, using the fact that for a generic sequence $\{ a_j \}_{j=1}^{2k-1}$, we have
	\begin{align}
		\Pf  \begin{pmatrix}
      0           & a_{1}  &             &              &    \\
      -a_{1}   & 0         &a_{2}   &                &           &   \\
                  &  -a_{2} & 0        & \ddots      &     \\
                  &            &           &  \ddots     & a_{2k-2} &  \\   
                  &            &           & -a_{2k-2} & 0           & a_{2k-1}   \\
                  &            &           &                & -a_{2k-1}           &0
  \end{pmatrix}
		= 
		\prod_{j=1}^{k}a_{2j-1}.
	\end{align} 
This identity follows inductively from a Laplace expansion along the first row. Hence we have
	\begin{equation} \label{Pfaffian_factorisation}
	\begin{split}
\prod_{j=1}^k f(l_{2j}) \delta_{l_{2j-1},l_{2j}+1}
		&=
		\Pf \bigg\{\delta_{i,j-1} f(l_{i+1}) \delta_{l_{i},l_{i+1}+1} \bigg\}_{1 \leq i < j \leq 2k}
		\\
		&=
		\Pf \bigg\{f(l_{j}) \delta_{l_{i},l_{j}+1} \bigg\}_{1 \leq i < j \leq 2k}
		\end{split}
	\end{equation}
	where $\delta_{i,j-1}\delta_{l_{i},l_{i+1}+1} = \delta_{l_{i},l_{j}+1}$ follows from the strictly decreasing property of $\{l_{j}\}_{j=1}^{2k}$. Hence, after reordering indices, we can write \eqref{pfaffdetsum} as
	\begin{equation}
	\begin{split}
		\sum_{1 \leq l_{1} < \ldots < l_{2k} \leq N+2k}
		\Pf \bigg\{ f(l_{i}-1) \delta_{l_{i}+1,l_{j}} \bigg\}_{1 \leq i < j \leq 2k}\det\bigg\{x_i^{l_{j}-1}\bigg\}_{i,j=1}^{2k}
		\end{split}
		\label{ind-reorder}
	\end{equation}
to which Lemma \ref{Cauchy_Binet_Pfaffian_identity} is directly applicable. For this we identify the matrix elements in the Lemma as
	\begin{equation}
	\begin{split}
	A_{ij} &= f(i-1)\delta_{i+1,j}, \qquad 1 \leq i < j \leq N+2k\\\qquad B_{ij} &= x_{i}^{j-1}, \hspace{64pt} i=1,\ldots,2k, \quad j=1,\ldots,N+2k
	\end{split}
	\end{equation}
so that \eqref{ind-reorder} becomes
	\begin{equation}
	\begin{split}
		&\Pf \bigg\{ \sum_{1 \leq l < p \leq N+2k} f(l-1) \delta_{l+1,p} (x_i^{p-1} x_j^{l-1} - x_i^{l-1} x_j^{p-1}) \bigg\}_{1 \leq i < j \leq 2k}
		\\
		&=
		\Pf \bigg\{ (x_j-x_i)\sum_{l=0}^{N+2k-2} f(l)(x_i x_j)^l \bigg\}_{1 \leq i < j \leq 2k} \label{finalpfaff}
	\end{split}
	\end{equation}
which completes the proof.
\end{proof}

\begin{theorem}
\label{prop:ginoe}
We have
	\begin{align}
R^{\mathrm{GinOE}}_{N}(\mathbf{z}) = \prod^{k}_{j=1} (N + 2j-2)!\,\frac{\Pf \bigg\{ ( z_j - z_i ) B_{N+2k-1}(z_i, z_j) \bigg\}_{1 \leq i < j \leq 2k} }{ \Delta(\bm z)} \label{ginoe}
	\end{align}
	where
\begin{equation}
B_{N}(z,w) = \sum_{j=0}^{N-1}\frac{(zw)^{j}}{j!}.
\end{equation}
\end{theorem}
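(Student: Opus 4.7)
The plan is to follow the character-expansion strategy of Theorem~\ref{th:ginue}, with three key substitutions: the Schur average over $G$ is governed by \eqref{ginoeschur} rather than the orthogonality \eqref{beta2orthog}; the coefficients involve the generalized hypergeometric \eqref{generalized_pochhammer} rather than $[N]_\lambda$; and the final resummation uses the Pfaffian-type identity of Lemma~\ref{Schur_sum_repeated_partitions} rather than ordinary Cauchy--Binet.

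I would first factor out $\det(Z)^N$ (the global sign vanishes since $N$ is even) and apply the dual Cauchy identity \eqref{dual_Cauchy_identity} with $x_j=-z_j^{-1}$ and $Y=G$. Taking expectations and invoking \eqref{ginoeschur} forces $\eta'=2\nu$, i.e.\ $\eta$ is a \emph{repeated} partition, with surviving weight $2^{|\nu|}[N/2]^{(2)}_\nu$. The inversion formula \eqref{schurinv} adapted to the $2k\times N$ rectangle, together with $(-1)^{|\eta|}=1$ since $|\eta|$ is automatically even, converts $s_\eta(-\bm z^{-1})$ to $\det(Z)^{-N}s_{\tilde\eta}(\bm z)$, which cancels the prefactor. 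Reparameterizing by $\mu=\tilde\eta$ (also repeated, as one checks from $\mu_j=N-\eta_{2k-j+1}$ together with $\eta_{2i-1}=\eta_{2i}$), the expansion takes the form $\sum_\mu s_\mu(\bm z)\,c_\mu$ over repeated partitions $\mu$ in the $2k\times N$ rectangle.

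The core algebraic step is to identify $c_\mu=\prod_{j=1}^k(N+2j-2)!/(\mu_{2j}+2k-2j)!$. I would achieve this via the transposition identity \eqref{gen_hypergeom_coef_transposition_property} with $\alpha=1/2$ and $u=-N$, which turns $2^{|\nu|}[N/2]^{(2)}_\nu$ into $\prod_{j=1}^k(N+2j-2)!/(N+2j-2-\nu'_j)!$ after two sign cancellations (one from $2^{|\nu|}(-1/2)^{|\nu|}$, one from evaluating Gamma ratios at negative integer arguments). The bookkeeping identity $\nu'_j=\eta_{2j}$, combined with repeatedness and $\mu_j=N-\eta_{2k-j+1}$, yields $\nu'_j=N-\mu_{2k-2j+2}$; after reindexing $i=k-j+1$ and noting that $\{N+2k-2i\}_{i=1}^k=\{N+2j-2\}_{j=1}^k$ as multisets, the desired form of $c_\mu$ emerges. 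Lemma~\ref{Schur_sum_repeated_partitions} applied with $f(l)=1/l!$ then collapses the sum to the claimed Pfaffian, with $B_{N+2k-1}(z_i,z_j)=\sum_{l=0}^{N+2k-2}(z_iz_j)^l/l!$. The main obstacle is this reparameterization: tracking precisely how complementation in the $2k\times N$ rectangle, conjugation, and halving interact so that the factorial numerators and denominators align with the arguments $\mu_{2j}+2k-2j$ demanded by Lemma~\ref{Schur_sum_repeated_partitions}.
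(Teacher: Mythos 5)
Your proposal is correct and follows essentially the same route as the paper: dual Cauchy expansion, the Schur average \eqref{ginoeschur} forcing the conjugate partition to be even (hence the summation partition repeated), passage to the complement partition, and resummation via Lemma \ref{Schur_sum_repeated_partitions} with $f(l)=1/l!$. The only cosmetic difference is that you re-derive the coefficient identity from the transposition property \eqref{gen_hypergeom_coef_transposition_property} plus Gamma-function reflection, whereas the paper invokes the prepackaged form \eqref{coeff_r} of Lemma \ref{lem:coeff} (with $u=N$, $\alpha=1/2$), which is proved by exactly that combination.
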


\begin{proof}
By the dual Cauchy identity \eqref{dual_Cauchy_identity} and average \eqref{ginoeschur} we can write the correlator \eqref{mp-ginoe} as 
\begin{equation}
	R^{\mathrm{GinOE}}_{N}(\mathbf{z}) 
	= 
	\sum_{\substack{\mu : l(\mu) \leq 2k, \mu_{1} \leq N\\ \mu' \mathrm{even}}}s_{\tilde{\mu}}(\bm z)2^{|\eta'|}[N/2]^{(2)}_{\eta'} 
\label{rginoe1}
\end{equation}
where $\eta' = \mu'/2$ and we used \eqref{schurinv}. Using Lemma \ref{lem:coeff}, equation \eqref{coeff_r} with $u=N$ and $\alpha=1/2$, we can express the coefficient in \eqref{rginoe1} as
\begin{equation}
	2^{|\eta'|}[N/2]^{(2)}_{\eta'}= \prod_{j=1}^{k}\frac{(N+2j-2)!}{(\tilde{\eta_{j}}+2(k-j))!}
\end{equation}
where $\tilde{\eta}$ is the complement partition of $\eta$, recall definition \eqref{comp-parts}. Since $\mu'$ is even, we deduce that $\mu$ is repeated and $\tilde{\eta}_{j} = \tilde{\mu}_{2j}$. Then replacing $\tilde{\mu}$ with $\mu$, the sum in \eqref{rginoe1} becomes
\begin{equation}
R^{\mathrm{GinOE}}_{N}(\mathbf{z})  = \prod_{j=1}^{k}(N+2j-2)!\sum_{\substack{\mu : l(\mu) \leq 2k, \mu_{1} \leq N\\ \mu' \mathrm{even}}}
	s_{\mu}(\bm z)
	\prod_{j=1}^{k}\frac{1}{(\mu_{2j}+2(k-j))!}
	.
\end{equation}
Applying Lemma \ref{Schur_sum_repeated_partitions} with $f(l) = \frac{1}{l!}$ completes the proof.
%To evaluate the latter sum, we write it as
%\begin{equation}
%\sum_{\mu : l(\mu) \leq 2k, \mu_{1} \leq N}s_{\tilde{\mu}}(\bm z)\prod_{j=1}^{k}\frac{1}{\Gamma(\tilde{\mu}_{2j}+2(k-j)+1)}\delta_{\mu_{2j},\mu_{2j-1}}
%\end{equation}
\end{proof}

\subsection{Truncated orthogonal ensemble: $\Omega=I_{M}$}
\label{sec:toe}
We now prove an analogous Pfaffian formula for truncations of random orthogonal matrices.
\begin{theorem} \label{prop:TOE}
	Let $\bm z \in \mathbb{C}^{2k}$ and $T$ be a $M \times M$ truncation of a $N \times N$ Haar distributed orthogonal matrix. Then
	\begin{equation}
		R^{\mathrm{TOE}}_{N,M}(\mathbf{z})
		= D^{(1)}_{k}
		\frac{\Pf \bigg\{ ( z_j - z_i ) \sum_{l = 0}^{M + 2k - 2} \frac{(N-M+l)!}{l!} (z_{i} z_{j})^{l} \bigg\}_{1 \leq i < j \leq 2k} }{ \Delta(\bm z)}
	\end{equation} 
	where $D^{(1)}_{k}$ is given in \eqref{dk1}.
\end{theorem}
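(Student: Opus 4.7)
The plan is to follow the template of the proof of Theorem \ref{prop:ginoe} exactly, substituting the Ginibre Schur-average \eqref{ginoeschur} with its truncated orthogonal analogue and invoking Lemma \ref{Schur_sum_repeated_partitions} with a different weight function. As a preliminary step I would establish the TOE Schur-average formula, namely that $\mathbb{E}[s_\lambda(T)]=0$ unless $\lambda=2\eta$ is an even partition, in which case
\[
\mathbb{E}[s_{2\eta}(T)] = 2^{|\eta|}\,\frac{[M/2]^{(2)}_\eta}{[N/2]^{(2)}_\eta},
\]
which recovers \eqref{ginoeschur} in the scaling limit $N\to\infty$. This can be proved along the same lines as the TUE orthogonality relation \eqref{tueorthog} from \cite{SS22}, or directly from the joint density of eigenvalues of $T$.

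The remaining steps mirror those in Section \ref{sec:closed_forms_real}. First, expand the product of determinants in \eqref{toemult} via the dual Cauchy identity \eqref{dual_Cauchy_identity} with $x_i=-z_i^{-1}$ and the $y_j$'s the eigenvalues of $T$, pulling out a prefactor $\det(Z)^M$. Second, take expectation and apply the TOE Schur-average above; the sum is then restricted to $\mu$ with $l(\mu)\leq 2k$, $\mu_1\leq M$, and $\mu'=2\eta'$ even, and the coefficient becomes $2^{|\eta'|}[M/2]^{(2)}_{\eta'}/[N/2]^{(2)}_{\eta'}$. Since $|\mu|$ is even the sign in $s_\mu(-\bm z^{-1})$ is trivial, and the complement identity \eqref{schurinv} applied in the $2k\times M$ rectangle rewrites $s_\mu(\bm z^{-1})=\det(Z)^{-M}s_{\tilde\mu}(\bm z)$, cancelling the prefactor. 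Third, apply \eqref{coeff_r} of Lemma \ref{lem:coeff} twice, with $\alpha=1/2$ and $u=M$, respectively $u=N$, to rewrite the hypergeometric ratio as a product of factorials. The $\eta'$-independent part collapses exactly to $D^{(1)}_k$ defined in \eqref{dk1}, while the $\eta'$-dependent part becomes $\prod_{j=1}^{k}\frac{(\tilde\eta_j+N-M+2(k-j))!}{(\tilde\eta_j+2(k-j))!}$, in direct analogy with the TUE reduction that produced \eqref{tuesum3}. Finally, relabel $\tilde\mu\mapsto\mu$ (an involution on the $2k\times M$ rectangle that preserves the even-transpose condition, since $\mu$ is a repeated partition iff $\tilde\mu$ is) so that $\tilde\eta_j$ identifies with $\mu_{2j}$; Lemma \ref{Schur_sum_repeated_partitions} applied with weight $f(l)=(N-M+l)!/l!$ and rectangle bound $M$ then delivers the claimed Pfaffian, with the inner sum running precisely from $l=0$ to $l=M+2k-2$.

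The main obstacle is bookkeeping of partition transformations: the argument interleaves transposition, complementation in a $2k\times M$ rectangle, the doubling/repetition correspondence $(2\lambda)'=(\lambda')^2$, and two applications of \eqref{coeff_r}. Ensuring that the indices $\tilde\eta_j$ arising from the Pochhammer simplification correctly match $\mu_{2j}+2(k-j)$ after the final relabeling is where small slips would most likely occur, but no new ideas are needed beyond those already deployed in the GinOE and TUE proofs. Once the TOE Schur-average is in hand, the Pfaffian form emerges by exactly the same mechanism, namely Lemma \ref{Schur_sum_repeated_partitions}, that produced the GinOE Pfaffian in Theorem \ref{prop:ginoe}.
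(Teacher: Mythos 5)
Your overall strategy coincides with the paper's proof: expand via the dual Cauchy identity, insert the truncated-orthogonal Schur average, convert the hypergeometric ratio to factorials with Lemma \ref{lem:coeff}, pass to complement partitions, and finish with Lemma \ref{Schur_sum_repeated_partitions} using $f(l)=(N-M+l)!/l!$. However, your preliminary input formula is wrong: the Schur average for the TOE (proved in \cite{SS22} and quoted in the paper) is
\begin{equation*}
\mathbb{E}\left[s_{2\eta}(T)\right]=\frac{[M/2]^{(2)}_{\eta}}{[N/2]^{(2)}_{\eta}},
\end{equation*}
with no factor $2^{|\eta|}$. Your own consistency check detects this: under the rescaling $T\mapsto\sqrt{N}T$ one has $s_{2\eta}(\sqrt{N}T)=N^{|\eta|}s_{2\eta}(T)$ and $[N/2]^{(2)}_{\eta}\sim(N/2)^{|\eta|}$, so the correct formula yields $2^{|\eta|}[M/2]^{(2)}_{\eta}$ in the limit, matching \eqref{ginoeschur} for an $M\times M$ matrix, whereas your version yields $4^{|\eta|}[M/2]^{(2)}_{\eta}$.

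The error is not cosmetic, because the factor $2^{|\eta'|}$ is partition-dependent and cannot be absorbed into $D^{(1)}_{k}$. Identity \eqref{coeff_r} with $\alpha=1/2$ converts $2^{|\eta'|}[u/2]^{(2)}_{\eta'}$ (not $[u/2]^{(2)}_{\eta'}$ alone) into a product of factorials, so the ratio $[M/2]^{(2)}_{\eta'}/[N/2]^{(2)}_{\eta'}$ is precisely what is needed for the powers of two to cancel between numerator and denominator. With your extra factor a residual $2^{|\eta'|}=\prod_{j=1}^{k}2^{\mu_{2j}}$ survives, which after the substitution $l_j=\mu_{2j}+2(k-j)$ would force the weight $f(l)=2^{l}(N-M+l)!/l!$ up to an overall constant, producing $(2z_iz_j)^{l}$ rather than $(z_iz_j)^{l}$ in the Pfaffian kernel and hence not the stated theorem. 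Once the spurious $2^{|\eta|}$ is dropped, every remaining step of your outline — including the relabelling $\tilde\mu\mapsto\mu$ and the identification of $\tilde\eta_j$ with $\mu_{2j}$ — is exactly the paper's proof of Theorem \ref{prop:TOE}.
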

\begin{proof}
	We proceed as in the real Ginibre case and expand the characteristic polynomial product in terms of Schur functions. The relevant average is now the one computed in \cite{SS22} as
	\begin{align} 
		\mathbb{E}(
			s_\mu (T)) =
		\delta_{\mu = 2 \eta}
		\frac{[M/2]_{\eta}^{(2)}}{\left [ N/2 \right ]^{(2)}_\eta}
		.
	\end{align}
	We then have 
	\begin{align}
		&R^{\mathrm{TOE}}_{N,M}(\mathbf{z})		\nonumber
		=
		\sum_{\substack{\mu : l(\mu) \leq 2k, \mu_{1} \leq 
		M\\ \mu' \mathrm{even}}}s_{\tilde{\mu}}(\bm z)
		\delta_{\mu' = 2 \eta'}
		\frac{[M/2]_{\eta'}^{(2)}}{\left [ N/2 \right ]^{(2)}_{\eta'}}.
	\end{align}
	Rewriting the hypergeometric coefficients using Lemma \ref{lem:coeff} with $\alpha=1/2$, we obtain
	\begin{align}
		&R^{\mathrm{TOE}}_{N,M}(\mathbf{z})		
		=
		D_{k}^{(1)}
		\sum_{\substack{\mu, l(\mu) \leq 2k, \mu_1 \leq M \\ \mu' \text{ even} }}
		s_{\mu}(\bm z)
		\prod^{k}_{j=1}
		\frac{( N - M + \mu_{2j} + 2(k - j) )!}{(\mu_{2j} + 2(k - j))!}.
	\end{align}
	Finally, we employ Lemma \ref{Schur_sum_repeated_partitions} with the choice
	\begin{align}
		f(l) = \frac{(N - M + l)!}{l!}
	\end{align}
	to obtain the result.
\end{proof}

\subsection{Probabilistic interpretation: real ensembles} \label{sec:prob_real}
In the spirit of section \ref{sec:prob}, we introduce the following interpretation of the real ensemble multi-point correlators. 

From the proof of Theorem \ref{prop:ginoe} we recall the representation
\begin{equation}
	R^{\mathrm{GinOE}}_{N}(\bm z)  
	= 
	\sum_{\substack{\eta, l(\eta) \leq 2k, \mu_{1} \leq N\\ \eta' \mathrm{even}}}
	s_{\eta}(\bm z)
	\prod_{j=1}^{k} \frac{(N+2(k-j))!}{(\eta_{2j}+2(k-j))!}
	.
\end{equation}
Consider the set $\mathcal{P}_{k}$ of all partitions $\eta$ of length $l(\eta) \leq k$ and assign each $\eta \in \mathcal{P}_{k}$ the probability 
\begin{equation}
	p(\eta) 
	= 
	\begin{cases}
		\displaystyle{
			\frac{1}{\mathcal{Z}_{k}} s_{\eta}(\bm z)
			\frac{1}{(\eta_{2j}+2(k-j))!},
		}
		&\eta' \text{ is even},
		\\
		0, \ & \text{otherwise}
	\end{cases}
	\label{measPartsGinOE}
\end{equation}
where $\mathcal{Z}_{k}$ is a normalization factor. To have a probabilistic interpretation we require $\bm z \in \mathbb{R}^{k}_{+}$ so that \eqref{measPartsGinOE} is positive by Schur positivity. The normalization constant in \eqref{measPartsGinOE} is obtained by following the steps in the proof of Theorem \ref{prop:ginoe} with no restriction on $\eta_{1}$. This gives
\begin{equation}
\begin{split}
	\mathcal{Z}_{k} 
	&= 
	\sum_{\substack{\eta, l(\eta) \leq 2k \\ \eta' \mathrm{even}}}
	s_{\eta}(\bm z)
	\prod^k_{j=1} \frac{1}{\Gamma(\eta_{2j}+2(k-j)+1)}
	=
	\frac{\Pf \bigg\{ (z_j - z_i) e^{z_{i} z_{j}} \bigg\}_{1 \leq i < j \leq 2k} }{\Delta(\bm z)}
	\\
	&= 
	\left (
		\prod_{j=0}^{k-1} \frac{1}{(2j)!}
	\right )
	\int_{\mathrm{CSE}(k)}dU\,e^{\frac{1}{2} \mathrm{Tr}(UZU^{\dagger}Z^{\mathrm{D}})} 
	\label{cse_hciz-norm-const}
\end{split}
\end{equation}
where CSE$(k)$ denotes the \textit{Circular Symplectic Ensemble} of $2k \times 2k$ self-dual unitary matrices $U$, i.e.\ unitary matrices satisfying $U^D = J^{-1} U^T J = U$, where 
\begin{align*}
	J = 
	\left (
	\begin{smallmatrix}
		0 && I_k
		\\[1.5mm]
		-I_k && 0
	\end{smallmatrix}
	\right )
	.
\end{align*}
The measure $dU$ in \eqref{cse_hciz-norm-const} is the restriction of the Haar measure on $U(2k)$ to self-dual unitary matrices satisfying $U^{\mathrm{D}}=U$ and normalized as a probability measure. See e.g.\ \cite{F10book} for further details about the CSE. The integration formula \eqref{cse_hciz-norm-const} in terms of a Pfaffian has been the subject of some attention recently, see the works \cite{TZ14, K21, TZ23}.
\begin{corollary}
\label{cor:prob_real_ginoe}
When $\Omega = I_{N}$, we have the following probabilistic interpretation for the multi-point correlator in \eqref{rmult}:
\begin{equation}
	R^{\mathrm{GinOE}}_{N}(\bm z)  
	=
	\mathcal{Z}_{k}\left(\prod_{j=1}^{k}(N+2j-2)!\right)\mathbb{P}(\eta_{1} \leq N)
	\end{equation}
where the probability on the right-hand side is defined with respect to distribution \eqref{measPartsGinOE} and $\mathcal{Z}_{k}$ is given by \eqref{cse_hciz-norm-const}.
\end{corollary}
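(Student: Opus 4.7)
The argument is a direct transposition of the proof of Theorem \ref{th:probrep} to the real setting, with the normalization constant now computed via a Pfaffian analog of the Harish-Chandra Itzykson-Zuber formula. My starting point is the intermediate identity obtained inside the proof of Theorem \ref{prop:ginoe}, namely
$$R^{\mathrm{GinOE}}_{N}(\bm z) = \prod_{j=1}^{k}(N+2j-2)!\sum_{\substack{\mu,\; l(\mu)\leq 2k,\; \mu_{1}\leq N \\ \mu'\;\mathrm{even}}} s_{\mu}(\bm z)\prod_{j=1}^{k}\frac{1}{(\mu_{2j}+2(k-j))!}.$$
Assuming $\bm z \in \mathbb{R}_{+}^{2k}$, Schur positivity guarantees that every summand is non-negative. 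Writing $\mathcal{Z}_{k}$ for the same sum with the cap $\mu_{1}\leq N$ removed, each summand divided by $\mathcal{Z}_{k}$ is exactly the probability $p(\mu)$ from \eqref{measPartsGinOE}, and the constrained sum is therefore $\mathbb{P}(\eta_{1}\leq N)$.

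The only remaining task is to verify the two closed forms for $\mathcal{Z}_{k}$ stated in \eqref{cse_hciz-norm-const}. For the first expression, I would apply Lemma \ref{Schur_sum_repeated_partitions} with $f(l)=1/l!$ and pass to the limit $N\to\infty$, so that the inner geometric-type series $\sum_{l\geq 0}(z_{i}z_{j})^{l}/l!$ closes to $e^{z_{i}z_{j}}$, producing the Pfaffian
$$\mathcal{Z}_{k} = \frac{\mathrm{Pf}\bigl\{(z_{j}-z_{i})e^{z_{i}z_{j}}\bigr\}_{1\leq i<j\leq 2k}}{\Delta(\bm z)}.$$
The rewriting of this Pfaffian as the CSE matrix integral in the second line of \eqref{cse_hciz-norm-const} is then the Pfaffian/CSE analog of the Harish-Chandra Itzykson-Zuber identity, which I would quote directly from \cite{TZ14, K21, TZ23} rather than reprove here.

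Assembling these ingredients yields
$$R^{\mathrm{GinOE}}_{N}(\bm z) = \mathcal{Z}_{k}\left(\prod_{j=1}^{k}(N+2j-2)!\right)\mathbb{P}(\eta_{1}\leq N),$$
as claimed. The potentially delicate point is the passage to the CSE integral representation of $\mathcal{Z}_{k}$, but this is off-loaded to existing literature; the partition/Schur manipulations are a line-for-line transposition of the complex case treated in Theorem \ref{th:probrep}, which is presumably why the authors themselves omit the details.
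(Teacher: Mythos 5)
Your proposal is correct and takes essentially the same route as the paper: the paper likewise starts from the character-sum representation recalled from the proof of Theorem \ref{prop:ginoe}, uses Schur positivity of $s_{\mu}(\bm z)$ for positive $\bm z$ to make \eqref{measPartsGinOE} a genuine probability distribution, obtains $\mathcal{Z}_{k}$ by running Lemma \ref{Schur_sum_repeated_partitions} with $f(l)=1/l!$ and no restriction on the first part, and quotes \cite{TZ14,K21,TZ23} for the CSE integral identity. No gaps.
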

In a similar manner, we provide a description for the orthogonal truncations. We define the probability distribution as
%kindred 
\begin{equation}
	p(\eta) 
	= 
	\begin{cases}
		\displaystyle{
			\frac{1}{\mathcal{Z}_{k}}s_{\eta}(\bm z)
			\prod^{k}_{j=1}
			\frac{(N - M + \eta_{2j} + 2(k - j))!}{(\eta_{2j} + 2(k - j))!},
		}
		&\eta' \text{ is even},
		\\
		0, \ & \text{otherwise}	
	\end{cases}
	\label{measPartsTOE}
\end{equation}
 where the normalization constant is
\begin{equation}
\begin{split}
	&\mathcal{Z}_{k} 
	= 
	\sum_{\substack{\eta, l(\eta) \leq 2k \\ \eta' \text{ even} }}
		s_{\eta}(\bm z)
		\prod^{k}_{j=1}
		\frac{(N - M + \eta_{2j} + 2(k - j))!}{\Gamma(\eta_{2j} + 2(k - j))!}\\
		&=\,[(N-M)!]^{k}\frac{\Pf \bigg\{ \frac{z_j - z_i}{(1-z_i z_j)^{N-M+1}} \bigg\}_{1 \leq i < j \leq 2k}}{\Delta(\bm z)}
	\\
	&=
	\left (
		\prod_{j=0}^{k-1}
		\frac{(N - M+2j)!}{(2j)!}
	\right )
	\int_{\mathrm{CSE}(k)}dU\,\det ( I_{2k}  - UZU^{\dagger}Z^{\mathrm{D}})^{ -\frac{1}{2} ( N - M + 2k - 1 )}. \label{zkcse}
\end{split}
\end{equation}
The above integration identity can be deduced by application of Lemma \ref{Schur_sum_repeated_partitions}, we postpone the details to a separate publication. 
\begin{corollary}
	When $\Omega = I_{N}$, we have the following probabilistic interpretation for the multi-point correlator in \eqref{toemult}:
\begin{align}
	&R^{\mathrm{TOE}}_{N,M}(\mathbf{z})=\mathcal{Z}_{k}D^{(1)}_{k}\mathbb{P}(\eta_{1} \leq N)
\end{align}
where the probability on the right-hand side is defined with respect to distribution \eqref{measPartsTOE}, $\mathcal{Z}_{k}$ is given by \eqref{zkcse} and $D^{(1)}_{k}$ is given by \eqref{dk1}.
\end{corollary}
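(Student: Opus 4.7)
The plan is to mirror the argument behind Theorem \ref{th:probrep} (and its TUE analogue in Section \ref{sec:prob}), now transplanted to the Pfaffian setting supplied by Theorem \ref{prop:TOE}. First I would extract the Schur sum produced inside the proof of Theorem \ref{prop:TOE}, stopping just before the final invocation of Lemma \ref{Schur_sum_repeated_partitions}. That intermediate identity reads
\begin{equation*}
R^{\mathrm{TOE}}_{N,M}(\bm z) = D^{(1)}_k \sum_{\substack{\mu: \, l(\mu) \leq 2k, \, \mu_1 \leq N \\ \mu' \text{ even}}} s_\mu(\bm z) \prod_{j=1}^k \frac{(N-M+\mu_{2j}+2(k-j))!}{(\mu_{2j}+2(k-j))!}.
\end{equation*}
Assuming the entries of $\bm z$ lie in $(0,1)$, Schur positivity makes each summand positive, and a direct comparison with \eqref{measPartsTOE} shows the summand is exactly $\mathcal{Z}_k\,p(\mu)$. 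Consequently the truncated sum equals $\mathcal{Z}_k\,\mathbb{P}(\eta_1 \leq N)$, and the desired identity $R^{\mathrm{TOE}}_{N,M}(\bm z) = \mathcal{Z}_k D^{(1)}_k\mathbb{P}(\eta_1 \leq N)$ follows at once.

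The second step is the closed-form evaluation of $\mathcal{Z}_k$ stated in \eqref{zkcse}. For the Pfaffian expression I would rerun the reasoning used in the proof of Theorem \ref{prop:TOE}: reindex by $l_j = \mu_j + 2k - j$, absorb the parity constraint $\mu_{2j}=\mu_{2j-1}$ into a tridiagonal Pfaffian as in the proof of Lemma \ref{Schur_sum_repeated_partitions}, and apply the Ishikawa-Wakayama identity of Lemma \ref{Cauchy_Binet_Pfaffian_identity}, but this time with no upper bound on $l_1$. The Pfaffian matrix entries then reduce to the closed-form generating series
\begin{equation*}
\sum_{l=0}^{\infty} \frac{(N-M+l)!}{l!}(z_i z_j)^l = \frac{(N-M)!}{(1-z_i z_j)^{N-M+1}},
\end{equation*}
which converges under the assumption $|z_i z_j| < 1$ already imposed above, and, after extracting the prefactor $[(N-M)!]^k$, produces the first Pfaffian expression in \eqref{zkcse}.

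The equivalence of this Pfaffian with the CSE group integral appearing in the last line of \eqref{zkcse} is the self-dual analogue of the Harish-Chandra-Itzykson-Zuber identity used in \eqref{hciz-norm-const}, and this is precisely the step the authors defer to a separate publication. It is also the main obstacle: everything else is a mechanical translation of the unitary/Ginibre arguments of Section \ref{sec:prob} into the Pfaffian framework, but matching an integral over CSE$(k)$ to the explicit $\Pf\{(z_j - z_i)/(1-z_i z_j)^{N-M+1}\}/\Delta(\bm z)$ expression requires either a direct self-dual Harish-Chandra-type calculation or a further application of Lemma \ref{Schur_sum_repeated_partitions} to the appropriate symplectic/CSE character expansion. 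Within the proof of the present corollary I would simply cite this identity and leave its independent derivation to the referenced separate work, so that the proof reduces to the Schur-expansion-as-event-probability reading of the first paragraph together with the Pfaffian normalization computed in the second.
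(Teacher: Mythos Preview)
Your approach is exactly the one the paper intends: the paper gives no explicit proof here, but the preceding analogous corollaries are justified by the sentence ``The proof of the below Corollary is identical to the proof of Theorem \ref{th:probrep} and we omit the details,'' and your write-up is precisely that transplant to the Pfaffian setting, including the deferral of the CSE integral identity.

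One small slip worth flagging: the intermediate Schur sum you quote from the proof of Theorem \ref{prop:TOE} carries the constraint $\mu_1 \leq M$, not $\mu_1 \leq N$ (the truncation size $M$ is what bounds the top row after passing to the complement partition). Tracking this through gives $\mathbb{P}(\eta_1 \leq M)$ rather than $\mathbb{P}(\eta_1 \leq N)$; the $N$ in the stated corollary appears to be a typo inherited from the parallel TUE statement, and you have simply reproduced it. The argument itself is unaffected.
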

Similar character expansions were obtained in \cite{FR07} for averages of $O(N)$ characteristic polynomials in the context of last passage percolation models.

\subsection{Duality formulas in real ensembles: general $\Omega$}
\label{sec:dual_integrals_real}
In this section we derive dual integral representations of the multi-point correlators of real and quaternionic ensembles. What is different from the previous section is that we now include the multiplicative perturbations $\Omega$. Starting with the real case, our goal is to evaluate
\begin{equation*}
	R^{\mathrm{GinOE}}_{N}(\mathbf{z};\Omega) = \mathbb{E}\left(\prod_{i=1}^{2k}\det(\Omega G-z_iI_{N})\right).
\end{equation*}
Before we proceed, let us introduce a few concepts we shall require. The generalized hypergeometric coefficients \eqref{generalized_pochhammer} obey the following relations, see \cite{F10book},
\begin{equation} \label{gen_pochammer_props}
	[u]^{(1)}_{2\eta}
	=
	2^{2|\eta|} [u/2]^{(2)}_{\eta} [(u+1)/2]^{(2)}_{\eta}
	, \qquad
	[u]_{\eta^2}^{(1)} = [u]_{\eta}^{(1/2)} [u-1]_{\eta}^{(1/2)}
	.
\end{equation}
Further, we introduce $\alpha$-deformations of \eqref{dpdef}
\begin{align}
	&d'_\eta(\alpha)
	=
	\prod_{(i,j) \in \eta} 
	(\alpha a(i,j) + l(i,j) + \alpha )
	,
	\\
	&
	h_\eta(\alpha)
	=
	\prod_{(i,j) \in \eta} 
	(\alpha a(i,j) + l(i,j) + 1 )
	.
\end{align}
In what follows we denote $d'_\eta(1) = d'_\eta$. Note that $d'_\eta = h_\eta(1)$. These coefficients with different values of $\alpha$ can be related as follows, see e.g.\ \cite{FR09}
\begin{equation} \label{dprime_h_properties}
	h_{\eta}(2) d'_{\eta}(2) = d'_{2\eta}
	, \quad
	2^{2|\eta|}h_\eta(1/2)d'_\eta(1/2)
	=
	d'_{\eta^2}, \quad
	d'_{\eta'}(\alpha)
	=
	\alpha^{|\eta|}h_\eta(\alpha^{-1})
	.
\end{equation}
Let us introduce Jack polynomials $P_\eta^{(\alpha)}(X)$ that can be defined as eigenfunctions of the differential operator 
\begin{align}
	\sum^N_{j=1}
	\left (
		x_j \frac{\partial}{\partial x_j}
	\right )^2
	+
	\frac{N-1}{\alpha}	
	\sum^N_{j=1}
	x_j \frac{\partial}{\partial x_j}
	+
	\frac{2}{\alpha}
	\sum_{1\leq j < k \leq N}
	\frac{x_j x_k}{x_j - x_k}
	\left (
		\frac{\partial}{\partial x_j}
		-
		\frac{\partial}{\partial x_k}
	\right ), \label{op}
\end{align}
with the additional structure
\begin{equation}
P_{\eta}^{(\alpha)}(X) = m_{\eta}(X)+\sum_{\mu < \eta}a^{(\alpha)}_{\mu\eta}m_{\mu}(X)
\end{equation}
where $m_{\mu}$ are the monomial symmetric functions, $a^{(\alpha)}_{\mu\eta}$ are coefficients that do not depend on $M$ and $\mu < \eta$ is the dominance ordering on partitions, see \cite{F10book, M82book,S89} for further details. In the Schur polynomial case $\alpha=1$ these coefficients are better known as Kostka numbers. Jack polynomials are evaluated at the identity, see e.g.\ \cite{FR09}, as
\begin{align} \label{Jack_polynomial_at_identity}
	P_\eta^{(\alpha)}(I_N)
	=
	\frac{\alpha^{|\eta|}[N/\alpha]_{\eta}^{(\alpha)}}{h_\eta(\alpha)}
	.
\end{align}
\begin{remark}
\label{rem:jack_degeneracy}
In the case $\alpha = \frac{1}{2}$ Jack polynomials are frequently evaluated on self-dual matrices that have doubly degenerate eigenvalues. It is convenient for the rest of the paper to \textit{define} the Jack polynomial for $\alpha = \frac{1}{2}$ to evaluate only the distinct eigenvalues, see \cite{FS09}. Hence if $X$ has eigenvalues $\bm x = (x_1,x_1,x_2,x_2,\ldots,x_k,x_k)$ we define $P^{(\frac{1}{2})}_{\eta}(X) := P^{(\frac{1}{2})}_{\eta}(x_1,x_2,\ldots,x_k)$. If $\bm y \in \mathbb{C}^{N}$, the generalized Cauchy identity gives (see \cite{F10book, Macdonald})
\begin{equation} \label{dual_Cauchy_identity_Jack}
\sum_{\eta, l(\eta) \leq k, \eta_1 \leq N} P^{(1/2)}_\eta(\bm x) P^{(2)}_{\eta'}(\bm y) = \sqrt{\det(I_{2Nk} + \bm x \otimes \bm y)}.
\end{equation}
\end{remark}
We begin with the duality formula for the real Ginibre ensemble, completing the proof of Theorem \ref{th:ginoe}.
\begin{proof}[Proof of Theorem \ref{th:ginoe}]
Proceeding as in the unperturbed case, we employ the dual Cauchy identity, which yields
\begin{equation}
\begin{split}
	R^{\mathrm{GinOE}}_{N}(\mathbf{z};\Omega)
	&=
	\mathbb{E}
	\left [
		\prod^{2k}_{j=1}
		\det  ( \Omega G - z_j I) 
	\right ]
	\\
	&=
	\det(Z)^{N}\sum_{\mu, l(\mu) \leq 2k, \mu_1 \leq N }
	s_\mu(\bm z^{-1})
	\mathbb{E}
	\left [ 
		s_{\mu'}(\Omega G)
	\right ]
	.
	\end{split}
\end{equation}
Now, we recognise the Schur function average as that computed by Forrester and Rains in \cite{FR09}, namely
\begin{equation} \label{GinOE_schur_average}
	\mathbb{E}\left [ s_\mu(\Omega G) \right ]
	=
	\delta_{\mu=2\eta} h_\eta(2)P_\eta^{(2)}(\Omega \Omega^T)
	.
\end{equation}	
Substituting this average into the sum over partitions we find
\begin{equation}
\begin{split}
	R^{\mathrm{GinOE}}_{N}(\mathbf{z};\Omega)
	&=
	\det(Z)^{N}
	\sum_{\mu, l(\mu) \leq 2k, \mu_1 \leq N }
	s_\mu(\bm z^{-1})
	\delta_{\mu'=2\eta'} h_{\eta'}(2)P_{\eta'}^{(2)}(\Omega \Omega^T)
	\\
	&=\det(Z)^{N}\sum_{\eta, l(\eta) \leq k, \eta_1 \leq N }
	s_{\eta^2}(\bm z^{-1})
	h_{\eta'}(2)P_{\eta'}^{(2)}(\Omega \Omega^T) \label{ginoesteps}
	\end{split}
\end{equation}
where to obtain the last line we have used that $(2\eta')'=\eta^2$. Next, we recognise the summand as the integral of Lemma \ref{zonal_func_gaussian_integral_antisymmetric}, that is
\begin{align}
	s_{\eta^2}(Z^{-1})\,h_{\eta'}(2) 
	=
	\frac{1}{C_{A}}
	\int_{\mathcal{A}_{2k}(\mathbb{C}) } dX e^{-\frac{1}{2} \Tr XX^\dag} P_{\eta}^{(1/2)} (Z^{-1} X Z^{-1} X^\dag)
\end{align}
where $C_{A} = \int_{\mathcal{A}_{2k}(\mathbb{C}) } dX e^{-\frac{1}{2} \Tr XX^\dag}$. Substituting this expression in \eqref{ginoesteps} and interchanging it with the summation we obtain
\begin{equation}
\begin{split}
	R^{\mathrm{GinOE}}_{N}(\mathbf{z};\Omega)
	=
	\det(Z)^{N}
	\frac{1}{C_{A}}
	&\int_{\mathcal{A}_{2k}(\mathbb{C}) } dX e^{-\frac{1}{2} \Tr XX^\dag}\\
	&\times\sum_{\eta, l(\eta) \leq k, \eta_1 \leq N }
	P_{\eta}^{(1/2)} (Z^{-1} X Z^{-1} X^\dag) 
	P_{\eta'}^{(2)}(\Omega \Omega^T)
	.
	\end{split}
\end{equation}
Computing the sum using the Cauchy identity \eqref{dual_Cauchy_identity_Jack} proves \eqref{rgingen}. Setting $\Omega = I_{N}$ in \eqref{rgingen}, similarly to the complex case, we recognise the Pfaffian of a block matrix leading directly to \eqref{rgindual}. Together with the Pfaffian evaluation derived in Section \ref{sec:closed_forms_real}, this completes the proof of Theorem \ref{th:ginoe}.
\end{proof}
We now conclude the proof of Theorem \ref{th:toe} regarding truncated orthogonal random matrices.
\begin{proof}[Proof of Theorem \ref{th:toe}]
As the proof is similar to the proof of Theorem \ref{th:ginoe}, we just give the main ideas. Having expanded the characteristic polynomial product \eqref{toemult} in terms of Schur functions, the relevant Schur function average generalizing \eqref{GinOE_schur_average} was computed in \cite{SS22} as 
\begin{align} \label{TOE_schur_average}
	\mathbb{E}
	\left [
		s_\mu (\Omega T)
	\right ]
	=
	\delta_{\mu = 2 \eta}
	\frac{2^{-|\eta|} h_{\eta}(2)}{\left [ N/2 \right ]^{(2)}_{\eta}}
	P_{\eta}^{(2)}(\Omega \Omega^T).
\end{align}
Substituting this average in the sum over partitions we have
\begin{equation}
	R^{\mathrm{TOE}}_{N,M}(\mathbf{z};\Omega)=\det(Z)^{M}\sum_{\eta, l(\eta) \leq k, \eta_1 \leq M }\frac{s_{\eta^2}(\bm z^{-1}) h_{\eta'}(2)}{ (-1)^{|\eta|} \left [ -N \right ]^{(1/2)}_{\eta}}P_{\eta'}^{(2)}(\Omega \Omega^T) \label{toe_char_sum}
\end{equation}
where we used the transposition property \eqref{gen_hypergeom_coef_transposition_property} of $[u]^{(\alpha)}_{\eta'}$. Next we recognise the summand as the integral of Lemma \ref{zonal_func_jacobi_dual_integral_antisymmetric}, namely the coefficient $\frac{s_{\eta^2}(\bm z^{-1}) h_{\eta'}(2)}{ (-1)^{|\eta|} \left [ -N \right ]^{(1/2)}_{\eta}}$ is equal to
\begin{equation}
\begin{split}
\frac{1}{S^{(1)}_{k}}\int_{\mathcal{A}_{2k}(\mathbb{C})} dX\det ( I_{2k} + XX^\dag)^{-N/2+1-2k} P_\eta^{(1/2)}(Z^{-1} X Z^{-1} X^\dag).
	\end{split}
\end{equation} 
Substituting this integral in \eqref{toe_char_sum} and computing the sum over partitions with the help of the generalized Cauchy identity yields the result.  The explicit computation of the normalization constant $S^{(1)}_{k}$ is performed in Lemma \ref{lem:threeints}.
\end{proof}

\subsection{Quaternionic ensembles}
\label{sec:quat}
Consider the multi-point correlator
\begin{equation}
R^{\mathrm{GinSE}}_{N}(\mathbf{z};\Omega) = \mathbb{E}\left ( \prod^{2k}_{j=1} \det  ( \Omega G -z_{j} I_{2N} ) \right) \label{ginse_corr}
\end{equation}
where the average is over a matrix $G$ from the quaternionic Ginibre ensemble. For background on the quaternionic Ginibre ensemble, we refer the reader to the review article \cite{BY23}. As this ensemble has a natural symplectic symmetry it is also referred to as Ginibre Symplectic Ensemble (GinSE). Here, we will adopt the same underlying Gaussian measure as in \cite{FR09}. The character expansion approach here goes through similarly to the real case so we just present the main differences. The required Schur function average was obtained in \cite{FR09} as
\begin{equation}
\mathbb{E}(s_{\mu}(\Omega G)) = \delta_{\mu = \eta^{2}}h_{\eta}(1/2)P^{(1/2)}_{\eta}(\Omega \Omega^{\dagger}). \label{schur_ginse}
\end{equation}
As the above gets applied to the conjugate partition $\mu'$, we need an analogue of Lemma \ref{Schur_sum_repeated_partitions} where we sum over even partitions rather than repeated partitions.
\begin{lemma} \label{Schur_sum_even_partitions}
	Let $\bm x \in \mathbb{C}^{2k}$ and let $f:\mathbb{N} \to \mathbb{C}$. Then
	\begin{align*}
		&\sum_{\substack{\mu, l(\mu) \leq 2k, \mu_1 \leq 2N \\ \mu\,\mathrm{even}}}
		s_\mu (\bm x)
		\prod_{j=1}^{2k} f(\mu_{j}+2k-j)\\
		&=\frac{\Pf \bigg\{\sum_{0 \leq l \leq p \leq N+k-1} f(2l) f(2p-1) (x_i^{2p} x_j^{2l+1} - x_i^{2l+1} x_j^{2p}) \bigg\}_{1 \leq i < j \leq 2k}}{\Delta( \bm x )}.
	\end{align*}
\end{lemma}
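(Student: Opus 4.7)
The plan is to follow the template of Lemma \ref{Schur_sum_repeated_partitions}, replacing the rigid pairing $l_{2i-1} = l_{2i}+1$ with a parity condition encoded into the antisymmetric matrix of the Ishikawa-Wakayama identity. First I would substitute the determinantal formula \eqref{schur_function_determinant_representation} for $s_{\mu}(\bm{x})$ and change variables $l_{j} := \mu_{j} + 2k - j$. The condition that $\mu$ is even is equivalent to $l_{j} \equiv j \pmod{2}$, while $0 \leq \mu_{2k}$ and $\mu_{1} \leq 2N$ read $0 \leq l_{2k} < \cdots < l_{1} \leq 2N+2k-1$. Reversing to an increasing sequence $m_{j} := l_{2k-j+1}$ brings in a sign $(-1)^{k}$ from the column reversal of the determinant and produces $m_{j} \not\equiv j \pmod{2}$. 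Setting $p_{j} := m_{j}+1$, the left-hand side becomes
\begin{equation*}
\frac{(-1)^{k}}{\Delta(\bm{x})}
\sum_{\substack{1 \leq p_{1} < \cdots < p_{2k} \leq 2N+2k \\ p_{j} \equiv j \,(\mathrm{mod}\,2)}}
\det\bigl\{x_{i}^{p_{j}-1}\bigr\}_{i,j=1}^{2k}
\prod_{j=1}^{2k} f(p_{j}-1).
\end{equation*}

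Next I would apply Lemma \ref{Cauchy_Binet_Pfaffian_identity} with $B_{i,q} := x_{i}^{q-1}$ of size $2k \times (2N+2k)$ and the $(2N+2k) \times (2N+2k)$ antisymmetric matrix
\begin{equation*}
A_{a,b} := f(a-1)\, f(b-1), \qquad 1 \leq a < b \leq 2N+2k, \quad a \text{ odd}, \ b \text{ even},
\end{equation*}
extended antisymmetrically and zero otherwise. The core of the argument is the identity
\begin{equation*}
\Pf\bigl\{A_{p_{i},p_{j}}\bigr\}_{1 \leq i < j \leq 2k} = \Bigl(\prod_{j=1}^{2k} f(p_{j}-1)\Bigr) \cdot \mathbb{1}\bigl[p_{j} \equiv j \pmod{2}\ \text{for every } j\bigr].
\end{equation*}
I would prove this by reordering rows and columns of the submatrix so that positions carrying odd $p_{j}$ come first, producing a block form $\bigl(\begin{smallmatrix} 0 & U \\ -U^{T} & 0 \end{smallmatrix}\bigr)$ with $U_{a,b} = f(p_{o_{a}}-1) f(p_{e_{b}}-1)\, \mathbb{1}[p_{o_{a}} < p_{e_{b}}]$, where $\{o_{a}\}$ and $\{e_{b}\}$ enumerate the positions of odd and even values. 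Factoring the $f$-weights out of rows and columns reduces $\det U$ to the $0/1$ determinant $\det[\mathbb{1}[p_{o_{a}} < p_{e_{b}}]]$; a short case analysis shows that this equals $1$ precisely when the sorted $p_{j}$'s have odd and even values strictly interlaced (the alternating pattern $p_{j}\equiv j \pmod 2$), in which case the indicator matrix becomes the upper-triangular matrix of ones, and $0$ otherwise. The sign $(-1)^{k(k-1)/2}$ from the block Pfaffian formula cancels the sign of the reordering permutation and leaves the positive product.

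By Lemma \ref{Cauchy_Binet_Pfaffian_identity} this identifies the sum above with $\frac{(-1)^{k}}{\Delta(\bm{x})}\Pf(BAB^{T})$. Using the antisymmetry of $A$, the entries of $BAB^{T}$ expand as
\begin{equation*}
(BAB^{T})_{ij} = \sum_{\substack{1 \leq a < b \leq 2N+2k \\ a \text{ odd}, b \text{ even}}}
f(a-1)\, f(b-1)\bigl(x_{i}^{a-1} x_{j}^{b-1} - x_{i}^{b-1} x_{j}^{a-1}\bigr),
\end{equation*}
which, after substituting $a = 2l+1$, $b = 2p$ and relabeling the summation variables, matches the Pfaffian entry in the statement. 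The main technical obstacle is establishing the parity-selection identity for $\Pf\{A_{p_{i},p_{j}}\}$: this is where the proof genuinely departs from Lemma \ref{Schur_sum_repeated_partitions}, since the nearest-neighbour delta $\delta_{i+1,j}$ used there is replaced by the rank-one-plus-triangular structure above, and one has to control the sign bookkeeping between the column reversal, the block-Pfaffian formula, and the reordering permutation so that the net coefficient comes out to $+1$.
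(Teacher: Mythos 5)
Your proposal is correct and follows essentially the same route as the paper: expand the Schur function as a ratio of determinants, convert the evenness of $\mu$ into an alternating parity condition on the shifted indices $l_j=\mu_j+2k-j$, encode that condition in the antisymmetric matrix of the Ishikawa--Wakayama identity, and read off the kernel from $BAB^{\mathrm{T}}$ (whether the weights $f$ sit in $A$ or in $B$ is immaterial). The one place you work harder than necessary is the parity-selection identity: your submatrix $\{A_{p_i,p_j}\}$ is already of the rank-one form $a_ib_j$ with $a_i=f(p_i-1)\mathbbm{1}[p_i\ \mathrm{odd}]$ and $b_j=f(p_j-1)\mathbbm{1}[p_j\ \mathrm{even}]$, so the paper's identity $\prod_{j=1}^{k}a_{2j-1}b_{2j}=\Pf\bigl\{a_ib_j\bigr\}_{1\le i<j\le 2k}$ yields it in one line and with a plus sign --- which means the $(-1)^{k}$ you carry from the column reversal is not cancelled by Pfaffian reordering as you claim, but rather by the convention $\Delta(\bm x)=\det\{x_i^{j-1}\}$ (versus $\det\{x_i^{2k-j}\}$) implicit in \eqref{schur_function_determinant_representation}, a bookkeeping point the paper is equally casual about.
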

\begin{proof}
Following the same steps as for the proof of Lemma \ref{Schur_sum_repeated_partitions}, here we need to evaluate
\begin{equation}
\sum_{\mu, l(\mu) \leq 2k, \mu_1 \leq 2N}\det\bigg\{x_{i}^{\mu_{j}+2k-j}\bigg\}_{i,j=1}^{2k}\prod_{j=1}^{2k} f(\mu_{j}+2k-j)\prod_{j=1}^{2k}\mathbbm{1}_{\mu_{j} \,\mathrm{even}} \label{charsum_quat}
\end{equation}
and we begin with the same substitution $l_{j} = \mu_{j}+2k-j$ for $j=1,\ldots,2k$. Note that the partition $\mu$ is even if and only if $l_{2j-1}$ is odd and $l_{2j}$ is even for $j=1,\ldots,k$. We write the product of indicator functions in \eqref{charsum_quat} as the Pfaffian
\begin{equation}
\prod_{j=1}^{k}\mathbbm{1}_{l_{2j-1}\,\mathrm{odd}}\mathbbm{1}_{l_{2j}\,\mathrm{even}} = \mathrm{Pf}\bigg\{\mathbbm{1}_{l_{i}\,\mathrm{odd}}\mathbbm{1}_{l_{j}\,\mathrm{even}}\bigg\}_{1 \leq i < j \leq 2k}.
\end{equation}
This identity follows from the fact that for generic sequences $\{a_{j}\}_{j=1}^{2k-1}$, $\{b_{j}\}_{j=2}^{2k}$, we have
\begin{equation}
\prod_{j=1}^{k}a_{2j-1}b_{2j} = \mathrm{Pf}\bigg\{a_{i}b_{j}\bigg\}_{1 \leq i < j \leq 2k},
\end{equation}
which follows by induction on $k$ and Laplace expansion. Therefore, after shifting indices by $1$ and reordering, we see that \eqref{charsum_quat} is equal to
\begin{equation}
\sum_{1 \leq l_{1} < \ldots < l_{2k} \leq 2N+2k}\det\bigg\{x_{i}^{l_{j}-1}f(l_{j}-1)\bigg\}_{i,j=1}^{2k}\mathrm{Pf}\bigg\{\mathbbm{1}_{l_{i}\,\mathrm{odd}}\mathbbm{1}_{l_{j}\,\mathrm{even}}\bigg\}_{1 \leq i < j \leq 2k}.
\end{equation}
Now the proof is completed by applying Lemma \ref{Cauchy_Binet_Pfaffian_identity} with matrix elements
\begin{equation}
\begin{split}
A_{ij} &= \mathbbm{1}_{i\,\mathrm{odd}}\mathbbm{1}_{j\,\mathrm{even}}, \hspace{6pt}\qquad 1 \leq i < j \leq 2N+2k,\\
B_{ij} &= x_{i}^{j-1}f(j-1)	, \qquad i=1,\ldots,2k, \quad j=1,\ldots,2N+2k.
\end{split}
\end{equation}
\end{proof}
Applying the above Lemma, we obtain Pfaffian formulas for multi-point correlators.
\begin{theorem}
For $\bm z \in \mathbb{C}^{2k}$ we have
	\begin{equation}
		R^{\mathrm{GinSE}}_{N}(\mathbf{z}) = \left(\prod^{2k}_{j=1} \Gamma(N+j/2+1/2)\right)\,\frac{\Pf\bigg\{B^{\mathrm{GinSE}}_{N+k}(z,w)\bigg\}_{1 \leq i < j \leq 2k}}{\Delta(\bm z)}
\end{equation}
where the kernel is
\begin{equation}
B^{\mathrm{GinSE}}_{N}(z,w) = \sum_{l=0}^{N-1}\sum_{p=0}^{l}\frac{1}{p!\Gamma(l+3/2)}\,(z^{2l}w^{2p+1}-w^{2l}z^{2p+1}). \label{ginsekernel}
\end{equation}
\end{theorem}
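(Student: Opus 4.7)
My plan is to mirror the proofs of Theorems \ref{prop:ginoe} and \ref{prop:TOE} for the real ensembles, using Lemma \ref{Schur_sum_even_partitions} as the key new ingredient in place of Lemma \ref{Schur_sum_repeated_partitions}. I would first expand the product of characteristic polynomials via the dual Cauchy identity \eqref{dual_Cauchy_identity},
\begin{equation*}
R^{\mathrm{GinSE}}_N(\bm z) = \det(Z)^{2N} \sum_{\mu, l(\mu) \leq 2k, \mu_1 \leq 2N} s_\mu(-\bm z^{-1}) \, \mathbb{E}(s_{\mu'}(G)),
\end{equation*}
and then apply the quaternionic Schur average \eqref{schur_ginse}, which enforces $\mu' = \eta^2$ (equivalently $\mu = 2\eta'$), collapsing the sum onto even partitions $\mu$; since $|\mu| = 2|\eta|$ is even, no residual signs survive.

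Next, I would evaluate the Jack polynomial at the identity through \eqref{Jack_polynomial_at_identity} (using the convention of Remark \ref{rem:jack_degeneracy}), obtaining $h_\eta(1/2)\,P^{(1/2)}_\eta(I) = (1/2)^{|\eta|}[2N]^{(1/2)}_\eta$. I would then convert $s_\mu(\bm z^{-1}) = \det(Z)^{-2N} s_{\tilde\mu}(\bm z)$ via the inversion formula \eqref{schurinv}; since $2N$ is even, the complement of an even partition in the $2k \times 2N$ rectangle is again even, so after relabeling $\tilde\mu \to \mu$ the summation range is unchanged. The crucial step is then to translate the partition-dependent coefficient into the product form $\prod_{j=1}^{2k} f(\mu_j + 2k - j)$ required by Lemma \ref{Schur_sum_even_partitions}. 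Using the box-product representation of the hypergeometric coefficient together with the transposition identity \eqref{gen_hypergeom_coef_transposition_property} at $\alpha = 1/2$, and passing to the complement of $\mu$, one obtains a partition-independent prefactor $\prod_{j=1}^{2k} \Gamma(N + j/2 + 1/2)$ together with the choice $f(m) = 1/\Gamma(m/2 + 1)$, which splits by parity as $f(2l) = 1/l!$ and $f(2l+1) = 1/\Gamma(l + 3/2)$.

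At this stage Lemma \ref{Schur_sum_even_partitions} applies directly and produces a Pfaffian. Writing out the entries with the above $f$ and reordering the summation variables matches precisely the kernel $B^{\mathrm{GinSE}}_{N+k}(z_i, z_j)$ in \eqref{ginsekernel}, and combined with the prefactor identified above gives the claimed formula.

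The main technical obstacle will be the bookkeeping in the coefficient-manipulation step: one has to navigate between the partitions $\mu$, $\mu'$, $\eta$, $\eta^2$, $2\eta'$ and the complement $\tilde\mu$, track the correct enclosing rectangles (so that $\tilde\mu$ remains even and the hypergeometric factor gains the correct $2k$ column products), and verify that the $\alpha = 1/2$ transposition rule delivers the required $\prod_j f(\mu_j + 2k - j)$ structure cleanly, without residual signs from negative Gamma-function arguments. Once that translation is carried out, both the Pfaffian evaluation and the identification of the prefactor reduce to routine checks in full analogy with the real Ginibre case of Theorem \ref{prop:ginoe}.
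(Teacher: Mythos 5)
Your proposal follows essentially the same route as the paper's own proof: expand via the dual Cauchy identity, apply the quaternionic Schur average \eqref{schur_ginse} with the Jack-polynomial evaluation at the identity to reduce to a sum over even partitions with coefficient $2^{-|\eta|}[2N]^{(1/2)}_{\eta'}$, rewrite that coefficient through Lemma \ref{lem:coeff} and the complement partition to extract the prefactor $\prod_{j=1}^{2k}\Gamma(N+j/2+1/2)$ and the weight $f(m)=1/\Gamma(m/2+1)$, and finish with Lemma \ref{Schur_sum_even_partitions}. The steps, including the parity split of $f$ that produces the kernel \eqref{ginsekernel}, match the paper's argument.
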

\begin{proof}
For $\Omega = I_{2N}$, the result \eqref{schur_ginse} is
\begin{equation}
\mathbb{E}(s_{\mu'}(G)) = \delta_{\mu'=2\eta'}2^{-|\eta|}[2N]^{(1/2)}_{\eta'}.
\end{equation}
We apply Lemma \ref{lem:coeff}, equation \eqref{coeff_r} with $\alpha=2$, so that
\begin{equation}
2^{-|\eta|}[2N]^{(1/2)}_{\eta'} = \prod_{j=1}^{2k}\frac{\Gamma(N+(j-1)/2+1)}{\Gamma(\tilde{\eta}_{j}+(2k-j)/2+1)}.
\end{equation}
Then we have
\begin{equation}
R^{\mathrm{GinSE}}_{N}(\mathbf{z}) = \sum_{\substack{\mu : l(\mu) \leq 2k, \mu_{1} \leq 2N\\ \mu\,\mathrm{even}}}s_{\mu}(\bm z)\prod_{j=1}^{2k}\frac{\Gamma(N+j/2+1/2)}{\Gamma((\mu_{j}+2k-j)/2+1)}. \label{ginse_char}
\end{equation}
The proof is completed by applying Lemma \ref{Schur_sum_even_partitions} with
\begin{equation}
f(j) = \frac{1}{\Gamma(j/2+1)}.
\end{equation}
\end{proof}
The Pfaffian expression above recovers a result of Akemann and Basile \cite{AB07}. Their approach differs from ours through their explicit use of the joint probability density function of eigenvalues of $G$. The kernel \eqref{ginsekernel} was first discovered in the context of eigenvalue correlations of symplectic ensembles in \cite{K02}. Using the representation as a character sum in \eqref{ginse_char} one can write an analogous probabilistic interpretation that we obtained in the real case in Corollary \ref{cor:prob_real_ginoe}, we omit the details.

Dualities of the type \eqref{rgindual} and \eqref{toedual} also hold in the quaternionic case, the main difference being that the dual integrals are over complex symmetric matrices instead of anti-symmetric matrices. Let $\mathcal{S}_{2k}(\mathbb{C})$ denote the space of $2k \times 2k$ symmetric matrices with complex entries and let $dX$ denote the Lebesgue measure on the independent entries of $X \in \mathcal{S}_{2k}(\mathbb{C})$. On the space $\mathcal{S}_{2k}(\mathbb{C})$ we put the Gaussian measure
\begin{equation}
\mu_{k}(dX) = \frac{1}{4^{k}\pi^{k(2k+1)}}\,e^{-\frac{1}{2}\mathrm{Tr}(XX^{\dagger})}\,dX \label{ginse_dual_meas}
\end{equation}
where $dX$ is the Lesbesgue measure on the upper triangular and diagonal elements of $X$.
\begin{theorem}
 \label{prop_GinSE}
Let $\bm z \in \mathbb{C}^{2k}$. Then with $\mathbb{E}_{X}$ denoting the average with respect to \eqref{ginse_dual_meas}, we have
	\begin{equation}
		R^{\mathrm{GinSE}}_{N}(\mathbf{z};\Omega) = \det(Z)^{2N}\mathbb{E}_{X}(\mathrm{det}( I_{4Nk} + \Omega \Omega^\dag \otimes Z^{-1} X Z^{-1} X^\dag)^{\frac{1}{2}})  \label{ginsef1}
	\end{equation}
	and for $\Omega = I_{2N}$ \eqref{ginsef1} equals
	\begin{equation} \label{characteristic_polynomial_average_GinSE}
		4^{-k}\pi^{-k(2k+1)}
		\int_{\mathcal{S}_{2k}(\mathbb{C})} dX e^{- \frac{1}{2} \Tr XX^\dag }
		\det \left (
	\begin{smallmatrix}
			X && Z\\
			-Z && X^\dag
		\end{smallmatrix}
	\right )^{N}.
	\end{equation}
\end{theorem}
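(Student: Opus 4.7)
The plan is to mirror the proof of Theorem~\ref{th:ginoe}, with the ensemble $\mathcal{S}_{2k}(\mathbb{C})$ of complex symmetric matrices now playing the role that $\mathcal{A}_{2k}(\mathbb{C})$ played in the orthogonal case; this swap reflects the passage from the orthogonal to the symplectic symmetry class. First, I would expand the characteristic polynomial product in \eqref{ginse_corr} using the dual Cauchy identity \eqref{dual_Cauchy_identity}, arriving at
\begin{equation*}
R^{\mathrm{GinSE}}_{N}(\bm z;\Omega) = \det(Z)^{2N}\sum_{\mu,\, l(\mu)\leq 2k,\, \mu_{1}\leq 2N} s_{\mu}(\bm z^{-1})\,\mathbb{E}[s_{\mu'}(\Omega G)].
\end{equation*}
Applying the Schur average \eqref{schur_ginse} forces $\mu'=\eta^{2}$, and the identity $(2\lambda)'=(\lambda')^{2}$ then tells us that $\mu = 2\lambda$ is an even partition with $\eta = \lambda'$; so the character expansion collapses to
\begin{equation*}
\det(Z)^{2N}\sum_{\lambda,\, l(\lambda)\leq 2k,\, \lambda_{1}\leq N} s_{2\lambda}(\bm z^{-1})\,h_{\lambda'}(1/2)\,P^{(1/2)}_{\lambda'}(\Omega\Omega^{\dagger}).
\end{equation*}

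The main technical obstacle is to establish the symplectic analogue of the Gaussian zonal integral identity used in the real Ginibre case, namely
\begin{equation*}
s_{2\lambda}(Z^{-1})\,h_{\lambda'}(1/2) = \frac{1}{4^{k}\pi^{k(2k+1)}}\int_{\mathcal{S}_{2k}(\mathbb{C})} dX\, e^{-\frac{1}{2}\Tr XX^{\dagger}}\,P^{(2)}_{\lambda}(Z^{-1}XZ^{-1}X^{\dagger}).
\end{equation*}
Notice the systematic exchange $1/2 \leftrightarrow 2$ between this symmetric-matrix identity and its antisymmetric counterpart; this reflects the fact that $\mathcal{S}_{2k}(\mathbb{C})$ is naturally associated to the symmetric space $U(2k)/O(2k)$, whose zonal spherical functions are Jack polynomials with $\alpha=2$. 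I expect the derivation to run along the same lines as in the antisymmetric case: exploit $U(2k)$-invariance of the Gaussian measure to diagonalize $XX^{\dagger}$, reduce to a radial Laguerre-type integral, and then match the two sides by expanding in a Jack basis and using the evaluation \eqref{Jack_polynomial_at_identity} together with the relations \eqref{dprime_h_properties}.

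Granted this identity, substitute it back into the character sum and interchange the summation with the integration. The inner sum then evaluates by the generalized Cauchy identity \eqref{dual_Cauchy_identity_Jack}, applied with the roles of $\alpha=1/2$ and $\alpha=2$ exchanged and with the $N$ distinct eigenvalues of the self-dual matrix $\Omega\Omega^{\dagger}$, giving
\begin{equation*}
\sum_{\lambda,\, l(\lambda)\leq 2k,\, \lambda_{1}\leq N} P^{(2)}_{\lambda}(Z^{-1}XZ^{-1}X^{\dagger})\,P^{(1/2)}_{\lambda'}(\Omega\Omega^{\dagger}) = \det(I_{4Nk}+\Omega\Omega^{\dagger}\otimes Z^{-1}XZ^{-1}X^{\dagger})^{1/2},
\end{equation*}
which establishes \eqref{ginsef1}.

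Finally, specializing to $\Omega=I_{2N}$ makes the tensor factor block-diagonal, so $\det(I_{4Nk}+I_{2N}\otimes Z^{-1}XZ^{-1}X^{\dagger}) = \det(I_{2k}+Z^{-1}XZ^{-1}X^{\dagger})^{2N}$ and the square root leaves the $N$-th power. A Schur complement applied to the block matrix (row-reducing against the $-Z$ pivot) yields
\begin{equation*}
\det\begin{pmatrix}X & Z\\ -Z & X^{\dagger}\end{pmatrix} = \det(Z)\det(Z+XZ^{-1}X^{\dagger}) = \det(Z)^{2}\det(I_{2k}+Z^{-1}XZ^{-1}X^{\dagger}),
\end{equation*}
so raising to the $N$-th power converts $\det(Z)^{2N}\det(I_{2k}+Z^{-1}XZ^{-1}X^{\dagger})^{N}$ into the block determinant appearing in \eqref{characteristic_polynomial_average_GinSE}, completing the proof.
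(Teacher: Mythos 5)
Your proposal is correct and follows essentially the same route as the paper: dual Cauchy expansion combined with the Schur average \eqref{schur_ginse}, then the Gaussian zonal integral over $\mathcal{S}_{2k}(\mathbb{C})$ (which is exactly identity \eqref{sint} of Lemma \ref{zonal_func_gaussian_integral_antisymmetric}, proved there by the invariance/diagonalization/Laguerre argument you sketch), and finally resummation via the generalized Cauchy identity \eqref{dual_Cauchy_identity_Jack} and the block-determinant reduction for $\Omega = I_{2N}$. Your bookkeeping of the partition constraints ($l(\lambda)\leq 2k$, $\lambda_1\leq N$) and the Schur-complement computation are both correct.
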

\begin{proof}
Applying the dual Cauchy identity, we have
\begin{equation}
R^{\mathrm{GinSE}}_{N}(\bm{z},\Omega) = \det(Z)^{2N}\sum_{\eta, l(\eta)\leq k, \eta_{1}\leq 2N}s_{2\eta}(Z^{-1})h_{\eta'}(1/2)P_{\eta'}^{(1/2)}(\Omega\Omega^{\dagger}). \label{ginse_omega}
\end{equation}
Using integral identity \eqref{sint}, we recognise the summand as
\begin{equation}
\begin{split}
s_{2\eta}(Z^{-1})h_{\eta'}(1/2)=\frac{1}{C_{S}}\int_{\mathcal{S}_{2k}(\mathbb{C}) } dX e^{-\Tr XX^\dag} P_{\eta}^{(2)} (Z^{-1} X Z^{-1} X^\dag). \label{complex-sym}
\end{split}
\end{equation}
Inserting \eqref{complex-sym} into \eqref{ginse_omega} and summing using the generalized Cauchy identity \eqref{dual_Cauchy_identity_Jack} completes the proof.
\end{proof}
For a derivation of \eqref{characteristic_polynomial_average_GinSE} using Grassmann calculus, see \cite{NK02}. We now give the corresponding result for truncations of symplectic unitary matrices. The main difference with the Ginibre ensemble is that the duality comes from a different integral representation for the coefficients that arise in the character expansion (see \eqref{s2kintJacobi}). Consider the following dual measure on the space $\mathcal{S}_{2k}(\mathbb{C})$
\begin{equation}
\mu_{k}(dX) = \frac{1}{S_{k}^{(4)}}\,\det ( I_{2k} + XX^\dag)^{-N-1-2k}\, dX \label{tse-meas}
\end{equation}
where $S_{k}^{(4)} = \int_{\mathcal{S}_{2k}(\mathbb{C}) } dX \det ( I_{2k} + XX^\dag)^{-N-1-2k}$. 
\begin{theorem}\label{prop_TSE}
Let $T$ be an $M \times M$ truncation of a $N \times N$ Haar distributed symplectic matrix. Then with $\mathbb{E}_{X}$ denoting the average with respect to \eqref{tse-meas} over $\mathcal{S}_{2k}(\mathbb{C})$, we have
	\begin{equation}
	\begin{split}
		R_{N,M}^{\mathrm{TSE}}(\bm z;\Omega) =(\det(Z))^{2M} \mathbb{E}_{X}(\mathrm{det}(I_{4kM} + \Omega \Omega^\dag \otimes Z^{-1} X Z^{-1} X^\dag)^{\frac{1}{2}}),
		\end{split}
	\end{equation}
	and for $\Omega = I_{N}$
	\begin{align}
		R_{N,M}^{\mathrm{TSE}}(\bm z)
		&=
		\frac{1}{S_{k}^{(4)}}
		\int_{\mathcal{S}_{2k}(\mathbb{C}) } dX \det ( I_{2k} + XX^\dag)^{-N-1-2k}
		\det 
		\left (
			\begin{smallmatrix}
				X && Z\\
				-Z && X^\dag
			\end{smallmatrix}
		\right )^{M}\\
		&=D_{k}^{(4)}\frac{\Pf\bigg\{B^{\mathrm{TSE}}_{M+k}(z_i,z_j)\bigg\}_{1 \leq i < j \leq 2k}}{\Delta(\bm z)}
	\end{align}
	where the kernel is
	\begin{equation}
B^{\mathrm{TSE}}_{M}(z,w) = \sum_{l=0}^{M-1}\sum_{p=0}^{l}\frac{(N-M+p)!}{p!}\frac{\Gamma(N-M+l+3/2)}{\Gamma(l+3/2)}(z^{2l}w^{2p+1}-z^{2p}w^{2l+1}).\label{tsekernel}
\end{equation}
\end{theorem}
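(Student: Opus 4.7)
The plan is to parallel the proof of Theorem \ref{prop_GinSE} for the duality part and the proof of Theorem \ref{prop:TOE} for the closed-form Pfaffian, using the symmetric-matrix dual measure \eqref{tse-meas} in place of the Gaussian one and the truncated-symplectic Schur average of \cite{SS22} in place of \eqref{schur_ginse}. First, expanding $\prod_{j=1}^{2k}\det(\Omega T - z_j I_M)$ via the dual Cauchy identity \eqref{dual_Cauchy_identity} and applying the TSE Schur average, which forces $\mu = \eta^{2}$ to be a repeated partition, yields after use of the transposition identity \eqref{gen_hypergeom_coef_transposition_property} at $\alpha = 2$ a character sum
\begin{equation*}
R^{\mathrm{TSE}}_{N,M}(\bm z;\Omega) = \det(Z)^{2M}\sum_{\eta,\,l(\eta)\leq k,\,\eta_{1}\leq 2M}\,\frac{s_{\eta^{2}}(\bm z^{-1})\,h_{\eta'}(1/2)}{(-1)^{|\eta|}[-N]^{(2)}_{\eta}}\,P^{(1/2)}_{\eta'}(\Omega\Omega^{\dagger}),
\end{equation*}
which mirrors \eqref{toe_char_sum} in the TOE case.

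Next I would establish the symmetric-matrix Jacobi analogue of Lemma \ref{zonal_func_jacobi_dual_integral_antisymmetric}, namely an identity of the form
\begin{equation*}
\frac{s_{\eta^{2}}(Z^{-1})\,h_{\eta'}(1/2)}{(-1)^{|\eta|}[-N]^{(2)}_{\eta}} = \frac{1}{S^{(4)}_{k}}\int_{\mathcal{S}_{2k}(\mathbb{C})} dX\,\det(I_{2k}+XX^{\dagger})^{-N-1-2k}\,P^{(2)}_{\eta}(Z^{-1}XZ^{-1}X^{\dagger}),
\end{equation*}
which is the Jacobi counterpart on $\mathcal{S}_{2k}(\mathbb{C})$ of the Gaussian identity \eqref{complex-sym} used in the proof of Theorem \ref{prop_GinSE}, and can be proved by the same zonal-spherical-function reduction. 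Interchanging the sum and integral and applying the generalized Cauchy identity \eqref{dual_Cauchy_identity_Jack} with $(\alpha,1/\alpha) = (2,1/2)$ then collapses the partition sum to $\det(I + \Omega\Omega^{\dagger}\otimes Z^{-1}XZ^{-1}X^{\dagger})^{1/2}$, producing the general-$\Omega$ duality. For $\Omega = I_{N}$ the block-matrix identity used in \eqref{detblockgin} converts the square-root determinant into $\det\bigl(\begin{smallmatrix} X & Z \\ -Z & X^{\dagger}\end{smallmatrix}\bigr)^{M}$, and the constant $S^{(4)}_{k}$ is evaluated as a complex-symmetric Selberg integral along the lines of Lemma \ref{lem:threeints}.

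For the closed-form Pfaffian I would return to the character sum with $\Omega = I_{N}$, evaluate $P^{(1/2)}_{\eta'}(I_{N})$ explicitly via \eqref{Jack_polynomial_at_identity} and simplify the resulting coefficient using \eqref{gen_pochammer_props} and \eqref{dprime_h_properties}. After passing to the complement partition $\tilde{\mu}$ with $\mu = \eta^{2}$ via \eqref{schurinv}, the coefficient of $s_{\tilde{\mu}}(\bm z)$ should assume the form $\prod_{j=1}^{2k} f(\tilde{\mu}_{j} + 2k - j)$ multiplied by a $j$-independent factor that accumulates to $D^{(4)}_{k}$, where $f$ is a gamma-ratio of the type
\begin{equation*}
f(j) = \frac{\Gamma(N - M + j/2 + 1)}{\Gamma(j/2 + 1)},
\end{equation*}
the precise form being fixed by comparison with the symplectic group limit $M = N$. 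Lemma \ref{Schur_sum_even_partitions} applied with this $f$ then produces the Pfaffian with kernel matching \eqref{tsekernel}, in direct analogy with the GinSE derivation from $f(j) = 1/\Gamma(j/2+1)$.

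The principal obstacle is the combinatorial bookkeeping required to recast the coefficient produced by the Schur average as a product over the parts of $\tilde{\mu}$ governed by a single function of $\tilde{\mu}_{j} + 2k - j$: this relies on repeated use of the identities $(2\eta)' = (\eta')^{2}$, \eqref{gen_pochammer_props} and \eqref{dprime_h_properties}, with particular care for the odd-index factors involving half-integer shifts $\Gamma(\tilde{\mu}_{2j-1}/2 + \cdot)$ relative to the even-index factors. A secondary obstacle is the symmetric-matrix Jacobi integral above, which requires re-running the zonal-spherical-function argument on $\mathcal{S}_{2k}(\mathbb{C})$ with Jack normalizations appropriate to the $\alpha = 2$ setting; however, the antisymmetric template of Lemma \ref{zonal_func_jacobi_dual_integral_antisymmetric} should generalize directly.
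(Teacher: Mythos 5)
Your overall architecture is exactly the paper's: dual Cauchy expansion, the truncated-symplectic Schur average from \cite{SS22}, a zonal-spherical integral representation of the coefficient over $\mathcal{S}_{2k}(\mathbb{C})$ with the Jacobi-type weight, resummation by the generalized Cauchy identity \eqref{dual_Cauchy_identity_Jack}, the block-determinant identity for $\Omega=I$, and Lemma \ref{Schur_sum_even_partitions} with a gamma-ratio $f$ for the Pfaffian. Note also that the ``symmetric-matrix Jacobi analogue'' you propose to establish is already available in the paper as the second identity \eqref{s2kintJacobi} of Lemma \ref{zonal_func_jacobi_dual_integral_antisymmetric}, and $S^{(4)}_{k}$ is evaluated in Lemma \ref{lem:threeints}.

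There is, however, one concrete error running through your write-up: you have interchanged doubled and repeated partitions. The symplectic Schur average \eqref{schur_ginse} (and its truncated analogue) is supported on \emph{repeated} partitions $\lambda^2$; since it is applied to $s_{\mu'}$ in the dual Cauchy expansion, the surviving $\mu$ satisfy $\mu'=\lambda^2$, i.e.\ $\mu=(\lambda^2)'=2\lambda'$ is \emph{even}, not repeated. The character sum therefore involves $s_{2\eta}(\bm z^{-1})$, not $s_{\eta^2}(\bm z^{-1})$ --- compare \eqref{ginse_omega} --- and your proposed integral identity should read with $s_{2\eta}(Z^{-1})$ on the left, exactly as in \eqref{s2kintJacobi}; as written with $s_{\eta^2}$ it is false, because the zonal reduction of Lemma \ref{Song_Feng_zonal_integral_skew} over $\mathcal{S}_{2k}(\mathbb{C})$ with $P^{(2)}_\eta$ produces the ratio $s_{2\eta}(A)/s_{2\eta}(I_k)$ (it is the antisymmetric case with $P^{(1/2)}_\eta$ that produces $s_{\eta^2}$). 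Your own subsequent use of Lemma \ref{Schur_sum_even_partitions}, which sums over even $\mu$, is only consistent with the corrected bookkeeping, so the slip is self-contained and fixable; with $\mu=2\eta$ the complement partition in the $2k\times 2M$ rectangle is again even, your $f(j)=\Gamma(N-M+j/2+1)/\Gamma(j/2+1)$ gives $f(2p)=(N-M+p)!/p!$ and $f(2l-1)=\Gamma(N-M+l+1/2)/\Gamma(l+1/2)$, which after the same index shift as in the GinSE case reproduces the kernel \eqref{tsekernel}.
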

\begin{remark}
The kernel \eqref{tsekernel} is known in the context of eigenvalue correlation functions of the truncated symplectic ensemble \cite{KL21}, see also \cite{K02}. The constants in Theorem \ref{prop_TSE} are 
\begin{equation}
D_{k}^{(4)} = \prod_{j=1}^{2k}\frac{\Gamma(M+j/2+1/2)}{\Gamma(N+j/2+1/2)}
\end{equation}
and
\begin{equation}
S_{k}^{(4)} = 4^{k}\pi^{2k^{2}+k}\prod_{j=1}^{2k}\frac{\Gamma(N+j/2+1/2)}{\Gamma(N+1+(2k+j)/2)},
\end{equation}
see Lemma \ref{lem:threeints}.
\end{remark}
\section{Asymptotic expansion in the real Ginibre ensemble}
\label{sec:asymptotics}
In this section we discuss the asymptotics of multi-point correlators in the real Ginibre ensemble as the matrix size $N \to \infty$. We focus on the real case because analogous results for the complex Ginibre ensemble were obtained in previous works, for moments in \cite{WW19} and for correlators in \cite{DS20}. For simplicity we restrict ourselves to the setting $\Omega = I_{N}$ throughout this section. 

Define a normalized correlator
\begin{equation}
\tilde{R}^{\mathrm{GinOE}}_{N}(\bm z) = \mathbb{E}\left(\prod_{j=1}^{2k}\det(G_{N}-z_{j})\right) \label{normrcorr}
\end{equation}
where $G_{N} = G/\sqrt{N}$ and $\{z_{j}\}_{j=1}^{2k}$ are complex numbers. This scaling ensures that the support of the limiting distribution of eigenvalues of $G_{N}$ is the unit disc in the complex plane. Our results suggest two options for carrying out this asymptotic analysis, starting either from the Pfaffian representation \eqref{rginpfaff} or from the dual integral \eqref{rgindual}. We will mainly concentrate on the Pfaffian representation in what follows.
\subsection{Multi-point asymptotics: bulk and edge}
It is natural to split the results for multi-point asymptotics into 4 cases, depending on whether one scales the characteristic variables on the real line or complex plane, and whether one scales near the edge or in the bulk. 
\begin{corollary}[Real bulk] \label{cor:GinOE-asympt-real}
	Let $z_j = x + N^{-\frac{1}{2}}\zeta_j$ with real and fixed $x \in (-1,1)$ for $j=1,\ldots,2k$ and let $\{\zeta_j\}_{j=1}^{2k}$ be fixed and distinct complex numbers. Then as $N \to \infty$
	 \begin{align}
		&\tilde{R}^{\mathrm{GinOE}}_{N}(\mathbf{z}) \sim
		e^{-Nk(1-x^{2})+\sqrt{N}x\sum_{i=1}^{2k}\zeta_{i}}N^{k^2 - \frac{k}{2}}(2\pi)^{\frac{k}{2}}
		\frac{\Pf \left ( ( \zeta_j - \zeta_i ) e^{\zeta_i \zeta_j} \right )_{1 \leq i < j \leq 2k} }{ \Delta( \bm \zeta)} \label{pf-form}
		.
	\end{align}
	\end{corollary}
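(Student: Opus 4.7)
The plan is to start from the exact Pfaffian representation \eqref{rginpfaff} and reduce the asymptotics to a tail estimate for the partial exponential series. By homogeneity of the determinant we have
\begin{equation*}
\tilde R^{\mathrm{GinOE}}_{N}(\bm z) = N^{-kN}\,R^{\mathrm{GinOE}}_{N}(\sqrt{N}\bm z),
\end{equation*}
so setting $y_{j} = \sqrt{N}x + \zeta_{j}$ and invoking \eqref{rginpfaff}, using $y_{j}-y_{i}=\zeta_{j}-\zeta_{i}$ and $\Delta(\bm y)=\Delta(\bm\zeta)$, gives
\begin{equation*}
\tilde R^{\mathrm{GinOE}}_{N}(\bm z) = N^{-kN}\prod_{j=1}^{k}(N+2j-2)!\;\frac{\mathrm{Pf}\{(\zeta_{j}-\zeta_{i})\,B_{N+2k-1}(y_{i},y_{j})\}_{1\le i<j\le 2k}}{\Delta(\bm\zeta)}.
\end{equation*}

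Next I would analyse the Pfaffian entries. Since $y_{i}y_{j} = Nx^{2} + \sqrt{N}x(\zeta_{i}+\zeta_{j}) + \zeta_{i}\zeta_{j}$, the ratio $|y_{i}y_{j}|/(N+2k-1)$ tends to $x^{2}<1$. A Stirling estimate of the remainder $e^{t}-B_{M}(t) = \sum_{l\ge M}t^{l}/l!$ then shows that uniformly in $(i,j)$
\begin{equation*}
B_{N+2k-1}(y_{i},y_{j}) = e^{y_{i}y_{j}}\left(1 + O(e^{-cN})\right)
\end{equation*}
for some $c=c(x)>0$. The point is that the leading boundary term in the tail is of order $(x^{2}e)^{N}$ whereas the main part is of order $e^{Nx^{2}}$, and the elementary inequality $x^{2}>1+2\log|x|$ for $0<|x|<1$ produces the required exponential gap.

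Now I would extract the explicit exponential factors by multilinearity of the Pfaffian. Writing $e^{y_{i}y_{j}} = e^{Nx^{2}}\cdot e^{\sqrt{N}x\zeta_{i}}\,e^{\sqrt{N}x\zeta_{j}}\cdot e^{\zeta_{i}\zeta_{j}}$ and applying $\mathrm{Pf}(DAD)=\det(D)\mathrm{Pf}(A)$ with $D=\diag(e^{\sqrt{N}x\zeta_{i}})_{i=1}^{2k}$, combined with $\mathrm{Pf}(cA)=c^{k}\mathrm{Pf}(A)$, the Pfaffian in the numerator becomes
\begin{equation*}
e^{kNx^{2}+\sqrt{N}x\sum_{i=1}^{2k}\zeta_{i}}\;\mathrm{Pf}\{(\zeta_{j}-\zeta_{i})e^{\zeta_{i}\zeta_{j}}\}_{1\le i<j\le 2k}\,(1+o(1)).
\end{equation*}
Finally, Stirling's formula gives $(N+2j-2)! = N!\cdot N^{2j-2}(1+O(N^{-1}))$ and $N!\sim\sqrt{2\pi N}(N/e)^{N}$, so that
\begin{equation*}
N^{-kN}\prod_{j=1}^{k}(N+2j-2)! \sim (2\pi)^{k/2}\,N^{k^{2}-k/2}\,e^{-kN}.
\end{equation*}
Combining this prefactor with the leading Pfaffian and using $e^{-kN}\cdot e^{kNx^{2}} = e^{-kN(1-x^{2})}$ yields \eqref{pf-form}.

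The main obstacle is the uniform propagation of the pointwise approximation $B_{N+2k-1}\sim e^{(\cdot)}$ through the Pfaffian expansion. Each entry is an exponentially large complex number, so one must verify that the relative $O(e^{-cN})$ error on each entry survives the $k$-fold product in the Pfaffian sum without cancelling against the leading contribution. The distinctness hypothesis on $\{\zeta_{i}\}$ makes the limiting Pfaffian $\mathrm{Pf}\{(\zeta_{j}-\zeta_{i})e^{\zeta_{i}\zeta_{j}}\}$ generically non-vanishing, so the reduction to the asserted formula is clean, but writing out this uniform error control carefully is the technical point of the proof.
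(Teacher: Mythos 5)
Your proposal is correct and follows essentially the same route as the paper: start from the exact Pfaffian formula \eqref{rginpfaff}, use the bulk asymptotics $B_{N+2k-1}(\sqrt{N}z_i,\sqrt{N}z_j)\sim e^{Nx^2+\sqrt{N}x(\zeta_i+\zeta_j)+\zeta_i\zeta_j}$, and apply Stirling's formula to the prefactor. The only differences are presentational — you derive the kernel estimate from the tail of the exponential series where the paper cites Lemma 11 of \cite{BS09}, and you spell out the extraction of the exponential factors via $\Pf(DAD)=\det(D)\Pf(A)$, which the paper leaves implicit.
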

\begin{proof}
Starting from formula \eqref{rginpfaff}, we observe that 
\begin{equation}
\tilde{R}^{\mathrm{GinOE}}_{N}(\bm z) = \left(\prod_{j=0}^{k-1}\frac{(N+2j)!}{N^{N}}\right)\frac{\mathrm{Pf}\left((\zeta_{j}-\zeta_{i})B_{N+2k-1}(\sqrt{N}z_i,\sqrt{N}z_j)\right)_{1 \leq i < j \leq 2k}}{\Delta(\bm \zeta)} \label{asymptginoestep1}
\end{equation}
where the kernel is
\begin{equation}
B_{N+2k}(\sqrt{N}z_i,\sqrt{N}z_j) = \sum_{\ell=0}^{N+2k-1}\frac{(Nz_{i}z_{j})^{\ell}}{\ell!}.
\end{equation}
Applying e.g.\ Lemma 11 from \cite{BS09} shows that
\begin{equation}
B_{N+2k-1}(\sqrt{N}z_i,\sqrt{N}z_j) \sim e^{Nx^{2}+x\sqrt{N}(\zeta_{i}+\zeta_{j})+\zeta_{i}\zeta_{j}}, \label{bnasympt}
\end{equation}
while Stirling's formula gives 
\begin{equation}
\prod_{j=0}^{k-1}\frac{(N+2j)!}{N^{N}} \sim (2\pi)^{\frac{k}{2}}\,N^{k^{2}-\frac{k}{2}}e^{-Nk}. \label{stirginoe}
\end{equation}
Inserting \eqref{stirginoe} and \eqref{bnasympt} into \eqref{asymptginoestep1} completes the proof.
\end{proof}

The case of the real bulk asymptotics in Corollary \ref{cor:GinOE-asympt-real} has been studied recently. Afanasiev \cite{A20} developed a Grassmann integral approach to compute \eqref{normrcorr} for general real random matrices $M$ in place of $G$, where $M$ has i.i.d.\ entries with finite moments. Using supersymmetry techniques, Afanasiev obtained the asymptotics (Conjecture 1.2 in \cite{A20}) in the form
\begin{equation}
\begin{split}
\frac{\tilde{R}_{N}(\mathbf{z})}{\prod_{j=1}^{k}\tilde{R}_{N}(z_{2j-1},z_{2j})} \sim N^{k^{2}-k}C_{k,x}\,e^{\frac{k^{2}-k}{2}(1-x^{2})^{2}\kappa_{4}}\frac{\mathrm{Pf}\bigg\{K^{(2 \times 2)}(\xi_{i},\xi_{j})\bigg\}_{1 \leq i < j \leq k}}{\Delta(\mathbf{\xi},\overline{\mathbf{\xi}})} \label{afan}
 \end{split}
 \end{equation}
 as $N \to \infty$, where $z_{j}$ are defined such that $z_{2j-1} = x+\xi_{j}/\sqrt{N}$, $z_{2j}=x+\overline{\xi_{j}}/\sqrt{N}$ with real $|x|<1$ and complex $\{\xi_{j}\}_{j=1}^{k}$ fixed for all $j=1,\ldots,k$. Quantity $C_{k,x}$ is an unspecified constant, independent of the choice of distribution on the entries and $\kappa_{4}=\mathbb{E}(M_{11}^{4})-3$ (for the Gaussian case, $\kappa_{4}=0$). The $2 \times 2$ matrix kernel inside the above Pfaffian is
 \begin{equation*}
 K^{(2 \times 2)}(\xi_{i},\xi_{j}) = e^{-\frac{|\xi_{i}|^{2}}{2}-\frac{|\xi_{j}|^{2}}{2}}\begin{pmatrix} (\xi_{j}-\xi_{i})e^{\xi_{j}\xi_{i}} & (\xi_{j}-\overline{\xi_{i}})e^{\xi_{j}\overline{\xi_{i}}}\\
 (\overline{\xi_{j}}-\xi_{i})e^{\overline{\xi_{j}}\xi_{i}} & (\overline{\xi_{j}}-\overline{\xi_{i}})e^{\overline{\xi_{j}}\overline{\xi_{i}}}\end{pmatrix}.
 \end{equation*} 
It is easy to see that these asymptotics are fully consistent with Corollary \ref{cor:GinOE-asympt-real}. Although this is stated as Conjecture 1.2 in \cite{A20}, the supersymmetry method is applied to show that the asymptotics in \eqref{afan} are given in the form of a matrix integral that was apparently unknown to the author (the work \cite{TZ14} was not cited). In fact, the matrix integral that appears in \cite{A20} is precisely the one we discussed in \eqref{cse_hciz-norm-const} and its evaluation in the works \cite{TZ14,K21,TZ23} as a ratio of Pfaffian and Vandermonde determinant settles Conjecture 1.2 in \cite{A20}. Moreover, Corollary \ref{cor:GinOE-asympt-real} shows that $C_{k,x}=1$.

Tribe and Zaboronski \cite{TZ14} also proved the expansion \eqref{pf-form} in the Gaussian setting in the particular case $x=0$, based on a steepest descent analysis of the matrix integral \eqref{rgindual}. This representation has the advantage that there is no Vandermonde factor in the denominator (c.f. \eqref{asymptginoestep1}), allowing for better uniformity in the estimates.
\begin{theorem}[Tribe and Zaboronski \cite{TZ14}]
\label{th:tz14}
The asymptotic expansion in the real bulk of Corollary \ref{cor:GinOE-asympt-real} is valid uniformly in the microscopic variables $\bm \zeta$ varying in compact subsets of $\mathbb{C}^{2k}$.
\end{theorem}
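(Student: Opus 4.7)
The plan is to start from the dual integral representation \eqref{rgindual} rather than the Pfaffian representation \eqref{rginpfaff}. The whole point of this choice is that \eqref{rgindual} carries no Vandermonde denominator, so a Laplace--steepest-descent argument on the finite-dimensional anti-symmetric matrix integral can be performed with error constants that remain bounded as the $\zeta_{j}$'s coalesce. First I would substitute $z_{j}=x+\zeta_{j}/\sqrt{N}$ into \eqref{rgindual} and rescale $X \to \sqrt{N}\,Y$ to bring the integral into the Laplace form
\begin{equation*}
\tilde{R}_{N}^{\mathrm{GinOE}}(\bm z) = C_{N}\int_{\mathcal{A}_{2k}(\mathbb{C})} dY\,e^{N\,\Phi_{x}(Y)}\,G_{N}(Y;\bm \zeta),
\end{equation*}
where $\Phi_{x}(Y)$ collects the Gaussian weight together with the leading part of the block Pfaffian (extracted via a Schur complement expansion in the small parameter $Z/\sqrt{N}$), while $G_{N}(Y;\bm \zeta)$ is a smooth factor, entire in $\bm \zeta$ and uniformly bounded in $N$ on a neighbourhood of the critical set of $\Phi_{x}$.

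Next I would analyse the critical manifold of $\Phi_{x}$. For $x=0$ the saddle equation reduces to $YY^{\dagger}=I_{2k}$, whose solution set is the unitary orbit $\{UJU^{T}:U\in U(2k)\}$ with $J=\left(\begin{smallmatrix}0&I_{k}\\-I_{k}&0\end{smallmatrix}\right)$, and this is exactly the parameter space of the Circular Symplectic Ensemble that appeared in \eqref{cse_hciz-norm-const}. A Morse--Bott steepest-descent expansion transverse to this orbit produces a Gaussian integral whose explicit evaluation supplies the algebraic prefactor $N^{k^{2}-k/2}(2\pi)^{k/2}$; what remains is an integral of $G_{N}(UJU^{T};\bm \zeta)$ against Haar measure on the orbit, which by construction is the CSE integral $\int_{\mathrm{CSE}(k)}dU\,e^{\frac{1}{2}\Tr(UZU^{\dagger}Z^{\mathrm{D}})}$ appearing in the last line of \eqref{cse_hciz-norm-const}. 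Invoking that identity converts the residual integral into the Pfaffian of $(\zeta_{j}-\zeta_{i})e^{\zeta_{i}\zeta_{j}}$ divided by $\Delta(\bm \zeta)$, matching \eqref{pf-form} on the nose. The extension to general $x\in(-1,1)$ uses the same argument but with the saddle point of $\Phi_{x}$ shifting in $x$; the new saddle value reproduces the additional exponential factor $e^{-Nk(1-x^{2})+\sqrt{N}x\sum_{i}\zeta_{i}}$.

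The hard part is the uniformity in $\bm \zeta$. Because the critical set is non-isolated, the Laplace estimate must be of Morse--Bott type: split the $Y$-integration into a tubular neighbourhood of the critical orbit (where a quantitative transverse Gaussian approximation applies) and its complement (where $e^{N\Phi_{x}(Y)}$ must be shown to be exponentially smaller than the saddle value). The crucial structural feature that gives uniformity is that $\Phi_{x}(Y)$ does not depend on $\bm \zeta$ at all, so all complement bounds carry constants independent of $\bm \zeta$; the $\bm \zeta$-dependence lives only in the smooth factor $G_{N}$, whose $Y$-derivatives are uniformly bounded for $\bm \zeta$ ranging in a compact set. This is precisely the structural improvement over \eqref{rginpfaff}--\eqref{asymptginoestep1}, whose $\Delta(\bm \zeta)^{-1}$ prefactor would otherwise obstruct uniformity near coincidences in the microscopic variables.
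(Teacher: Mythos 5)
Your proposal follows essentially the same route as the paper: the paper's proof of Theorem \ref{th:tz14} simply cites the steepest-descent analysis of the dual integral \eqref{rgindual} performed in \cite{TZ14} for $x=0$ and asserts that it extends to general base points $x\in(-1,1)$ with uniformity in $\bm\zeta$, which is exactly the Laplace/Morse--Bott argument on the anti-symmetric matrix integral that you sketch. Your identification of the key structural point --- that the phase function is independent of $\bm\zeta$ and that \eqref{rgindual} carries no Vandermonde denominator, so the error bounds survive coalescence of the $\zeta_j$ --- is precisely the reason the paper gives for working from this representation rather than from \eqref{rginpfaff}.
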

\begin{proof}
Inspecting the proof given in \cite{TZ14}, it is clear that the argument given easily extends to any base point $x \in (-1+\delta,1-\delta)$ for any fixed $\delta>0$ and that furthermore the error bounds are uniform in $\bm \zeta$ as stated.
\end{proof}

Although the result in Corollary \ref{cor:GinOE-asympt-real} does not include the uniformity in $\bm \zeta$ in Theorem \ref{th:tz14}, our proof has the advantage that it only requires some mild asymptotics of the correlation kernel (c.f. \eqref{bnasympt}). Another advantage of our approach is that we can compute the asymptotics when the base point $x$ is away from the real bulk, e.g.\ near the real edges $\pm 1$, or in the complex plane. Following a similar strategy to the proof of Corollary \ref{cor:GinOE-asympt-real} we derive the following edge asymptotics. For this we need the complementary error function
\begin{equation}
\mathrm{erfc}(x) = \frac{2}{\sqrt{\pi}}\int_{x}^{\infty}dt\,e^{-t^{2}}.
\end{equation}
\begin{corollary}[Real edge] \label{cor:GinOE-asympt-real-edge}
	Let $z_j = 1 + N^{-\frac{1}{2}}\zeta_j$ for each $j=1,\ldots,2k$ where $\{\zeta_j\}_{j=1}^{2k}$ are fixed and distinct complex numbers. Then as $N \to \infty$
	 \begin{align}
		&\tilde{R}^{\mathrm{GinOE}}_{N}(\mathbf{z}) \sim
		e^{\sqrt{N}\sum_{j=1}^{2k}\zeta_{j}}N^{k^2 - \frac{k}{2}}(2\pi)^{\frac{k}{2}}
		\frac{\Pf \left ( ( \zeta_j - \zeta_i ) B_{\mathrm{edge}}(\xi_{i},\xi_{j}) \right )_{1 \leq i < j \leq 2k} }{ \Delta( \bm \zeta)} \label{pf-form2}
	\end{align}
	where
	\begin{equation}
	B_{\mathrm{edge}}(\xi,\zeta) = \frac{1}{2}\,e^{\xi \zeta}\,\mathrm{erfc}\left(\frac{\xi+\zeta}{\sqrt{2}}\right). \label{bedge}
	\end{equation}
	\end{corollary}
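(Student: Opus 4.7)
The plan is to follow the same template as the proof of Corollary \ref{cor:GinOE-asympt-real}, starting from the Pfaffian formula \eqref{rginpfaff} and combining Stirling's expansion of the prefactor with an asymptotic analysis of the kernel $B_{N+2k-1}(\sqrt{N}z_i,\sqrt{N}z_j)$. The only genuinely new ingredient is that at the edge $x=1$ the kernel no longer reduces to a simple exponential: instead it limits to a complementary error function, which is exactly what produces $B_{\mathrm{edge}}$.

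First I would write, as in \eqref{asymptginoestep1},
\begin{equation*}
\tilde{R}^{\mathrm{GinOE}}_{N}(\bm z) = \left(\prod_{j=0}^{k-1}\frac{(N+2j)!}{N^{N}}\right)\frac{\mathrm{Pf}\bigl((\zeta_{j}-\zeta_{i})B_{N+2k-1}(\sqrt{N}z_i,\sqrt{N}z_j)\bigr)_{1 \leq i < j \leq 2k}}{\Delta(\bm \zeta)},
\end{equation*}
and use Stirling as in \eqref{stirginoe} to get $\prod_{j=0}^{k-1}(N+2j)!/N^{N}\sim (2\pi)^{k/2}N^{k^{2}-k/2}e^{-Nk}$. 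For the kernel, substituting $z_j = 1+\zeta_j/\sqrt{N}$ gives
\begin{equation*}
\mu_{ij} := N z_i z_j = N + \sqrt{N}(\zeta_i+\zeta_j) + \zeta_i\zeta_j,
\end{equation*}
so that $B_{N+2k-1}(\sqrt{N}z_i,\sqrt{N}z_j)=\sum_{l=0}^{N+2k-2}\mu_{ij}^{l}/l!$ equals $e^{\mu_{ij}}$ times the probability that a Poisson random variable with mean $\mu_{ij}$ does not exceed $N+2k-2$.

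The main (and really only) obstacle is the asymptotic of this incomplete exponential when the summation cutoff sits at the mean of the Poisson variable. Here the Lemma used in the bulk case does not apply, and one must invoke the classical normal approximation of the regularized incomplete gamma function (equivalently, a CLT for Poisson), which gives, uniformly for $\zeta_i,\zeta_j$ in compact sets,
\begin{equation*}
e^{-\mu_{ij}}\sum_{l=0}^{N+2k-2}\frac{\mu_{ij}^{l}}{l!} \longrightarrow \frac{1}{2}\,\mathrm{erfc}\!\left(\frac{\zeta_i+\zeta_j}{\sqrt{2}}\right),
\end{equation*}
since $(N+2k-2-\mu_{ij})/\sqrt{\mu_{ij}} \to -(\zeta_i+\zeta_j)$. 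Combining this with $e^{\mu_{ij}}=e^{N}e^{\sqrt{N}(\zeta_i+\zeta_j)}e^{\zeta_i\zeta_j}(1+o(1))$ yields
\begin{equation*}
B_{N+2k-1}(\sqrt{N}z_i,\sqrt{N}z_j) \sim e^{N}\,e^{\sqrt{N}(\zeta_i+\zeta_j)}\,B_{\mathrm{edge}}(\zeta_i,\zeta_j),
\end{equation*}
with $B_{\mathrm{edge}}$ as in \eqref{bedge}.

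Finally, I would use multilinearity of the Pfaffian in its rows and columns to pull out the row/column factors $e^{N/2}e^{\sqrt{N}\zeta_i}$ and $e^{N/2}e^{\sqrt{N}\zeta_j}$ from the $(i,j)$ entry. The row-column product $\prod_{i=1}^{2k}e^{N/2}e^{\sqrt{N}\zeta_i}=e^{Nk}e^{\sqrt{N}\sum_i\zeta_i}$ comes outside the Pfaffian; the factor $e^{Nk}$ cancels $e^{-Nk}$ from Stirling, and collecting the surviving pieces produces exactly
\begin{equation*}
\tilde{R}^{\mathrm{GinOE}}_{N}(\bm z) \sim e^{\sqrt{N}\sum_{j=1}^{2k}\zeta_j}\,(2\pi)^{k/2}N^{k^{2}-k/2}\,\frac{\mathrm{Pf}\bigl((\zeta_j-\zeta_i)B_{\mathrm{edge}}(\zeta_i,\zeta_j)\bigr)_{1\le i<j\le 2k}}{\Delta(\bm\zeta)},
\end{equation*}
which is \eqref{pf-form2}.
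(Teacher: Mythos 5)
Your proposal is correct and follows essentially the same route as the paper: the paper's proof simply repeats the template of the real-bulk case and cites \cite{BS09} for the edge asymptotic $B_{N}(\sqrt{N}z_{i},\sqrt{N}z_{j}) = e^{N+\sqrt{N}(\zeta_{i}+\zeta_{j})}B_{\mathrm{edge}}(\zeta_{i},\zeta_{j})(1+o(1))$, which is exactly the kernel limit you derive. The one point to tighten is that your Poisson/CLT justification of $e^{-\mu_{ij}}\sum_{l\le N+2k-2}\mu_{ij}^{l}/l! \to \tfrac{1}{2}\mathrm{erfc}((\zeta_i+\zeta_j)/\sqrt{2})$ literally applies only when $\mu_{ij}$ is real and positive, whereas the $\zeta_j$ here are complex; for complex parameters one must instead invoke the uniform asymptotics of the regularized incomplete gamma function (which is what the cited lemma in \cite{BS09} supplies), so your key asymptotic claim is true but its probabilistic justification is only a heuristic in the stated generality.
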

	
	\begin{proof}
	We apply the strategy outlined in the proof of Corollary \ref{cor:GinOE-asympt-real}. The main difference is that in place of asymptotics \eqref{bnasympt} we need to use the asymptotics of the kernel near the edge, see e.g.\ \cite{BS09}
	\begin{equation}
	B_{N}(\sqrt{N}z_{i},\sqrt{N}z_{j}) = e^{N+\sqrt{N}(\zeta_{j}+\zeta_{i})}B_{\mathrm{edge}}(\zeta_{i},\zeta_{j})(1+o(1)), \qquad N \to \infty.
	\end{equation}
	\end{proof}
\begin{remark}
If one merges points $\zeta_{j} \to \zeta$ for all $j=1,\ldots,2k$ in expression \eqref{pf-form2}, it can be shown that one formally recovers the asymptotics given in \cite{WW21} in terms of certain largest eigenvalue distributions, see also \cite{DS20} for results of this type in the complex Ginibre ensemble.	
\end{remark}
When the base point is away from the real line but still contained within the bulk of the Ginibre circular law $|z|<1$, we obtain asymptotics similar to those known for the \textit{complex} Ginibre ensemble given in \cite{DS20}. This reflects a similar phenomenon known for complex eigenvalues of real matrices \cite{BS09}. 
\begin{corollary}[Complex bulk]\label{cor:complexbase}
In the correlator \eqref{normrcorr}, consider the scaling
\begin{equation}
\begin{split}
z_{j} &= z + \frac{\zeta_j}{\sqrt{N}}, \qquad j=1,\ldots k\\
z_{j} &= \overline{z} + \frac{\xi_j}{\sqrt{N}}, \qquad j=k+1,\ldots,2k
\end{split}
\end{equation}
with distinct $\zeta_{1} < \ldots < \zeta_{k}$ and distinct $\xi_{1} < \ldots < \xi_{k}$, and where $z$ is a fixed complex number satisfying $z \in \mathbb{C}\setminus \mathbb{R}$ and $|z|<1$. Then
\begin{equation}
\begin{split}
\tilde{R}^{\mathrm{GinOE}}_{N}(\bm z) &\sim e^{-Nk(1-|z|^{2})+\sqrt{N}\sum_{j=1}^{k}(\overline{z}\zeta_{j}+z\xi_{j})}N^{\frac{k^{2}}{2}}(2\pi)^{\frac{k}{2}}\,(2\mathrm{Im}(z))^{-k(k-1)}\\
&\times\frac{\det\bigg\{e^{\zeta_{i}\xi_{j}}\bigg\}_{i,j=1}^{k}}{\Delta(\bm \zeta)\Delta(\bm \xi)}, \qquad N \to \infty.
\end{split}
\end{equation}
\end{corollary}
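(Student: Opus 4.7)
The plan is to follow the same Pfaffian route used in the proofs of Corollaries \ref{cor:GinOE-asympt-real} and \ref{cor:GinOE-asympt-real-edge}, starting from formula \eqref{rginpfaff}. After the rescaling $\bm z \mapsto \sqrt{N}\bm z$ and dividing by $N^{Nk}$ to pass to the normalized correlator \eqref{normrcorr}, we may write
\[
\tilde{R}^{\mathrm{GinOE}}_{N}(\bm z) = N^{-Nk+k-k^{2}}\prod_{j=0}^{k-1}(N+2j)!\;\frac{\Pf\bigl\{(z_j-z_i)\,B_{N+2k-1}(\sqrt{N}z_i,\sqrt{N}z_j)\bigr\}_{1\leq i<j\leq 2k}}{\Delta(\bm z)},
\]
and Stirling's formula yields $N^{-Nk}\prod_{j=0}^{k-1}(N+2j)! \sim (2\pi)^{k/2} N^{k^{2}-k/2} e^{-Nk}$ exactly as in Corollary \ref{cor:GinOE-asympt-real}. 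The task then reduces to analysing the Pfaffian and the Vandermonde in the complex bulk scaling.

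The $2k \times 2k$ matrix inside the Pfaffian has a natural $2 \times 2$ block structure, separating the first $k$ indices (scaling around $z$) from the last $k$ (scaling around $\overline{z}$). Since $|z_i z_j| \to |z|^{2} < 1$ in each of the three regimes, Lemma 11 from \cite{BS09} gives $B_{N+2k-1}(\sqrt{N}z_i,\sqrt{N}z_j) \sim e^{Nz_iz_j}$. In the two diagonal blocks the dominant exponential moduli are $e^{N\mathrm{Re}(z^{2})}$ and $e^{N\mathrm{Re}(\overline{z}^{2})}$, while the off-diagonal block carries $e^{N|z|^{2}}$. The elementary identity $|z|^{2} - \mathrm{Re}(z^{2}) = 2(\mathrm{Im}\,z)^{2} > 0$ then shows that the off-diagonal block dominates exponentially, and its entries contain the factor $(z_j - z_i) \to -2i\,\mathrm{Im}(z)$ of order unity, whereas the diagonal blocks carry the additional damping $(z_j - z_i) = O(N^{-1/2})$.

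To extract the leading behaviour I would conjugate by a diagonal matrix $D$ with entries $d_i = e^{\tfrac{1}{2}N|z|^{2}+\sqrt{N}\,\overline{z}\,\zeta_i}$ for $i \leq k$ and $d_i = e^{\tfrac{1}{2}N|z|^{2}+\sqrt{N}\,z\,\xi_{i-k}}$ for $i > k$, exploiting $\Pf(DMD) = \det(D)\,\Pf(M)$. With this choice $d_i d_j$ cancels exactly the dominant factor in the off-diagonal block, whose pointwise limit is $-2i\,\mathrm{Im}(z)\,e^{\zeta_i\xi_{j-k}}$; the two diagonal blocks pick up a suppression by $e^{-2N(\mathrm{Im}\,z)^{2}}/\sqrt{N}$ and may be discarded. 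The classical block-matrix Pfaffian identity
\[
\Pf\!\begin{pmatrix} 0 & B \\ -B^{\mathrm{T}} & 0 \end{pmatrix} = (-1)^{k(k-1)/2}\det(B)
\]
then reduces the limiting Pfaffian to $(-1)^{k(k-1)/2}(-2i\,\mathrm{Im}(z))^{k}\det\{e^{\zeta_i\xi_j}\}_{i,j=1}^{k}$ multiplied by $\det(D) = e^{Nk|z|^{2}+\sqrt{N}\sum_j(\overline{z}\zeta_j + z\xi_j)}$.

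Finally I would collect all factors, noting that the Vandermonde splits along the same block pattern as
\[
\Delta(\bm z) \sim (-2i\,\mathrm{Im}(z))^{k^{2}}\,N^{-k(k-1)/2}\,\Delta(\bm\zeta)\,\Delta(\bm\xi),
\]
the power $N^{-k(k-1)/2}$ accounting for the $k(k-1)/2$ differences of order $N^{-1/2}$ in each of the $\zeta$ and $\xi$ groups, and using the simplification $(-1)^{k(k-1)/2}(-2i\,\mathrm{Im}(z))^{-k(k-1)} = (2\,\mathrm{Im}(z))^{-k(k-1)}$. Combining the Stirling prefactor with these ingredients produces the claimed asymptotic. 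The main obstacle is the careful bookkeeping of exponential factors across the three block regimes, in particular verifying the cancellation $|z|^{2} - \mathrm{Re}(z^{2}) = 2(\mathrm{Im}\,z)^{2}$ that drives the collapse of the Pfaffian to a determinant; conceptually this parallels the complex-bulk phenomenon in the complex Ginibre case treated in \cite{DS20, BS09}, but here the real Pfaffian structure requires the additional block-collapse step.
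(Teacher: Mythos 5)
Your proposal is correct and follows essentially the same route as the paper's proof: both start from the Pfaffian formula \eqref{rginpfaff}, use the kernel asymptotics $B_{N+2k-1}(\sqrt{N}z_i,\sqrt{N}z_j)\sim e^{Nz_iz_j}$ together with $|z|^{2}-\mathrm{Re}(z^{2})=2(\mathrm{Im}\,z)^{2}>0$ to show only the off-diagonal block survives, collapse the Pfaffian to a determinant via the block identity, and combine with Stirling and the split Vandermonde. Your conjugation by the diagonal matrix $D$ is just a tidier way of organising the extraction of the exponential prefactors that the paper performs directly in \eqref{pfafftodet}, and your bookkeeping (including the sign simplification to $(2\,\mathrm{Im}(z))^{-k(k-1)}$) checks out.
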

\begin{proof}
We have that
\begin{equation}
\tilde{R}^{\mathrm{GinOE}}_{N}(\bm z) = \prod_{j=0}^{k-1}\frac{(N+2j)!}{N^{N}}\,\frac{\mathrm{Pf}\bigg\{\sqrt{N}(z_{j}-z_{i})B_{N+2k-1}(\sqrt{N}z_i,\sqrt{N}z_j)\bigg\}_{1 \leq i < j \leq 2k}}{\Delta(\sqrt{N} \bm z)}. \label{GinOEex2}
\end{equation}
First observe that
\begin{equation}
\begin{split}
\Delta (\sqrt{N} \bm z) 
%&= N^{2k(2k-1)/4}\prod_{1 \leq i < j \leq k}(z_{j}-z_{i})\prod_{k+1 \leq i < j \leq 2k}(z_{j}-z_{i})\prod_{1 \leq i \leq k}\prod_{k+1 \leq j \leq 2k}(z_{j}-z_{i})\\
&=N^{k^{2}-k/2}N^{-k(k-1)/2}\Delta(\bm \zeta)\Delta(\bm \xi)\prod_{1 \leq i \leq k}\prod_{k+1 \leq j \leq 2k}(z_{j}-z_{i})\\
&\sim N^{\frac{k^{2}}{2}}\Delta(\bm \zeta)\Delta(\bm \xi)(\overline{z}-z)^{k^{2}}. \label{ginoesplit}
\end{split}
\end{equation}
Next note that the leading term in the asymptotics of $B_{N+2k-1}(\sqrt{N}z_{i},\sqrt{N}z_{j})$ is of the form $e^{Nz_{i}z_{j}}$. Due to the inequality $|z|^{2}-\mathrm{Re}(z^{2}) > 0$ holding provided $z$ has non-zero imaginary part, we see that $B_{N}(\sqrt{N}z_{i},\sqrt{N}z_{j})$ is exponentially subleading unless both $i$ and $j$ belong simultaneously to the set $\{1,\ldots,k\}$ or to the set $\{k+1,\ldots,2k\}$. Then we have that 
\begin{equation}
\begin{split}
&e^{-Nk|z|^{2}-\sqrt{N}(z\zeta_{j}+\overline{z}\xi_{j})}\mathrm{Pf}\bigg\{\sqrt{N}(z_{j}-z_{i})B_{N+2k}(\sqrt{N}z_i,\sqrt{N}z_j)\bigg\}_{1 \leq i < j \leq 2k}\\
& \sim \mathrm{Pf}\begin{pmatrix} 0 & A\\-A^{\mathrm{T}} & 0\end{pmatrix}\\
&= \det(A)(-1)^{\frac{k(k-1)}{2}} \label{pfafftodet}
\end{split}
\end{equation}
where $A$ is a $k \times k$ matrix with entries $A_{ij} = (\overline{z}-z)e^{\zeta_{i}\xi_{j}} = -2i\mathrm{Im}(z)e^{\zeta_{i}\xi_{j}}$. Applying Stirling's formula as in \eqref{stirginoe} and inserting \eqref{ginoesplit} and \eqref{pfafftodet} into \eqref{GinOEex2} completes the proof.
\end{proof}

An analogous derivation allows to extend the above result to the complex edge $|z|=1$.
\begin{corollary}[Complex edge]
Let $|z|=1$ with $z \in \mathbb{C} \setminus \mathbb{R}$ fixed and consider the scaling
\begin{equation}
\begin{split}
z_{j} &= z + \frac{\zeta_j}{\overline{z}\sqrt{N}}, \qquad j=1,\ldots k\\
z_{j} &= \overline{z} + \frac{\xi_j}{z\sqrt{N}}, \qquad j=k+1,\ldots,2k
\end{split}
\end{equation}
with distinct $\zeta_{1} < \ldots < \zeta_{k}$ and distinct $\xi_{1} < \ldots < \xi_{k}$
\begin{equation}
\begin{split}
\tilde{R}^{\mathrm{GinOE}}_{N}(\bm z) &\sim e^{\sqrt{N}\sum_{j=1}^{k}(\zeta_{j}+\xi_{j})}N^{\frac{k^{2}}{2}}(2\pi)^{\frac{k}{2}}\,(2\mathrm{Im}(z))^{-k(k-1)}\\
&\times\frac{\det\bigg\{B_{\mathrm{edge}}(\xi_{i},\zeta_{j})\bigg\}_{i,j=1}^{k}}{\Delta(\bm \zeta)\Delta(\bm \xi)}, \qquad N \to \infty,
\end{split}
\end{equation}
where $B_{\mathrm{edge}}(\xi_{i},\zeta_{j})$ is given in \eqref{bedge}.
\end{corollary}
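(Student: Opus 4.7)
The plan is to combine the techniques of Corollary \ref{cor:complexbase} (complex bulk) and Corollary \ref{cor:GinOE-asympt-real-edge} (real edge), starting from the Pfaffian representation in \eqref{GinOEex2}. The Stirling estimate \eqref{stirginoe} for the prefactor $\prod_{j=0}^{k-1}(N+2j)!/N^{N}$ will be used unchanged. For the Vandermonde $\Delta(\sqrt{N}\bm z)$, I split its $\binom{2k}{2}$ factors into within-first-group, within-second-group and cross contributions; using $|z|=1$ (so $z\overline z=1$) gives $\Delta(\sqrt{N}\bm z) \sim N^{k^{2}/2}\,\Delta(\bm\zeta)\Delta(\bm\xi)\,(\overline z-z)^{k^{2}}$, which already accounts for the factors $N^{k^{2}/2}$ and $(2\,\mathrm{Im}(z))^{-k(k-1)}$ in the target formula, once combined with the cross-block determinant derived below and the identity $(\overline z-z)^{k(k-1)}=(-1)^{k(k-1)/2}(2\,\mathrm{Im}(z))^{k(k-1)}$.

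The heart of the argument is the asymptotic of $B_{N+2k-1}(\sqrt{N}z_{i},\sqrt{N}z_{j})$ in three regimes. For cross pairs with $i\leq k<j$, direct computation using $|z|=1$ yields $N z_{i}z_{j}=N+\sqrt{N}(\zeta_{i}+\xi_{j-k})+\zeta_{i}\xi_{j-k}$, placing us precisely in the microscopic real-edge regime handled in Corollary \ref{cor:GinOE-asympt-real-edge}, whence $B_{N+2k-1}(\sqrt{N}z_{i},\sqrt{N}z_{j}) \sim e^{N}e^{\sqrt{N}(\zeta_{i}+\xi_{j-k})}B_{\mathrm{edge}}(\zeta_{i},\xi_{j-k})$. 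For same-group pairs, instead, $Nz_{i}z_{j}$ lies on the circle of radius $N$ at $O(1)$ angular distance from the positive real axis, since $z^{2}$ and $\overline z^{2}$ are non-real of unit modulus. Using the incomplete-gamma representation $\sum_{\ell=0}^{N-1}x^{\ell}/\ell!=e^{x}\Gamma(N,x)/\Gamma(N)$ and a steepest-descent deformation for $\Gamma(N,Nz^{2})$ with complex argument, I expect the bound $B_{N+2k-1}(\sqrt{N}z_{i},\sqrt{N}z_{j}) = O(e^{N}/\sqrt{N})$, a factor $1/\sqrt{N}$ smaller than the cross-block estimate.

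With these kernel bounds in place, the Pfaffian reduces to a determinant as in Corollary \ref{cor:complexbase}. Writing the antisymmetric $2k\times 2k$ Pfaffian matrix in block form with same-group blocks $D_{1},D_{2}$ and cross block $A$, the extra factor $\sqrt{N}(z_{k+j}-z_{i})\sim -2i\sqrt{N}\,\mathrm{Im}(z)$ contributes size $\sqrt{N}$ to entries of $A$, while same-group factors $\sqrt{N}(z_{j}-z_{i})$ are only $O(1)$; combined with the kernel estimates, each entry of $D_{1},D_{2}$ is smaller than an entry of $A$ by $O(1/N)$. Expanding the Pfaffian over perfect matchings, any matching with $m$ pairs inside each same-group block is suppressed by $N^{-2m}$, so only the matching $\{i,i+k\}$ survives at leading order, producing $(-1)^{k(k-1)/2}\det(A)$; pulling out $e^{\sqrt{N}\zeta_{i}}$ and $e^{\sqrt{N}\xi_{j}}$ from rows and columns gives $\det\{B_{\mathrm{edge}}(\zeta_{i},\xi_{j})\}$, and assembling all the factors yields the claim. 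The main obstacle is the same-group kernel estimate: it lies outside both the real-edge regime of \cite{BS09} and the interior-bulk estimate $B_{N}\sim e^{Nz_{i}z_{j}}$ used in Corollary \ref{cor:complexbase} (where $|z|<1$ made that series genuinely convergent), and requires an honest contour-deformation bound; fortunately only the $1/\sqrt{N}$ gap between same-group and cross-block kernels is needed, not a sharp asymptotic with explicit constant.
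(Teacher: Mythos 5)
Your proposal is correct and follows essentially the route the paper intends: the paper gives no explicit proof of this corollary, stating only that ``an analogous derivation'' to Corollary \ref{cor:complexbase} applies with the edge kernel asymptotics of Corollary \ref{cor:GinOE-asympt-real-edge} substituted for the bulk ones, and your computation of $Nz_iz_j$ for cross pairs, the Vandermonde splitting, and the Pfaffian-to-determinant reduction all match that template. One remark on the point you flag as the main obstacle: the refined bound $B_{N+2k-1}(\sqrt{N}z_i,\sqrt{N}z_j)=O(e^{N+O(\sqrt{N})}/\sqrt{N})$ for same-group pairs, while true (and provable by the contour argument you sketch), is not actually needed. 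The trivial triangle-inequality bound $|B_{N'}(x)|\leq e^{|x|}$ with $|x|=N+\sqrt{N}\,\mathrm{Re}(\zeta_i+\zeta_j)+O(1)$ already suffices, because the exponential factors $e^{N+\sqrt{N}\mathrm{Re}(\cdot)}$ sum to the same total over every perfect matching (each microscopic variable appears exactly once), and the suppression then comes entirely from the Vandermonde prefactors: $\sqrt{N}(z_j-z_i)=O(1)$ for same-group pairs versus $\Theta(\sqrt{N})$ for cross pairs, giving a matching with $2m$ same-group pairs a relative weight $O(N^{-m})$. Your sharper kernel estimate upgrades this to $O(N^{-2m})$ but the conclusion is the same either way, so the ``honest contour-deformation bound'' you were worried about can be dispensed with.
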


\subsection{Moments and correlators of moments}
For various applications one wishes to compute moments which involve taking powers of the characteristic polynomial, such as
\begin{equation}
\mathbb{E}(|\det(G_{N}-x)|^{2k}), \qquad k \in \mathbb{N}, \quad x \in (-1,1). \label{momk}
\end{equation}
The recent work \cite{P22} obtained asymptotic estimates for moments of type \eqref{momk} provided that $k$ is either an integer or a half-integer, obtaining lower and upper bounds. In the following we obtain a more precise asymptotic expansion up to and including the constant term, but we assume that $k$ is an integer. It will be helpful to recall the matrix integral discussed in \eqref{cse_hciz-norm-const},
	\begin{equation}
	\frac{\Pf \left ( ( \zeta_j - \zeta_i ) e^{\zeta_i \zeta_j} \right )_{1 \leq i < j \leq 2k} }{ \Delta( \bm \zeta)} = \left(\prod_{j=0}^{k-1}\frac{1}{(2j)!} \right)\int_{\mathrm{CSE}(k)}dU\,e^{\frac{1}{2}\mathrm{Tr}(U\zeta U^{\dagger} \zeta^{\mathrm{D}})}.
	\label{intrepreal}
	\end{equation} 
\begin{corollary}[Integer moments in the GinOE] \label{cor:ginoemoms}
	Let $k \in \mathbb{N}$ be fixed. The following asymptotic expansion holds for any fixed $x \in (-1,1)$
	\begin{equation}
	\mathbb{E}(|\det(G_{N}-x)|^{2k}) \sim e^{-Nk(1-x^{2})}N^{k^{2}-\frac{k}{2}}\frac{(2\pi)^{\frac{k}{2}}}{\prod_{j=0}^{k-1}(2j)!}, \quad N \to \infty. \label{GinOEmoms}
\end{equation}
\end{corollary}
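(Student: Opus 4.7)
The plan is to derive the moment asymptotic \eqref{GinOEmoms} by taking a confluent limit in the multi-point expansion of Corollary \ref{cor:GinOE-asympt-real}. First I note that since $G$ has real entries its complex eigenvalues occur in conjugate pairs, so $\det(G_{N}-x) \in \mathbb{R}$ for real $x$ and hence $\det(G_{N}-x)^{2k} = |\det(G_{N}-x)|^{2k}$. With the scaling $z_j = x + \zeta_j/\sqrt{N}$ the quantity $\tilde{R}^{\mathrm{GinOE}}_{N}(\bm z)$ is a polynomial in $\bm\zeta$, so by continuity
\begin{equation*}
\mathbb{E}(|\det(G_{N}-x)|^{2k}) = \lim_{\zeta_1,\ldots,\zeta_{2k} \to 0}\tilde{R}^{\mathrm{GinOE}}_{N}(\bm z),
\end{equation*}
so the moment is recovered as a coincidence limit of the $(2k)$-point correlator.

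Next I would invoke Corollary \ref{cor:GinOE-asympt-real}. At $\bm\zeta = 0$ the prefactor $e^{\sqrt{N}x\sum_i \zeta_i}$ becomes trivial, and one is reduced to evaluating
\begin{equation*}
\lim_{\bm\zeta \to 0}\frac{\Pf\{(\zeta_j-\zeta_i)e^{\zeta_i \zeta_j}\}_{1 \leq i < j \leq 2k}}{\Delta(\bm\zeta)}.
\end{equation*}
This limit is finite and is most cleanly evaluated using the CSE matrix integral representation \eqref{intrepreal}: at $\bm\zeta = 0$ the integrand $e^{\frac{1}{2}\mathrm{Tr}(U\zeta U^{\dagger}\zeta^{\mathrm{D}})}$ becomes $1$, and since $dU$ is normalized as a probability measure on $\mathrm{CSE}(k)$ the matrix integral itself equals $1$. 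Hence the limit is $\prod_{j=0}^{k-1}\frac{1}{(2j)!}$, which combined with the remaining leading-order prefactor of Corollary \ref{cor:GinOE-asympt-real} yields precisely the right-hand side of \eqref{GinOEmoms}.

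The main obstacle is justifying the interchange of the asymptotic limit $N \to \infty$ with the confluent limit $\bm\zeta \to 0$, since Corollary \ref{cor:GinOE-asympt-real} was stated under the assumption of distinct $\zeta_j$ and its pointwise derivation via the Pfaffian representation is not obviously uniform near the diagonal $\bm\zeta = 0$ (where the Vandermonde denominator vanishes). This is exactly where the uniformity statement of Theorem \ref{th:tz14} enters: the steepest descent analysis of Tribe and Zaboronski applied to the dual matrix integral \eqref{rgindual}, which carries no vanishing Vandermonde in the denominator, shows that the asymptotic of Corollary \ref{cor:GinOE-asympt-real} is valid uniformly as $\bm\zeta$ ranges over compact subsets of $\mathbb{C}^{2k}$. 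With this uniformity in hand, one may set $\bm\zeta = 0$ directly in the leading-order expression obtained above, completing the proof.
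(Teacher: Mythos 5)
Your proposal is correct and follows essentially the same route as the paper: both use the uniformity of the Tribe--Zaboronski estimates (Theorem \ref{th:tz14}) to justify merging the points $\zeta_j \to 0$ in Corollary \ref{cor:GinOE-asympt-real}, replace the Pfaffian-over-Vandermonde ratio by the normalized CSE integral \eqref{intrepreal} so that the confluent limit evaluates to $\prod_{j=0}^{k-1}(2j)!^{-1}$, and combine with the Stirling estimate \eqref{stirginoe}. Your extra remarks (that $\det(G_N-x)$ is real so the $2k$-th power equals the absolute $2k$-th moment, and that the correlator is polynomial in $\bm\zeta$) are correct clarifications of steps the paper leaves implicit.
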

\begin{proof}
Starting from Theorem \ref{th:tz14} and Corollary \ref{cor:GinOE-asympt-real}, we exploit the uniformity of the estimates in \cite{TZ14} to merge points $\zeta_{j} \to 0$ for each $j=1,\ldots,2k$. For this purpose, we replace the ratio of Pfaffian and Vandermonde in \eqref{pf-form} with the integral representation \eqref{intrepreal}. Then setting all $\zeta_{j}=0$ for $j=1,\ldots,2k$, the group integral is equal to $1$ by normalization. Combining this once again with \eqref{stirginoe} completes the proof.
\end{proof}
This can be compared with the case $x=0$ which can be computed for general non-integer exponents. In such a non-integer setting we need to make use of the so-called Barnes G-function $G(z)$. Analogously to the Gamma function, it obeys the functional relation
\begin{equation}
G(z+1) = \Gamma(z)G(z), \qquad G(1)=1, \label{iterateG}
\end{equation}
and has an analytic (entire) extension to $\mathbb{C}$. For integers $k \in \mathbb{N}$ it satisfies
\begin{equation}
G(k) = \prod_{j=0}^{k-2}j!.
\end{equation}

\begin{theorem} \label{th:expanzero}
The following asymptotic expansion holds uniformly for $\gamma$ varying in a compact subset of $(-\frac{1}{2},\infty)$,
\begin{equation}
\mathbb{E}(|\det(G_{N})|^{2\gamma}) \sim e^{-N\gamma}\left(\frac{N}{2}\right)^{\gamma^{2}-\frac{\gamma}{2}}\frac{(2\pi)^{\gamma}G(1/2)}{G(\gamma+1)G(\gamma+1/2)}, \quad N \to \infty \label{expanzero}
\end{equation}
where $G(z)$ is the Barnes G-function. Furthermore, when $\gamma = k \in \mathbb{N}$, expansion \eqref{expanzero} reduces to the $x=0$ case of \eqref{GinOEmoms}.
\end{theorem}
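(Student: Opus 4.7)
The proof I propose is based on deriving an explicit closed form for the moment $\mathbb{E}(|\det(G_N)|^{2\gamma})$ valid for all real $\gamma>-\tfrac{1}{2}$, and then applying the uniform asymptotic expansion of the Barnes G-function. The character-expansion machinery of Section \ref{sec:closed_forms_real} is not the most natural tool here because the exponent $\gamma$ is non-integer; instead I exploit the Wishart structure $|\det G|^{2}=\det(GG^{T})$.

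Concretely, the first step is to observe that if $G$ is an $N\times N$ real Ginibre matrix in the normalization \eqref{realginmeas}, then $W=GG^{T}$ follows the real Wishart distribution $W_{N}(N,I_{N})$. The classical moment formula for the determinant of a Wishart matrix gives, for $\gamma>-\tfrac{1}{2}$,
\begin{equation*}
\mathbb{E}(|\det G|^{2\gamma})
=2^{\gamma N}\prod_{j=1}^{N}\frac{\Gamma(j/2+\gamma)}{\Gamma(j/2)},
\end{equation*}
so with the rescaling $G_{N}=G/\sqrt{N}$,
\begin{equation*}
\mathbb{E}(|\det G_{N}|^{2\gamma})=\Bigl(\frac{2}{N}\Bigr)^{\!\gamma N}\prod_{j=1}^{N}\frac{\Gamma(j/2+\gamma)}{\Gamma(j/2)}.
\end{equation*}

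The second step is to convert this finite product into ratios of Barnes G-functions by splitting the product into its even-indexed and odd-indexed parts. Using the iteration \eqref{iterateG}, i.e.\ $\prod_{m=1}^{M}\Gamma(m+a)=G(M+a+1)/G(a+1)$, one obtains
\begin{equation*}
\prod_{j=1}^{N}\frac{\Gamma(j/2+\gamma)}{\Gamma(j/2)}
=\frac{G(N/2+\gamma+1)\,G(N/2+\gamma+1/2)}{G(N/2+1)\,G(N/2+1/2)}\cdot\frac{G(1/2)}{G(\gamma+1)\,G(\gamma+1/2)},
\end{equation*}
where $G(1)=1$ has been used. This step requires $N$ to be even (which is assumed throughout the paper for the real case) and gives an exact identity valid for all $\gamma>-\tfrac{1}{2}$, with the $\gamma$-dependence neatly factorised into a single $N$-independent constant.

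The third step is an asymptotic analysis of the Barnes G ratios as $N\to\infty$. Setting $z=N/2$, one applies the standard expansion
\begin{equation*}
\log G(z+1)=\tfrac{z^{2}}{2}\log z-\tfrac{3z^{2}}{4}+\tfrac{z}{2}\log(2\pi)-\tfrac{1}{12}\log z+\zeta'(-1)+O(1/z)
\end{equation*}
to the four factors in the numerator/denominator, with shifts by $\gamma+1,\gamma+\tfrac{1}{2},1,\tfrac{1}{2}$. The two $N$-independent $\zeta'(-1)$ contributions cancel, as do the leading $\tfrac{z^{2}}{2}\log z$, $\tfrac{3z^{2}}{4}$ and $\tfrac{z}{2}\log(2\pi)$ pieces. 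After combining with the prefactor $(2/N)^{\gamma N}$, the terms of order $N\log N$ and $N\log 2$ cancel as well, leaving exactly
\begin{equation*}
-\gamma N+(\gamma^{2}-\tfrac{\gamma}{2})\log(N/2)+\gamma\log(2\pi)+\log\frac{G(1/2)}{G(\gamma+1)G(\gamma+1/2)}+o(1),
\end{equation*}
which is the logarithm of \eqref{expanzero}. Uniformity in $\gamma$ on compact subsets of $(-\tfrac{1}{2},\infty)$ follows because the Barnes G expansion has an error term $O(1/z)$ that is uniform in bounded shifts, and because $G(\gamma+1)$ and $G(\gamma+1/2)$ are continuous and nonvanishing on this range.

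The final step is to verify consistency with Corollary \ref{cor:ginoemoms} when $\gamma=k\in\mathbb{N}$ and $x=0$. Using $G(k+1)=\prod_{j=0}^{k-1}j!$ and the Legendre-duplication-type identity $G(k+1/2)=G(1/2)\prod_{j=1}^{k}\Gamma(j-1/2)$ together with $\Gamma(j-1/2)=\sqrt{\pi}(2j-2)!/(4^{j-1}(j-1)!)$, the ratio in \eqref{expanzero} collapses to $(2\pi)^{k/2}/\prod_{j=0}^{k-1}(2j)!$, and the power $N^{k^{2}-k/2}$ matches. I expect the main technical obstacle to be purely bookkeeping: tracking the cancellation of the $N\log N$, the $N$, the $\log N$ and the constant contributions in the Barnes G expansion simultaneously across the four ratios, and handling the duplication-type simplification in the integer reduction.
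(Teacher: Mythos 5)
Your proposal is correct and follows essentially the same route as the paper's Appendix~\ref{ap:barnes}: reduce to the Wishart/LOE determinant moment (the paper derives the Gamma-product via the Selberg normalization constant $c^{\mathrm{LOE}}_{N,a}$, you quote the classical Wishart formula, which is the same identity), split the product into even and odd indices to get the Barnes G ratio of \eqref{barnesGdet}, apply the uniform asymptotic expansion of $G$, and recover the integer case via \eqref{iterateG} and the duplication formula.
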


\begin{proof}
See Appendix \ref{ap:barnes}.
\end{proof}
Based on Theorem \ref{th:expanzero} and Corollary \ref{cor:ginoemoms} it seems reasonable to make the following conjecture.
\begin{conjecture}
\label{conj:expanconj}
The following asymptotic expansion holds uniformly for $\gamma$ varying in a fixed compact subset of $(-\frac{1}{2},\infty)$, and $x$ in a fixed compact subset of $(-1,1)$,
\begin{equation}
\mathbb{E}(|\det(G_{N}-x)|^{2\gamma}) \sim e^{-N\gamma(1-x^{2})}\left(\frac{N}{2}\right)^{\gamma^{2}-\frac{\gamma}{2}}\frac{(2\pi)^{\gamma}G(1/2)}{G(\gamma+1)G(\gamma+1/2)}, \quad N \to \infty. \label{expanconj}
\end{equation}
\end{conjecture}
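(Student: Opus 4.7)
The plan is to interpolate between the two known cases --- Corollary \ref{cor:ginoemoms}, which handles integer $\gamma$ and general $x\in(-1,1)$, and Theorem \ref{th:expanzero}, which handles general $\gamma>-\tfrac{1}{2}$ at $x=0$ --- by establishing an integral representation for $\mathbb{E}(|\det(G_{N}-x)|^{2\gamma})$ that is valid uniformly in both parameters, and then performing a saddle-point analysis in $N$. The first step would be to write $|\det(G_{N}-x)|^{2\gamma} = \det((G_{N}-x)(G_{N}-x)^{\mathrm{T}})^{\gamma}$ and use an Ingham--Siegel type identity
\begin{equation}
\det(A)^{-\alpha} = \frac{1}{C_{\alpha,N}} \int e^{-\mathrm{Tr}(AX)} (\det X)^{\alpha-N} dX,
\end{equation}
where the integration is over $N\times N$ positive-definite real symmetric matrices $X$, and $\alpha$ is analytically continued in $\gamma$. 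This converts the characteristic-polynomial power into a Gaussian-linear expression in $G$; performing the $G$-average using the dual technology of Section \ref{sec:dual_integrals_real} should then produce a low-dimensional integral that smoothly interpolates between \eqref{rgindual} at integer $\gamma$ and the exact moment identity underlying Theorem \ref{th:expanzero} at $x=0$.

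Next, I would apply the saddle-point method to this interpolating integral. The leading exponential factor $e^{-N\gamma(1-x^{2})}$ should arise as the logarithmic energy of $\gamma$ point masses at $x$ against the equilibrium measure of the circular law,
\begin{equation}
\gamma\cdot\frac{N}{\pi}\int_{|w|\leq 1}\log|x-w|^{2}\,d^{2}w = -N\gamma(1-x^{2}), \qquad x\in(-1,1),
\end{equation}
while the algebraic prefactor $(N/2)^{\gamma^{2}-\gamma/2}$ should come from Gaussian fluctuations transverse to the saddle: the $\gamma^{2}$ contribution from a bulk Gaussian-free-field-type behaviour of the log-determinant, and the $-\gamma/2$ contribution from the one-dimensional family of real eigenvalues accumulating near $x$. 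Since the conjectured constant $(2\pi)^{\gamma}G(1/2)/(G(\gamma+1)G(\gamma+1/2))$ is universal in $x$, it should be a purely local object in the bulk whose value can be read off from the already-established $x=0$ case in Theorem \ref{th:expanzero}; at integer $\gamma=k$, using $G(k+1)=\prod_{j=0}^{k-1}j!$ and $G(k+1/2)/G(1/2)=\prod_{j=0}^{k-1}\Gamma(j+1/2)$ together with $(N/2)^{\gamma^{2}-\gamma/2} = 2^{\gamma/2-\gamma^{2}}N^{\gamma^{2}-\gamma/2}$, one indeed recovers the constant in Corollary \ref{cor:ginoemoms}.

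I anticipate two related obstacles. The first is representation-theoretic: the Pfaffian and dual-integral machinery of Sections \ref{sec:closed_forms_real}--\ref{sec:dual_integrals_real} is intrinsically adapted to products of an integer number of characteristic polynomials, so extending it to non-integer powers requires either the Ingham--Siegel representation described above --- which introduces auxiliary matrix integrals that must be controlled uniformly in $\gamma$ and $N$ along steepest-descent contours --- or an interpolation argument of Carlson type, which would require uniform moment bounds for complex $\gamma$ in a vertical strip, precisely the kind of control that prevented \cite{P22} from sharpening its upper and lower bounds into an exact asymptotic. The second, and in my view the harder obstacle, is showing that the multiplicative constant is genuinely independent of $x$ in the bulk: in the real Ginibre ensemble, the density of real eigenvalues near a real base point (of order $N^{-1/2}$) and the density of complex eigenvalues (of order $1$) contribute differently to saddle-point fluctuations, and it is not immediate that these contributions conspire to leave only the universal factor $G(1/2)$ featured in the conjecture. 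A natural route is to derive an interpolating differential identity in $x$ for $\log\mathbb{E}(|\det(G_{N}-x)|^{2\gamma})$ that transports the $x=0$ constant to general $x$, or alternatively to extend Lemma \ref{Schur_sum_repeated_partitions} to a contour-integral summation valid for non-integer exponents.
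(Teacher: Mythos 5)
This statement is labelled a \emph{conjecture} in the paper, and the paper offers no proof of it: the supporting evidence consists precisely of the two special cases you cite (Corollary \ref{cor:ginoemoms} for integer $\gamma$ and arbitrary $x\in(-1,1)$, and Theorem \ref{th:expanzero} for arbitrary $\gamma>-\tfrac12$ at $x=0$), together with Remark \ref{rem:conj}, which records that the half-integer case $\gamma=\tfrac{m}{2}$ was settled only in later work and that general non-integer $\gamma$ remains open. Your submission is therefore correctly framed as a research programme, and your consistency check at $\gamma=k$ (which reproduces the computation in \eqref{dupG}) is right, but as a proof it has a genuine gap: every analytic step is conditional. Concretely, the Ingham--Siegel identity you propose, $\det(A)^{-\alpha}=C_{\alpha,N}^{-1}\int e^{-\Tr(AX)}(\det X)^{\alpha-N}\,dX$ over the cone of positive definite matrices, converges only for $\mathrm{Re}(\alpha)$ sufficiently large and positive and hence produces \emph{negative} powers of $\det(A)$; the quantity in \eqref{expanconj} involves the \emph{positive} power $\det\bigl((G_N-x)(G_N-x)^{\mathrm{T}}\bigr)^{\gamma}$ with $\gamma>-\tfrac12$, so the representation as written does not apply, and the needed bosonic dual for positive fractional powers is exactly the missing ingredient, not a routine substitute. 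The subsequent saddle-point discussion (the electrostatic origin of $e^{-N\gamma(1-x^2)}$, the split of $\gamma^{2}-\tfrac{\gamma}{2}$ into bulk and real-axis contributions) is heuristic and is not carried out, and you yourself identify the $x$-independence of the multiplicative constant as unresolved --- which is precisely the content of the conjecture beyond the two proved endpoints. A Carlson-type interpolation in $\gamma$ would likewise require uniform bounds in a complex strip that neither you nor the paper establishes.

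For comparison, note that the paper's route to the two proved special cases is quite different from what you sketch and is worth internalising: Theorem \ref{th:expanzero} avoids character expansions entirely by writing $|\det(G_N)|^{2\gamma}=\det(W)^{\gamma}$ with $W=GG^{\mathrm{T}}$ a real Wishart matrix, evaluating $\mathbb{E}(\det(W)^{\gamma})$ exactly as a ratio of Selberg-type normalisation constants $c^{\mathrm{LOE}}_{N,\gamma-1/2}/c^{\mathrm{LOE}}_{N,-1/2}$, and then applying Barnes $G$-function asymptotics; this exact-formula-first strategy is what gives uniformity in $\gamma$ at $x=0$, and it is the absence of any analogous exact formula for $x\neq 0$ (where $(G_N-x)(G_N-x)^{\mathrm{T}}$ is a non-central Wishart matrix) that is the true obstruction. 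Corollary \ref{cor:ginoemoms} instead merges points in the uniform bulk expansion of Theorem \ref{th:tz14} via the CSE integral \eqref{intrepreal}. Any successful attack on the conjecture would likely need either a tractable non-central analogue of the Selberg computation or the fractional-power duality you allude to, made rigorous.
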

In the case of the complex Ginibre ensemble, a result analogous to \eqref{expanconj} was proved by Webb and Wong \cite{WW19} using Riemann-Hilbert methods. Any proof of Conjecture \ref{conj:expanconj} would be of interest as we are not aware of any analogous approach, such as in \cite{WW19}, that would apply in the context of the real Ginibre ensemble. 
\begin{remark}
\label{rem:conj}
As well as integer values of $\gamma$ in Conjecture \ref{conj:expanconj}, half-integer values are also interesting. For example, when $\gamma=\frac{1}{2}$, the left-hand side of \eqref{expanconj} is directly related to the mean density of real eigenvalues in the real Ginibre ensemble \cite{EKS94}, as discussed recently in \cite{BY23}. In \cite[Proposition 8.1]{P22} an upper bound for general half-integer exponents is obtained, which captures the same exponential and power law terms predicted in \eqref{expanconj}, but does not specify the value of the constant $O(1)$ term in the asymptotics. The latter task was recently accomplished in \cite{O24} where Conjecture \ref{conj:expanconj} is proved in the special case $\gamma = \frac{m}{2}$ for all $m \in \mathbb{N}$. The case of general non-integer $m$ remains an open problem.
\end{remark} 

Now we prove a two point extension of Corollary \ref{cor:ginoemoms}. To state the result we recall that the Laguerre Symplectic Ensemble (LSE) of dimension $k_{2}$ and degrees of freedom $k_{1}$ has joint probability density of eigenvalues $\lambda_{1},\ldots,\lambda_{k_2}$ proportional to 
\begin{equation}
\prod_{j=1}^{k_{2}}\lambda_{j}^{2(k_{2}-k_{1})+1}e^{-\lambda_{j}}|\Delta(\bm \lambda)|^{4}
\end{equation}
with $\lambda_{j} > 0$ for each $j=1,\ldots,k_2$. Given such a density, consider the largest eigenvalue $\lambda_{\mathrm{max}}^{\mathrm{LSE}_{k_2,k_1}} := \mathrm{max}_{\ell=1,\ldots,k_2}\lambda_{\ell}$. For more background on the LSE, see \cite[Chapter 3]{F10book}.
\begin{theorem}
\label{th:twopoint}
Let $k_{1},k_{2} \in \mathbb{N}$ and let $x$ and $y$ be scaled according to
\begin{equation}
x = x_{0} + \frac{\zeta}{\sqrt{N}}, \qquad y = x_{0}+\frac{\xi}{\sqrt{N}}. \label{xyscale}
\end{equation}
where $x_{0} \in (-1,1)$ is fixed. Then as $N \to \infty$ the following asymptotic expansion holds uniformly in compact subsets of $\zeta, \xi \in \mathbb{C}$,
	\begin{equation}
	\begin{split}
	\frac{\mathbb{E}(|\det(G_{N}-x)|^{2k_1}|\det(G_{N}-y)|^{2k_2})}{\mathbb{E}(|\det(G_{N}-x)|^{2k_1})\mathbb{E}(|\det(G_{N}-y)|^{2k_2})} \sim e^{-4k_{1}k_{2}\log|x-y|}F^{(4)}_{k_1,k_2}(N(x-y)^{2}) \label{ginoemergeresult}
	\end{split}
	\end{equation}
		where $F^{(4)}_{k_1,k_2}(x)$ is the distribution function
	\begin{equation}
	F^{(4)}_{k_1,k_2}(x) = \mathbb{P}(\lambda_{\mathrm{max}}^{\mathrm{LSE}_{k_2,k_1}} < x).
	\end{equation}
\end{theorem}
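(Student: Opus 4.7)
The plan is to apply a uniform bulk asymptotic expansion to the unnormalized $2(k_1+k_2)$-point correlator with all characteristic variables distinct, divide by the single-point moment asymptotics, and then take the confluence limit, identifying the resulting confluent Pfaffian with the LSE gap probability.

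First I would split the characteristic variables of the numerator, setting $z_{j} = x_{0} + \zeta_{j}/\sqrt{N}$ for $j=1,\ldots,2k_{1}$ and $z_{2k_{1}+i} = x_{0} + \xi_{i}/\sqrt{N}$ for $i=1,\ldots,2k_{2}$, with the $\zeta_{j}$'s clustered near $\zeta$ and the $\xi_{i}$'s near $\xi$, all chosen distinct. By Theorem \ref{th:tz14}, which per the remark following it extends to any fixed bulk base point $x_{0}\in (-1,1)$ and is uniform in the microscopic variables on compacta, the asymptotic of Corollary \ref{cor:GinOE-asympt-real} applies uniformly in $\bm\zeta,\bm\xi$, and the confluence limit $\zeta_{j}\to\zeta$, $\xi_{i}\to\xi$ can be taken after the leading-order expansion has been extracted.

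Next I would divide the expansion of Corollary \ref{cor:GinOE-asympt-real} for the numerator by that of Corollary \ref{cor:ginoemoms} applied at $x=x_{0}+\zeta/\sqrt{N}$ and $y=x_{0}+\xi/\sqrt{N}$. Expanding $1-x^{2}$ and $1-y^{2}$ in powers of $N^{-1/2}$, the $e^{O(N)}$ and $e^{O(\sqrt{N})}$ factors cancel exactly between numerator and denominators, leaving a surviving exponential $e^{-k_{1}\zeta^{2}-k_{2}\xi^{2}}$. The $N$-powers collapse to $N^{(k_{1}+k_{2})^{2}-k_{1}^{2}-k_{2}^{2}}=N^{2k_{1}k_{2}}$, the $(2\pi)^{(\cdot)/2}$ prefactors cancel, and the factorials $\prod_{j=0}^{k_{i}-1}(2j)!$ from the denominators survive in the numerator. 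In view of $|x-y|=|\zeta-\xi|/\sqrt{N}$, Theorem \ref{th:twopoint} is thereby reduced to the pointwise identity
\[
    e^{-k_{1}\zeta^{2}-k_{2}\xi^{2}}\prod_{j=0}^{k_{1}-1}(2j)!\prod_{j=0}^{k_{2}-1}(2j)!\,\mathcal{L}(\zeta,\xi) = |\zeta-\xi|^{-4k_{1}k_{2}}F^{(4)}_{k_{1},k_{2}}\!\big((\zeta-\xi)^{2}\big),
\]
where
\[
    \mathcal{L}(\zeta,\xi) := \lim_{\substack{\zeta_{j}\to\zeta\\ \xi_{i}\to\xi}} \frac{\mathrm{Pf}\{(w_{j}-w_{i})e^{w_{i}w_{j}}\}_{1\leq i<j\leq 2(k_{1}+k_{2})}}{\Delta(\bm w)},\qquad \bm w = (\zeta_{1},\ldots,\zeta_{2k_{1}},\xi_{1},\ldots,\xi_{2k_{2}}).
\]

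The main obstacle is the closed-form evaluation of $\mathcal{L}(\zeta,\xi)$ and its identification with the LSE gap probability. My plan here is to start from the CSE integral representation \eqref{cse_hciz-norm-const} of the Pfaffian/Vandermonde ratio, which is jointly smooth in $\bm w$ and therefore remains valid in the confluence limit: at the two-point spectrum $\mathrm{diag}(\zeta I_{2k_{1}},\xi I_{2k_{2}})$, the quaternionic Itzykson--Zuber-type integral over $\mathrm{CSE}(k_{1}+k_{2})$ reduces to a $k_{2}$-dimensional radial integral of symplectic Jacobi type, whose Jacobian produces a factor proportional to $(\zeta-\xi)^{4k_{1}k_{2}}$ that cancels the $|\zeta-\xi|^{-4k_{1}k_{2}}$ on the right. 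An affine change of variable $\lambda_{j}=(\zeta-\xi)^{2}\mu_{j}$ would then transform the Gaussian weight $e^{\frac{1}{2}\mathrm{Tr}(U\zeta U^{\dagger}\zeta^{D})}$ into the LSE weight $\lambda^{2(k_{2}-k_{1})+1}e^{-\lambda}$ with $|\Delta(\bm\lambda)|^{4}$ repulsion on the compact domain $[0,(\zeta-\xi)^{2}]^{k_{2}}$. Matching the resulting normalization against the symplectic Selberg integral over $[0,\infty)^{k_{2}}$ absorbs the factor $e^{-k_{1}\zeta^{2}-k_{2}\xi^{2}}$ together with the double product of factorials, and identifies the result as $F^{(4)}_{k_{1},k_{2}}\!\big((\zeta-\xi)^{2}\big)$. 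The uniformity in $\zeta,\xi$ on compact subsets of $\mathbb{C}$ is inherited from that of Theorem \ref{th:tz14} together with the continuity of the limiting integrals.
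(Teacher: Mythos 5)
Your plan coincides with the paper's proof: both merge points in Corollary \ref{cor:GinOE-asympt-real} with $k=k_{1}+k_{2}$ using the uniformity of Theorem \ref{th:tz14}, pass to the CSE integral representation \eqref{intrepreal}, reduce it at the two-cluster spectrum to a $k_{2}$-dimensional symplectic Jacobi integral, and rescale to obtain the LSE largest-eigenvalue distribution, finally dividing by the one-point expansions of Corollary \ref{cor:ginoemoms}. The only imprecision is in the attribution of the final cancellations: the factor $e^{k_{1}\zeta^{2}+k_{2}\xi^{2}}$ that kills your surviving exponential comes from the unitarity identity $\mathrm{Tr}(U\zeta U^{\dagger}\zeta^{\mathrm{D}})=2k_{1}\zeta^{2}+2k_{2}\xi^{2}-(\xi-\zeta)^{2}\mathrm{Tr}(cc^{\dagger})$ (with $c$ the off-diagonal block of $U$, whose squared singular values are JSE-distributed), not from the Selberg normalization, and the rescaling $t_{j}\to t_{j}/(\zeta-\xi)^{2}$ produces the power $(\zeta-\xi)^{-4k_{1}k_{2}}$ (not $+4k_{1}k_{2}$), which matches rather than cancels the corresponding factor on the right-hand side.
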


\begin{proof}
Starting from Corollary \ref{cor:GinOE-asympt-real} and Theorem \ref{th:tz14} with $k=k_{1}+k_{2}$, exploiting the uniformity of the estimates in \cite{TZ14}, we employ the integral representation \eqref{intrepreal}, setting $\zeta_{j}=\zeta$ for $j=1,\ldots,2k_{1}$ and $\zeta_{j} = \xi$ for $j=2k_{1}+1,\ldots,2k_{2}$. We then partition the unitary matrix $U$ as
\begin{equation}
U = \begin{pmatrix} a_{2k_1 \times 2k_1} & b_{2k_1 \times 2k_2}\\ c_{2k_2 \times 2k_1} & d_{2k_2 \times 2k_2}\end{pmatrix}. \label{Upartition}
\end{equation}
Then making use of the unitarity of $U$, the trace in \eqref{intrepreal} can be explicitly calculated as
\begin{equation}
\mathrm{Tr}(U\zeta U^{\dagger}\zeta^{\mathrm{D}}) = 2\zeta^{2}k_{1}+2\xi^{2}k_{2}-(\xi-\zeta)^{2}\mathrm{Tr}(cc^{\dagger})
\end{equation}
where $c$ is the $2k_2 \times 2k_1$ sub-block in \eqref{Upartition}. Consequently,
\begin{equation}
\int_{\mathrm{CSE}(k)}dU\,e^{\frac{1}{2}\mathrm{Tr}(U\zeta U^{\dagger} \zeta^{\mathrm{D}})} = e^{k_{1}\zeta^{2}+k_{2}\xi^{2}}\mathbb{E}_{\mathrm{CSE}(k)}(e^{-\frac{1}{2}(\xi-\zeta)^{2}\mathrm{Tr}(cc^{\dagger})}). \label{CSEcomp}
\end{equation}
As in \cite{DS20}, the random variable $\mathrm{Tr}(cc^{\dagger})$ is well studied, see \cite{F06,MS13} and references therein. In particular, the full joint distribution of eigenvalues of $cc^{\dagger}$ is known. Results in \cite{F06} imply that the eigenvalues of $cc^{\dagger}$ are doubly degenerate, and that modulo this degeneracy they are distributed according to the Jacobi Symplectic Ensemble (JSE). The joint density of distinct eigenvalues, say $t_{1},\ldots,t_{k_2}$ of $cc^{\dagger}$ is proportional to
\begin{equation}
\prod_{j=1}^{k_2}t_{j}^{2(k_{1}-k_{2})+1}|\Delta(\bm t)|^{4}
\end{equation}
with $t_{j} \in [0,1]$ for each $j=1,\ldots,k_2$. These results imply that, setting $s=(\xi-\zeta)^{2}$,
\begin{equation}
\begin{split}
&\mathbb{E}_{\mathrm{CSE}}(e^{-\frac{1}{2}(\xi-\zeta)^{2}\mathrm{Tr}(cc^{\dagger})})\\
&= \frac{1}{c^{\mathrm{JSE}}_{k_1,k_2}}\int_{[0,1]^{k}}\prod_{j=1}^{k_2}dt_{j}\,t_{j}^{2(k_{1}-k_{2})+1}e^{-st_{j}}|\Delta(\bm t)|^{4}\\
&= \frac{1}{c^{\mathrm{JSE}}_{k_1,k_2}}s^{-2k_{1}k_{2}}\int_{[0,s]^{k}}\prod_{j=1}^{k_2}dt_{j}\,t_{j}^{2(k_{1}-k_{2})+1}e^{-t_{j}}|\Delta(\bm t)|^{4} \label{cseJacobi} 
\end{split}
\end{equation}
where $c^{\mathrm{JSE}}_{k_1,k_2}$ is a known normalization constant, defined such that \eqref{cseJacobi} equals $1$ when $\xi=\zeta$. In the second line of \eqref{cseJacobi} we changed variable $t_{j} \to t_{j}/s$. Finally, we recognise the last line in \eqref{cseJacobi} as being proportional to the cumulative distribution function of $\lambda^{\mathrm{LSE_{k_2,k_1}}}_{\mathrm{max}}$, the largest eigenvalue in the Laguerre Symplectic Ensemble of dimension $k_{2}$, with $k_{1}$ degrees of freedom. Plugging in the relevant normalization constants, we have shown that 
\begin{equation}
\begin{split}
\mathbb{E}_{\mathrm{CSE}}(e^{-\frac{1}{2}(\xi-\zeta)^{2}\mathrm{Tr}(cc^{\dagger})}) &= \frac{c^{\mathrm{LSE}}_{k_1,k_2}}{c^{\mathrm{JSE}}_{k_1,k_2}}(\xi-\zeta)^{-4k_{1}k_{2}}\mathbb{P}(\lambda^{\mathrm{LSE_{k_2,k_1}}}_{\mathrm{max}} < s)\\
&= \frac{\prod_{j=0}^{k_{1}+k_{2}-1}(2j)!}{\prod_{j=0}^{k_1-1}(2j)!\prod_{j=0}^{k_2-1}(2j)!}(\xi-\zeta)^{-4k_{1}k_{2}}\mathbb{P}(\lambda^{\mathrm{LSE_{k_2,k_1}}}_{\mathrm{max}} < s). \label{csemax}
\end{split}
\end{equation}
Now inserting \eqref{csemax} and \eqref{CSEcomp} back into \eqref{ginoemergeresult} we get the stated result after inserting the one-point expansions \eqref{GinOEmoms} in the denominator of \eqref{ginoemergeresult}. This completes the proof of Theorem \ref{th:twopoint}.
\end{proof}

\appendix
\section{Identities for partitions and related coefficients}
Let $\eta$ be a partition of length $k$ with $\eta_{1} \leq N$, i.e.\ the Young diagram of $\eta$ is contained inside the $k \times N$ rectangle. Also recall the notion of complement partition $\tilde{\eta}_{j} = N-\eta_{k-j+1}$ for $j=1,\ldots,k$. The generalized hypergeometric coefficients are
\begin{equation}
[u]^{(\alpha)}_{\eta} = \prod_{j=1}^{k}\frac{\Gamma(u-(j-1)/\alpha+\eta_{j})}{\Gamma(u-(j-1)/\alpha)} \label{hyper_coeff_def_app}
\end{equation}
and for $\alpha=1$ we denote $[u]_{\eta} = [u]^{(1)}_{\eta}$.
\begin{lemma}
\label{lem:coeff}
For any $\alpha>0$ we have the identities
\begin{equation}
\alpha^{-|\eta|}[\alpha u]^{(1/\alpha)}_{\eta'} = (-1)^{|\eta|}[-u]^{(\alpha)}_{\eta} \label{coeff_reflect}
\end{equation}
and 
\begin{equation}
[-u]^{(\alpha)}_{\eta}	=(-1)^{|\eta|}\prod_{j=1}^{k}\frac{\Gamma(u + (j-1)/\alpha+1)}{\Gamma(\tilde{\eta}_j+u-N+(k - j)/\alpha + 1)}. \label{comp-part}
\end{equation}
Together, these give
\begin{equation}
\alpha^{-|\eta|}[\alpha u]^{(1/\alpha)}_{\eta'} = \prod_{j=1}^{k}\frac{\Gamma(u + (j-1)/\alpha+1)}{\Gamma(\tilde{\eta}_j+u-N+(k - j)/\alpha + 1)}. \label{coeff_r}
\end{equation}
\end{lemma}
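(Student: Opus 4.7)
The plan is to prove \eqref{coeff_reflect} by rewriting both sides as products over the boxes of the Young diagram of $\eta$, then prove \eqref{comp-part} by a reflection-type manipulation at the level of individual Gamma-quotients, and finally obtain \eqref{coeff_r} simply by composing the two.

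For \eqref{coeff_reflect}, the starting observation is that, since each $\eta_j$ is a non-negative integer, the telescoping identity $\Gamma(x+m)/\Gamma(x) = \prod_{l=0}^{m-1}(x+l)$ converts the definition \eqref{hyper_coeff_def_app} into a product over boxes of the Young diagram:
\begin{equation}
[u]^{(\alpha)}_{\eta} = \prod_{j=1}^{k}\prod_{l=0}^{\eta_j-1}\left(u-\tfrac{j-1}{\alpha}+l\right) = \prod_{(i,j)\in\eta}\left(u-\tfrac{i-1}{\alpha}+(j-1)\right).
\end{equation}
Applying the same rewriting to $[\alpha u]^{(1/\alpha)}_{\eta'}$, and then using the fact that $(i,j)\in\eta'$ iff $(j,i)\in\eta$, yields
\begin{equation}
[\alpha u]^{(1/\alpha)}_{\eta'} = \prod_{(i,j)\in\eta}\left(\alpha u - \alpha(j-1)+(i-1)\right) = \alpha^{|\eta|}\prod_{(i,j)\in\eta}\left(u+\tfrac{i-1}{\alpha}-(j-1)\right).
\end{equation}
A direct box-by-box comparison with $(-1)^{|\eta|}[-u]^{(\alpha)}_{\eta}$, using the sign flip $-u-(i-1)/\alpha+(j-1) = -(u+(i-1)/\alpha-(j-1))$, gives \eqref{coeff_reflect}.

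For \eqref{comp-part}, I would work on each factor $\Gamma(-u-(j-1)/\alpha+\eta_j)/\Gamma(-u-(j-1)/\alpha)$ separately. Writing it as a product of $\eta_j$ consecutive integer-spaced terms and pulling out the sign gives
\begin{equation}
\frac{\Gamma(-u-\tfrac{j-1}{\alpha}+\eta_j)}{\Gamma(-u-\tfrac{j-1}{\alpha})} = (-1)^{\eta_j}\frac{\Gamma(u+\tfrac{j-1}{\alpha}+1)}{\Gamma(u+\tfrac{j-1}{\alpha}-\eta_j+1)}.
\end{equation}
Taking the product over $j=1,\ldots,k$ produces the sign $(-1)^{|\eta|}$ and the numerator product already in the form claimed. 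The remaining step is to reindex the denominator via the complement partition formula $\tilde{\eta}_j = N-\eta_{k-j+1}$. Substituting $j \mapsto k-j+1$ in the denominator transforms $u+(j-1)/\alpha-\eta_j+1$ into $\tilde{\eta}_j+u-N+(k-j)/\alpha+1$, which is precisely the claimed denominator. This proves \eqref{comp-part}.

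Finally, \eqref{coeff_r} is immediate: multiplying \eqref{coeff_reflect} by $(-1)^{|\eta|}$ and substituting \eqref{comp-part} on the right cancels the two sign factors. There is no real obstacle here; the only mildly subtle point is the bookkeeping for the reindexing in the second step, which must match the conventions of \eqref{comp-parts} for the complement partition exactly as written.
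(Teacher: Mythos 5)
Your proposal is correct and follows essentially the same route as the paper: both identities are proved by expanding the Gamma-quotients into telescoping products (over boxes of the diagram for \eqref{coeff_reflect}, using the row/column swap under conjugation and factoring out $-\alpha$; and via the reflection $\Gamma(-x+m)/\Gamma(-x)=(-1)^m\Gamma(x+1)/\Gamma(x-m+1)$ plus the reindexing $j\mapsto k-j+1$ in the denominator for \eqref{comp-part}), with \eqref{coeff_r} obtained by composition. The bookkeeping in your reindexing step matches the paper's convention for $\tilde{\eta}$ exactly, so no changes are needed.
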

\begin{proof}
For \eqref{coeff_reflect}, we write
\begin{equation}
\begin{split}
[\alpha u]^{(1/\alpha)}_{\eta'} &= \prod_{j=1}^{N}\prod_{i=1}^{\eta'_{j}}(\alpha u-(j-1)\alpha+i-1)=\prod_{j=1}^{k}\prod_{i=1}^{\eta_{j}}(\alpha u-(i-1)\alpha+j-1)\\
&=(-\alpha)^{|\eta|}\prod_{j=1}^{k}\prod_{i=1}^{\eta_{j}}(-u-(j-1)/\alpha+i-1) = \alpha^{-|\eta|}[-u]_{\eta}^{(\alpha)}.
\end{split}
\end{equation}
For \eqref{comp-part}, similarly
\begin{equation}
\begin{split}
[-u]^{(\alpha)}_{\eta} &= \prod_{j=1}^{k}\frac{\Gamma(-u-(j-1)/\alpha+\eta_{j})}{\Gamma(-u-(j-1)/\alpha)} = \prod_{j=1}^{k}(-1)^{\eta_{j}}\frac{\Gamma(u+(j-1)/\alpha+1)}{\Gamma(u+(j-1)/\alpha+1-\eta_{j})}\\
&=(-1)^{|\eta|}\prod_{j=1}^{k}\frac{\Gamma(u+(j-1)/\alpha+1)}{\Gamma(\tilde{\eta}_{j}+u-N+(k-j)/\alpha+1)},
\end{split}
\end{equation}
where we replaced $j$ with $k-j+1$ in the denominator and used the definition of $\tilde{\eta}_{j}$.
\end{proof}

%\begin{lemma}
%Consider the quantity $d'_{\eta}$ defined as the product of the hook lengths in \eqref{dpdef}. Then for $l(\eta) \leq k$ we have the equivalent formula
%\begin{equation}
%\frac{[N]_{\eta'}}{s_{\eta'}(I_N)} = d'_{\eta'} = d'_{\eta} = \frac{[k]_{\eta}}{s_{\eta}(I_k)}. \label{coeffduality}
%\end{equation}
%\end{lemma}
%
%\begin{proof}
%This result is equivalent to the hook length formula in representation theory and can be found in standard textbooks on the subject see \textit{e.g.\} \cite{FH} or \cite{Macdonald}.
%\end{proof}

\section{Matrix integrals}
\label{sec:mint}
Throughout this section we make use of the notation introduced in Section \ref{sec:dual_integrals_real}. It will be useful to recall Selberg's integral
\begin{equation}
\begin{split}
S_{k}(a,b,\gamma) &:= \int_{[0,1]^{k}}\prod_{j=1}^{k}dt_{j}\,t_{j}^{a-1}(1-t_{j})^{b-1}|\Delta(\bm t)|^{2\gamma}\\
&= \prod_{j=0}^{k-1}\frac{\Gamma(a+j\gamma)\Gamma(b+j\gamma)\Gamma(1+(j+1)\gamma)}{\Gamma(a+b+(k+j-1)\gamma)\Gamma(1+\gamma)}, \label{selberg}
\end{split}
\end{equation}
valid for $\mathrm{Re}(a) > 0, \mathrm{Re}(b)>0$ and $\mathrm{Re}(\gamma) > -\mathrm{min}(1/k,\mathrm{Re}(a)/(k-1),\mathrm{Re}(b)/(k-1))$. See \cite{FW08} for a review.
\begin{lemma}\cite{FR09}
\label{Jack_average_invariance_factorisation}
Let the measure $d\mu(X)$ of a real, complex or quaternion $k \times k$ matrix $X$ be invariant under rotations $X \to UX, XU$ by matrices $U \in \mathrm{O}(k), \mathrm{U}(k)$ or $\mathrm{Sp}(2k)$, for which $\alpha = \frac{1}{2}, 1$ or $2$ respectively. Then
\begin{align} \label{schur_average_invariance_factorisation}
\mathbb{E} \left [ P_\eta^{(\alpha)}(A X B X^{\dagger}) \right ]
=
\frac{P_\eta^{(\alpha)}(A) P_\eta^{(\alpha)}(B)}{(P_\eta^{(\alpha)}(I_k))^2} \mathbb{E} \left [ P_\eta^{(\alpha)}(X X^\dag) \right ]
\end{align}
where $A$ and $B$ are deterministic complex $k \times k$ matrices.
\end{lemma}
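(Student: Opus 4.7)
The plan is to exploit the bi-invariance of $d\mu(X)$ under $X \mapsto UXV$ for $U,V \in G$, combined with the classical zonal averaging formula for Jack polynomials over the relevant compact group $G \in \{\mathrm{O}(k),\mathrm{U}(k),\mathrm{Sp}(2k)\}$.

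First I would note that, since $d\mu$ is invariant under both $X \mapsto UX$ and $X \mapsto XU$, it is bi-invariant, and therefore inserting two independent Haar averages into the expectation changes nothing:
\begin{align*}
\mathbb{E}\bigl[P_\eta^{(\alpha)}(AXBX^\dagger)\bigr] = \mathbb{E}\!\left[\int_G dU\int_G dV\, P_\eta^{(\alpha)}\bigl(AU\,XVBV^\dagger X^\dagger\,U^\dagger\bigr)\right].
\end{align*}
The key ingredient I would then invoke is the zonal averaging identity
\begin{align*}
\int_G dU\, P_\eta^{(\alpha)}(AUMU^\dagger) = \frac{P_\eta^{(\alpha)}(A)\,P_\eta^{(\alpha)}(M)}{P_\eta^{(\alpha)}(I_k)},
\end{align*}
valid for the three $(G,\alpha)$ pairs in the statement. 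For $\alpha=1$ it is the classical Schur polynomial averaging formula following from Weyl orthogonality on $\mathrm{U}(k)$; for $\alpha\in\{\tfrac{1}{2},2\}$ it is the analogous identity for zonal polynomials on the Riemannian symmetric spaces associated with $\mathrm{GL}(k,\mathbb{H})/\mathrm{Sp}(2k)$ and $\mathrm{GL}(k,\mathbb{R})/\mathrm{O}(k)$ respectively.

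Applying this identity to the inner $U$-integral with $M = XVBV^\dagger X^\dagger$ would extract the factor $P_\eta^{(\alpha)}(A)/P_\eta^{(\alpha)}(I_k)$. Because $P_\eta^{(\alpha)}$ depends only on eigenvalues and $XVBV^\dagger X^\dagger$ shares its nonzero spectrum with $X^\dagger X\,VBV^\dagger$, the remaining factor may be rewritten as $P_\eta^{(\alpha)}(X^\dagger X\,VBV^\dagger)$; a second application of the averaging identity in $V$ will then yield $P_\eta^{(\alpha)}(B)/P_\eta^{(\alpha)}(I_k)$ together with $P_\eta^{(\alpha)}(X^\dagger X)$. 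Since $X$ is square, $X^\dagger X$ and $XX^\dagger$ have the same eigenvalues, hence $P_\eta^{(\alpha)}(X^\dagger X) = P_\eta^{(\alpha)}(XX^\dagger)$, and collecting the factors yields the claimed identity.

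The main obstacle is justifying the zonal averaging identity in the real and quaternionic cases; in the complex case it is standard, but in the other two it relies on the theory of zonal polynomials on symmetric spaces (see, e.g., Chapter 13 of \cite{F10book}). Once that identity is granted, everything else is algebraic manipulation based on bi-invariance of $d\mu$ and the fact that $P_\eta^{(\alpha)}$ evaluated on a matrix depends only on its eigenvalues.
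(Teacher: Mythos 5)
The paper does not actually prove this lemma---it cites \cite{FR09} (and traces the result to Takemura)---and your argument is precisely the standard proof found in those sources: bi-invariance of $d\mu$ lets you insert two independent Haar averages, each of which is collapsed by the zonal splitting formula $\int_G P_\eta^{(\alpha)}(AUMU^\dagger)\,dU = P_\eta^{(\alpha)}(A)P_\eta^{(\alpha)}(M)/P_\eta^{(\alpha)}(I_k)$, and the fact that $P_\eta^{(\alpha)}$ depends only on the spectrum gives $P_\eta^{(\alpha)}(X^\dagger X)=P_\eta^{(\alpha)}(XX^\dagger)$ together with the intermediate rewriting $P_\eta^{(\alpha)}(XVBV^\dagger X^\dagger)=P_\eta^{(\alpha)}(X^\dagger X\,VBV^\dagger)$. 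The only point to reconcile is the $\alpha$--group pairing: you attach $\alpha=2$ to the real/orthogonal symmetric space and $\alpha=\tfrac12$ to the quaternion/symplectic one, which is the classical convention and the one used elsewhere in this paper (e.g.\ \eqref{GinOE_schur_average} uses $P^{(2)}$ for the real Ginibre average and \eqref{schur_ginse} uses $P^{(1/2)}$ for the quaternionic one), whereas the lemma as printed lists ``$\alpha=\tfrac12,1,2$ respectively'' against ``real, complex, quaternion''; the printed list appears to have its endpoints transposed, and your proof establishes the statement with the standard pairing---a discrepancy with no consequence here, since the paper only ever invokes the lemma with $\alpha=1$.
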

While we refer to \cite{FR09}, one can trace this result back to the monograph of Takemura \cite[pp.32]{T84book}. To evaluate the right-hand side of \eqref{schur_average_invariance_factorisation} in various cases of interest we will make use of the following class of integrals due to Kaneko and Kadell.
\begin{lemma}\cite{K93, K97} \label{Kaneko_Kadel_integral}
Let $a, b \in \mathbb{C}$ such that $\mathrm{Re}(a),\mathrm{Re}(b) > 0$ and $\alpha>0$. Then
\begin{equation}
\begin{split}
\frac{1}{S_k(a, b, 1/\alpha)}
\int_{[0,1]^k} 
\prod^k_{j=1}dt_{j}\,
t_j^{a - 1}
&(1 - t_j)^{b - 1}
P^{(\alpha)}_\eta(t)
|\Delta(\bm t)|^{2/\alpha}
\\
&=
\frac{[a + (k-1)/\alpha]^{(\alpha)}_\eta}{[a + b + 2(k-1)/\alpha]^{(\alpha)}_\eta}
P^{(\alpha)}_\eta(I_k).
\end{split}
\end{equation}
\end{lemma}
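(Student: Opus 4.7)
The plan is to derive the identity from a generating-function argument in which the auxiliary product $\prod_{i=1}^{m}\prod_{j=1}^{k}(1-y_{i}t_{j})^{-1/\alpha}$ is inserted into the integrand, and then to recognise the resulting extended integral as the multivariate Gauss hypergeometric function of Kaneko. First I would set
\begin{equation*}
F(\bm y)=\int_{[0,1]^{k}}\prod_{j=1}^{k}dt_{j}\,t_{j}^{a-1}(1-t_{j})^{b-1}|\Delta(\bm t)|^{2/\alpha}\prod_{i,j}(1-y_{i}t_{j})^{-1/\alpha}
\end{equation*}
and expand the auxiliary product using the Cauchy identity for Jack polynomials, $\prod_{i,j}(1-y_{i}t_{j})^{-1/\alpha}=\sum_{\eta}P_{\eta}^{(\alpha)}(\bm y)\,Q_{\eta}^{(\alpha)}(\bm t)$, where $Q_{\eta}^{(\alpha)}$ denotes the dual basis to $P_{\eta}^{(\alpha)}$. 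Integrating term-by-term against the Selberg weight and using the explicit constant relating $Q_{\eta}^{(\alpha)}$ to $P_{\eta}^{(\alpha)}$ exposes, as the coefficient of $P_{\eta}^{(\alpha)}(\bm y)$ in $F(\bm y)$, precisely the Selberg--Jack integral on the left-hand side of the lemma.

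The second ingredient is an independent closed-form evaluation of $F(\bm y)$ as a multivariate ${}_{2}F_{1}$ series. Following Kaneko, I would show that
\begin{equation*}
\frac{F(\bm y)}{S_{k}(a,b,1/\alpha)}=\sum_{\eta}\frac{[a+(k-1)/\alpha]_{\eta}^{(\alpha)}\,[1/\alpha]_{\eta}^{(\alpha)}}{[a+b+2(k-1)/\alpha]_{\eta}^{(\alpha)}}\,\frac{P_{\eta}^{(\alpha)}(\bm y)\,P_{\eta}^{(\alpha)}(I_{k})}{P_{\eta}^{(\alpha)}(I_{m})\cdot j_{\eta}^{(\alpha)}},
\end{equation*}
a Jack-polynomial analogue of Euler's integral representation of ${}_{2}F_{1}$, where $j_{\eta}^{(\alpha)}$ is the squared norm from the Jack inner product. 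Matching coefficients of $P_{\eta}^{(\alpha)}(\bm y)$ between the two expansions, and simplifying via the principal-specialisation formula \eqref{Jack_polynomial_at_identity}, collapses the right-hand side to the stated ratio of generalised Pochhammer coefficients times $P_{\eta}^{(\alpha)}(I_{k})$. The $m$ and $\bm y$ dependence cancels out because $P_{\eta}^{(\alpha)}(\bm y)/P_{\eta}^{(\alpha)}(I_{m})$ depends only on $\eta$ once $P_{\eta}^{(\alpha)}(\bm y)$ is extracted as an expansion coefficient.

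The main obstacle is establishing the hypergeometric evaluation of $F(\bm y)$, which is the substance of Kaneko's theorem. I would pursue the PDE approach: verify that $F(\bm y)$ satisfies the same system of second-order partial differential equations of Gauss--hypergeometric type as the multivariate ${}_{2}F_{1}$ series above, each built from the Jack-polynomial eigenoperator displayed in \eqref{op}, and then invoke uniqueness of holomorphic solutions near $\bm y=0$, where both sides reduce to the ordinary Selberg integral \eqref{selberg}. The alternative is Kadell's inductive scheme, which uses the Jack Pieri rule together with integration by parts against a first-order differential operator to climb from $\eta=0$ up to arbitrary $\eta$; this avoids PDE machinery but introduces significant bookkeeping through the Pieri coefficients and the action of derivations on the Selberg weight. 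Either route carries essentially all of the technical weight of the proof, while the Cauchy-expansion step in the first paragraph is comparatively routine.
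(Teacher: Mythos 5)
The paper does not prove this lemma at all: it is imported verbatim from Kaneko \cite{K93} and Kadell \cite{K97}, so there is no in-paper argument to compare against. Your outline correctly identifies the two known routes in the literature --- Kaneko's generating-function/holonomic-system argument and Kadell's Pieri-rule induction --- and the overall architecture (insert the Cauchy kernel $\prod_{i,j}(1-y_i t_j)^{-1/\alpha}$, evaluate the resulting integral independently as a multivariate ${}_2F_1$, match coefficients of $P^{(\alpha)}_\eta(\bm y)$ using linear independence, then simplify with \eqref{Jack_polynomial_at_identity}) is faithful to how Kaneko actually derives the Selberg--Jack integral.

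That said, as a proof the proposal is a roadmap rather than an argument: you explicitly defer the entire technical content --- verifying that $F(\bm y)$ satisfies the Gauss-type holonomic system built from \eqref{op}, proving the series solves the same system, and invoking a uniqueness theorem for symmetric holomorphic solutions at $\bm y = 0$ --- and none of that is executed. Two further points deserve care. First, the diagonal expansion $\prod_{i,j}(1-y_i t_j)^{-1/\alpha}=\sum_\eta P^{(\alpha)}_\eta(\bm y)Q^{(\alpha)}_\eta(\bm t)$ holds \emph{only} for the exponent $-1/\alpha$, whereas Kaneko's integral representation of ${}_2F_1^{(\alpha)}$ allows a general exponent $-b$; you must check that the specialisation $b=1/\alpha$ you need is covered and that the second Pochhammer factor $[\,\cdot\,]^{(\alpha)}_\eta$ it produces cancels exactly against the norm $j^{(\alpha)}_\eta$ and the principal specialisations --- your displayed series mixes $P_\eta^{(\alpha)}(I_k)$ and $P_\eta^{(\alpha)}(I_m)$ in a way that is not obviously consistent (the coefficient of $P_\eta^{(\alpha)}(\bm y)$ cannot depend on $\bm y$, and the $m$-dependence must cancel identically, which your final sentence asserts but does not verify). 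Second, term-by-term integration of the infinite Cauchy expansion against the Selberg weight needs a convergence justification for $|y_i|<1$. None of these is fatal --- they are all handled in \cite{K93} --- but filling them in is precisely where the work lies, so the proposal should be regarded as a correct plan whose substance remains to be supplied (or, as the paper does, simply cited).
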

Similarly, we have
\begin{lemma}\cite{FS09} \label{Kaneko_Kadel_integral_dual}
Let $a, b \in \mathbb{C}$ such that $\mathrm{Re}(a),\mathrm{Re}(b) > -1$, $\alpha>0$. Then
\begin{align}
\frac{1}{S_k(a, b, 1/\alpha)}
\int_{[0,\infty)^k} 
\prod_{j=1}^k dt_{j}\,t_j^a 
&(1+t_j)^{-a-b-2-2(k-1)/\alpha} P_\eta^{(\alpha)}(t) |\Delta(\bm t)|^{2/\alpha}
\nonumber
\\
&=
\frac{(-1)^{|\eta|}[a+1+(k-1)/\alpha]_\eta^{(\alpha)}}{[-b]_\eta^{(\alpha)}}
P_\eta^{(\alpha)}(I_k).
\end{align}
\end{lemma}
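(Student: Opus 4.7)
The plan is to reduce Lemma~\ref{Kaneko_Kadel_integral_dual} to the companion Jacobi-type identity Lemma~\ref{Kaneko_Kadel_integral} via the substitution $u_j = t_j/(1+t_j)$, $j=1,\ldots,k$, which maps $[0,\infty)$ bijectively onto $[0,1)$. Under this transformation, $dt_j = du_j/(1-u_j)^2$, $1+t_j = 1/(1-u_j)$, $t_j = u_j/(1-u_j)$, and $t_i - t_j = (u_i - u_j)/((1-u_i)(1-u_j))$. A routine bookkeeping of the powers of $1-u_j$ produced by $t_j^a$, by $(1+t_j)^{-a-b-2-2(k-1)/\alpha}$, by the Jacobian, and by the Vandermonde $|\Delta(\bm t)|^{2/\alpha}$ shows that all the ancillary $(1-u_j)$-factors cancel and yield exactly the Jacobi-type weight $u_j^a(1-u_j)^b$ together with $|\Delta(\bm u)|^{2/\alpha}$. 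In other words, the integral reduces to
\begin{equation*}
\int_{[0,1]^k}\prod_{j=1}^k du_j\, u_j^a(1-u_j)^b\, P_\eta^{(\alpha)}\!\left(\tfrac{u_1}{1-u_1},\ldots,\tfrac{u_k}{1-u_k}\right)|\Delta(\bm u)|^{2/\alpha},
\end{equation*}
so that one is left with an integral against the Jacobi measure of Lemma~\ref{Kaneko_Kadel_integral}, but with a Jack polynomial evaluated at shifted arguments.

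The main obstacle is precisely this transformed Jack polynomial factor: $P_\eta^{(\alpha)}(\bm u/(1-\bm u))$ is not itself a Jack polynomial, and no one-term identity exists. The strategy is to expand it in the Jack basis $\{P_\mu^{(\alpha)}(\bm u)\}_{\mu\supseteq\eta}$ using the Okounkov--Olshanski shifted (binomial) coefficients, apply Lemma~\ref{Kaneko_Kadel_integral} term by term, and then recognise the resulting sum as a multivariate Gauss hypergeometric series $_2F_1^{(\alpha)}$ of matrix argument. This sum admits a Jack-polynomial analogue of Pfaff's transformation which collapses it to a single product over the boxes of~$\eta$. The key algebraic rearrangement is then to use the reflection identity of Lemma~\ref{lem:coeff}, namely $[-u]_\eta^{(\alpha)} = (-\alpha)^{-|\eta|}[\alpha u]_{\eta'}^{(1/\alpha)}$, to turn the natural denominator $[a+b+2(k-1)/\alpha]_\eta^{(\alpha)}$ coming from Lemma~\ref{Kaneko_Kadel_integral} into the advertised $[-b]_\eta^{(\alpha)}$, which simultaneously produces the sign $(-1)^{|\eta|}$. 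The numerator $[a+1+(k-1)/\alpha]_\eta^{(\alpha)}$ and the Selberg prefactor $S_k(a,b,1/\alpha)$ (which here must be read with shifted exponents, as confirmed by specialising to $\eta=\emptyset$) appear directly from the analogous pieces in Lemma~\ref{Kaneko_Kadel_integral}.

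An alternative and more conceptual route is to view both integrals as the two Euler integral representations of the same multivariate Gauss hypergeometric function $_2F_1^{(\alpha)}$ of matrix argument, related to each other by Pfaff's transformation; in this language the substitution $u_j = t_j/(1+t_j)$ is precisely the realisation of Pfaff's transformation at the level of integrals, and the Jack polynomial transformation step above is subsumed into the statement that both integrals express the same symmetric function. Either way, the hard part is controlling the bookkeeping of the hypergeometric coefficients, especially tracking the sign $(-1)^{|\eta|}$ and ensuring that the sum produced by the Jack expansion telescopes to a single Pochhammer ratio.
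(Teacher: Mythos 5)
You should first be aware that the paper offers no proof of this statement: it is imported verbatim from \cite{FS09} as the inverse-Jacobi (``Cauchy two-sided'') analogue of Kadell's integral, so there is no in-paper argument to compare against. Judged on its own terms, the routine parts of your outline are correct and carefully done: the Cayley substitution $u_j=t_j/(1+t_j)$ really does leave the weight $u_j^{a}(1-u_j)^{b}$ (the powers of $1-u_j$ from $t_j^a$, the Jacobian, the $(1+t_j)$-factor and the Vandermonde cancel to exactly $b$), and your observation that the prefactor must be read as $S_k(a+1,b+1,1/\alpha)$ is right --- it is confirmed by the way the lemma is actually invoked in Lemmas \ref{lem:tuesplit} and \ref{zonal_func_jacobi_dual_integral_antisymmetric}, where the normalisation used is the $\eta=\emptyset$ integral rather than \eqref{selberg} taken literally.

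The gap is that everything after the substitution is asserted rather than carried out, and that is where the whole content of the lemma lives. First, the expansion of $P_\eta^{(\alpha)}(\bm u/(1-\bm u))$ in the basis $\{P_\mu^{(\alpha)}(\bm u)\}_{\mu\supseteq\eta}$ with explicit coefficients is not an off-the-shelf instance of the generalized binomial theorem (which expands $P_\eta^{(\alpha)}(1+\bm x)$); you would need to derive it and show the coefficients factor in precisely the way that lets the subsequent sum telescope. Second, termwise integration is delicate: the series for $u/(1-u)$ diverges at $u_j=1$, which lies in the integration domain, so a dominated-convergence or positivity argument is required. Third, and most seriously, the claimed collapse of the resulting series is exactly the multivariate Gauss/Pfaff--Euler identity for ${}_2F_1^{(\alpha)}$, and the standard proofs of that identity are themselves derived from the Euler integral representations you are trying to establish; invoking it here is circular unless you point to an independent proof (e.g.\ via Kaneko's holonomic system). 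A rank-one check ($k=1$, where the sum is a Gauss ${}_2F_1$ evaluated at $1$) does confirm the mechanism, the appearance of $[-b]^{(\alpha)}_{\eta}$ in the denominator and the sign $(-1)^{|\eta|}$, but the general-$k$, general-$\alpha$ case is precisely the hard part and remains unproved in your write-up.
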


We further require the limiting form of Lemma \ref{Kaneko_Kadel_integral}.
\begin{lemma}\cite{FR09} \label{Kaneko_Kadel_integral_Laguerre_limit}
Let $a \in \mathbb{C}$ such that $\mathrm{Re}(a) > -1$, $\alpha>0$. Then
\begin{align}
\frac{1}{C}
\int_{[0,\infty)^k}
\prod_{j=1}^k dt_{j}\, t_j^a e^{-t_j} P_{\eta}^{(\alpha)}(t) |\Delta(\bm t)|^{2/\alpha}
=
[a + ( k - 1 )/\alpha + 1]_{\eta}^{(\alpha)}
P_{\eta}^{(\alpha)}(I_k)
,
\end{align}
where $C=\int_{[0,\infty)^k}\prod_{j=1}^{k}dt_{j}\, t_j^a e^{-t_j} |\Delta(\bm t)|^{2/\alpha}$.
\end{lemma}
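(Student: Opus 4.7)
The plan is to derive this as the standard Jacobi-to-Laguerre degeneration of Lemma \ref{Kaneko_Kadel_integral}. Concretely, I will substitute $a \mapsto a+1$ in Kaneko--Kadell's identity, then rescale the integration variables by a parameter $b$ and send $b \to \infty$. Applying Lemma \ref{Kaneko_Kadel_integral} with $a$ replaced by $a+1$ gives
\begin{equation*}
\frac{1}{S_k(a+1, b, 1/\alpha)}\int_{[0,1]^k} \prod_{j=1}^k dt_j\, t_j^{a}(1-t_j)^{b-1} P_\eta^{(\alpha)}(t) |\Delta(\bm t)|^{2/\alpha} = \frac{[a+1+(k-1)/\alpha]^{(\alpha)}_\eta}{[a+1+b+2(k-1)/\alpha]^{(\alpha)}_\eta} P_\eta^{(\alpha)}(I_k),
\end{equation*}
valid for $\mathrm{Re}(a) > -1$ and $\mathrm{Re}(b) > 0$.

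Next I would perform the change of variables $t_j = s_j/b$, so that the domain of integration becomes $[0,b]^k$, with Jacobian $b^{-k}$. Three homogeneity facts are needed: $P^{(\alpha)}_\eta(s/b) = b^{-|\eta|}P^{(\alpha)}_\eta(s)$ since Jack polynomials are homogeneous of degree $|\eta|$; the Vandermonde factor rescales as $|\Delta(\bm t)|^{2/\alpha} = b^{-k(k-1)/\alpha}|\Delta(\bm s)|^{2/\alpha}$; and $(1-s_j/b)^{b-1} \to e^{-s_j}$ pointwise for each $s_j \geq 0$ as $b \to \infty$. Collecting all $b$-dependence on the left, I obtain a factor $b^{-k(a+1) - k(k-1)/\alpha - |\eta|}$ multiplying the rescaled integral, divided by $S_k(a+1, b, 1/\alpha)$.

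The bookkeeping then reduces to matching asymptotics as $b\to\infty$. Using $\Gamma(b+\gamma)/\Gamma(b+\delta) \sim b^{\gamma-\delta}$ one checks that
\begin{equation*}
S_k(a+1, b, 1/\alpha) \sim b^{-k(a+1) - k(k-1)/\alpha}\,C, \qquad [a+1+b+2(k-1)/\alpha]^{(\alpha)}_\eta \sim b^{|\eta|},
\end{equation*}
where $C$ is precisely the Selberg-Laguerre constant appearing in the statement (set $\eta = \emptyset$ to verify this identification). All powers of $b$ cancel, and the identity in the limit reads
\begin{equation*}
\frac{1}{C}\int_{[0,\infty)^k} \prod_{j=1}^k ds_j\, s_j^a e^{-s_j} P^{(\alpha)}_\eta(s) |\Delta(\bm s)|^{2/\alpha} = [a+1+(k-1)/\alpha]^{(\alpha)}_\eta P^{(\alpha)}_\eta(I_k),
\end{equation*}
which is the claim.

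The only nontrivial step is justifying interchange of limit and integral. The inequality $(1-s_j/b)^{b-1} \leq e^{-(b-1)s_j/b} \leq e^{-s_j/2}$ for $b \geq 2$ on the relevant range, combined with polynomial growth of $P^{(\alpha)}_\eta$ and integrability of $\prod_j s_j^a e^{-s_j/2} |\Delta(\bm s)|^{2/\alpha}$ over $[0,\infty)^k$, provides a dominating function, so dominated convergence applies. I expect this dominated convergence bookkeeping together with the cancellation of the $b$-powers to be the only technical points; both are routine, and no new identity beyond the homogeneity of $P^{(\alpha)}_\eta$ and Lemma \ref{Kaneko_Kadel_integral} itself is required.
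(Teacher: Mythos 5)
Your proposal is correct and follows exactly the route the paper indicates: the paper states this lemma (citing \cite{FR09}) as ``the limiting form of Lemma \ref{Kaneko_Kadel_integral}'', and your Jacobi-to-Laguerre degeneration --- shifting $a\mapsto a+1$, rescaling $t_j=s_j/b$, matching the powers of $b$ via the asymptotics of $S_k(a+1,b,1/\alpha)$ and of $[a+1+b+2(k-1)/\alpha]^{(\alpha)}_\eta$, and justifying the limit by dominated convergence --- is precisely that degeneration carried out in full. The bookkeeping (the factor $b^{-k(a+1)-k(k-1)/\alpha-|\eta|}$, the identification of $C$ via the $\eta=\emptyset$ case, and the cancellation of $b^{|\eta|}$) all checks out.
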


\begin{lemma}
\label{lem:tuesplit}
The following splitting identity holds
\begin{align}
s_{\eta}(A)s_{\eta}(B)\frac{d'_\eta}{(-1)^{|\eta|}[-N]_{\eta}} = \mathbb{E}_{X}(s_\eta(A X B X^\dag))
\end{align}
where $\mathbb{E}_{X}$ is expectation with respect to the $k \times k$ random matrix $X$ distributed according to \eqref{truncxpdf}.
\end{lemma}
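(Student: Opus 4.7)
The plan is to reduce the matrix expectation to a scalar eigenvalue integral of Jacobi--Selberg type, and then apply Lemma \ref{Kaneko_Kadel_integral_dual} with $\alpha=1$. First observe that the density \eqref{truncxpdf} depends on $X$ only through $XX^{\dagger}$, hence the law of $X$ is bi-invariant under the left and right action of $\mathrm{U}(k)$. Applying Lemma \ref{Jack_average_invariance_factorisation} with $\alpha=1$ (so that Jack polynomials reduce to Schur polynomials), I obtain the factorisation
\begin{equation*}
\mathbb{E}_{X}\bigl(s_{\eta}(A X B X^{\dagger})\bigr)
=
\frac{s_{\eta}(A)\,s_{\eta}(B)}{(s_{\eta}(I_k))^{2}}\,
\mathbb{E}_{X}\bigl(s_{\eta}(XX^{\dagger})\bigr).
\end{equation*}
This isolates the $A,B$ dependence correctly and reduces the problem to computing $\mathbb{E}_{X}(s_{\eta}(XX^{\dagger}))$.

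Next I would diagonalise $XX^{\dagger}$ and integrate out the unitary degrees of freedom. The standard Jacobian computation for a complex square matrix shows that the eigenvalues $t_{1},\ldots,t_{k}$ of $XX^{\dagger}$ have joint density proportional to
\begin{equation*}
\prod_{j=1}^{k}(1+t_{j})^{-N-2k}\,|\Delta(\bm t)|^{2}, \qquad t_{j}\in[0,\infty),
\end{equation*}
with the remaining angular factors contributing only to an overall normalisation. Consequently
\begin{equation*}
\mathbb{E}_{X}\bigl(s_{\eta}(XX^{\dagger})\bigr)
=
\frac{\int_{[0,\infty)^{k}}\prod_{j=1}^{k}dt_{j}\,(1+t_{j})^{-N-2k}\,P^{(1)}_{\eta}(\bm t)\,|\Delta(\bm t)|^{2}}{\int_{[0,\infty)^{k}}\prod_{j=1}^{k}dt_{j}\,(1+t_{j})^{-N-2k}\,|\Delta(\bm t)|^{2}}.
\end{equation*}

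Now apply Lemma \ref{Kaneko_Kadel_integral_dual} with $\alpha=1$, $a=0$, $b=N$ (so that $-a-b-2-2(k-1)=-N-2k$). This yields
\begin{equation*}
\mathbb{E}_{X}\bigl(s_{\eta}(XX^{\dagger})\bigr)
=
\frac{(-1)^{|\eta|}\,[k]_{\eta}}{[-N]_{\eta}}\,s_{\eta}(I_{k}).
\end{equation*}
Finally, using the identity \eqref{beta2hookform} in the form $[k]_{\eta}=d'_{\eta}\,s_{\eta}(I_{k})$, substituting back, and simplifying $(s_{\eta}(I_{k}))^{2}$ gives
\begin{equation*}
\mathbb{E}_{X}\bigl(s_{\eta}(AXBX^{\dagger})\bigr)
=
s_{\eta}(A)\,s_{\eta}(B)\,\frac{d'_{\eta}}{(-1)^{|\eta|}[-N]_{\eta}},
\end{equation*}
as required (noting $(-1)^{|\eta|}=1/(-1)^{|\eta|}$).

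The main obstacle I anticipate is verifying that the Kaneko--Kadell parameters $(a,b,\alpha)=(0,N,1)$ lie in the range of validity ($\mathrm{Re}(a),\mathrm{Re}(b)>-1$, which is satisfied for $N\geq 1$), and being careful with the exact Jacobian that turns the matrix Lebesgue measure on $\mathbb{C}^{k\times k}$ into the eigenvalue measure above, since the angular part must cancel cleanly against the normalising constant $S^{(2)}_{k}$ of \eqref{truncxpdf}. Everything else is algebra around the hook-length identity \eqref{beta2hookform}.
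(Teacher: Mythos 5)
Your proposal is correct and follows essentially the same route as the paper: factorise via Lemma \ref{Jack_average_invariance_factorisation}, pass to the eigenvalues of $XX^{\dagger}$, evaluate with Lemma \ref{Kaneko_Kadel_integral_dual} at $(a,b,\alpha)=(0,N,1)$, and finish with \eqref{beta2hookform}. The parameter checks and the Jacobian/normalisation concerns you flag are handled exactly as you anticipate, so there is nothing to add.
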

\begin{proof}
By Lemma \ref{Jack_average_invariance_factorisation} we have
\begin{equation}
\mathbb{E}_{X}(s_{\eta}(AXBX^{\dagger})) = \frac{s_{\eta}(A)s_{\eta}(B)}{s_{\eta}(I_{k})^{2}}\mathbb{E}_{X}(s_{\eta}(XX^{\dagger})). \label{tpf1}
\end{equation}
Now let $t_1,\ldots, t_k$ denote the eigenvalues of $XX^{\dagger}$. Using Lemma \ref{Kaneko_Kadel_integral_dual} with $a=0$, $b=N$ and $\alpha=1$, we write \eqref{tpf1} as
\begin{equation}
\begin{split}
&\frac{s_{\eta}(A)s_{\eta}(B)}{s_{\eta}(I_{k})^{2}}\frac{1}{c}\int_{0}^{\infty}\prod_{j=1}^{k}dt_{j}\,(1+t_{j})^{-N-2k}s_{\eta}(\bm t)|\Delta(\bm t)|^{2}\\
&=\frac{s_{\eta}(A)s_{\eta}(B)}{s_{\eta}(I_{k})}\,\frac{(-1)^{|\eta|}[k]_{\eta}}{[-N]_{\eta}}\\
&= s_{\eta}(A)s_{\eta}(B)\,\frac{d'_{\eta}}{(-1)^{|\eta|}[-N]_{\eta}},
\end{split}
\end{equation}
where $c=\int_{[0,\infty)^{k}}\prod_{j=1}^{k}dt_{j}\,(1+t_{j})^{-N-2k}|\Delta(\bm t)|^{2}$ and we used \eqref{beta2hookform} in the last line.
\end{proof}

\begin{lemma} \label{lem:ginuesplit}
Let $X$ be a $k \times k$ standard complex Ginibre random matrix. Then we have 
\begin{align}
s_{\eta}(A)s_{\eta}(B)d'_\eta = \mathbb{E}_{X}(s_\eta(A X B X^\dag)).
\end{align}
\end{lemma}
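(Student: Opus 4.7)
The proof will parallel the proof of Lemma \ref{lem:tuesplit} almost verbatim, the only change being that in place of the truncation density we now use the standard complex Ginibre measure on $\mathbb{C}^{k\times k}$, and in place of a Jacobi-type (Kaneko-Kadell dual) integral we use its Laguerre limit.

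First I would invoke the unitary bi-invariance of the standard complex Ginibre measure on $\mathbb{C}^{k\times k}$ (the density $\pi^{-k^2}e^{-\mathrm{Tr}(XX^\dagger)}$ is manifestly invariant under $X\mapsto UX$ and $X\mapsto XU$ for $U\in U(k)$). Applying Lemma \ref{Jack_average_invariance_factorisation} with $\alpha=1$, Schur functions $P^{(1)}_\eta = s_\eta$, gives
\begin{equation*}
\mathbb{E}_X\bigl(s_\eta(AXBX^\dagger)\bigr) = \frac{s_\eta(A)\,s_\eta(B)}{s_\eta(I_k)^2}\,\mathbb{E}_X\bigl(s_\eta(XX^\dagger)\bigr).
\end{equation*}

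Second, to evaluate the remaining expectation, I would pass to the eigenvalues $t_1,\ldots,t_k$ of $XX^\dagger$, whose joint density (for square complex Ginibre, the LUE with parameter $a=0$) is proportional to $\prod_{j=1}^k e^{-t_j}\,|\Delta(\bm t)|^2$ on $[0,\infty)^k$. Applying Lemma \ref{Kaneko_Kadel_integral_Laguerre_limit} with $a=0$ and $\alpha=1$ yields
\begin{equation*}
\mathbb{E}_X\bigl(s_\eta(XX^\dagger)\bigr) = [k]_\eta\, s_\eta(I_k).
\end{equation*}

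Finally, I would combine the two displays above and use identity \eqref{beta2hookform} with $N=k$, namely $d'_\eta = [k]_\eta / s_\eta(I_k)$, to obtain
\begin{equation*}
\mathbb{E}_X\bigl(s_\eta(AXBX^\dagger)\bigr) = s_\eta(A)\,s_\eta(B)\,\frac{[k]_\eta}{s_\eta(I_k)} = s_\eta(A)\,s_\eta(B)\,d'_\eta,
\end{equation*}
which is the claim. There is no serious obstacle here; the only thing to be careful about is the bookkeeping of the two factors of $s_\eta(I_k)$ produced by Lemma \ref{Jack_average_invariance_factorisation} versus the single factor appearing from the Kaneko-Kadell evaluation, so that the ratio collapses cleanly to the hook coefficient $d'_\eta$ via \eqref{beta2hookform}.
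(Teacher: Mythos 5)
Your proposal is correct and follows essentially the same route as the paper: the paper also applies Lemma \ref{Jack_average_invariance_factorisation} to factor out $s_\eta(A)s_\eta(B)/s_\eta(I_k)^2$, then evaluates $\mathbb{E}_X(s_\eta(XX^\dagger))$ via Lemma \ref{Kaneko_Kadel_integral_Laguerre_limit} with $a=0$, $\alpha=1$, and finishes with \eqref{beta2hookform}. The bookkeeping of the factors of $s_\eta(I_k)$ is exactly as you describe.
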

\begin{proof}
By Lemma \ref{Jack_average_invariance_factorisation} we have
\begin{equation}
\mathbb{E}_{X}(s_{\eta}(AXBX^{\dagger})) = \frac{s_{\eta}(A)s_{\eta}(B)}{s_{\eta}(I_{k})^{2}}\mathbb{E}_{X}(s_{\eta}(XX^{\dagger})). \label{gpf1}
\end{equation}
Following the proof of Lemma \ref{lem:tuesplit}, \eqref{gpf1} can be computed using Lemma \ref{Kaneko_Kadel_integral_Laguerre_limit} with $\alpha=1$, $a=0$, and finally applying \eqref{beta2hookform}.
\end{proof}
\begin{lemma} \label{Jack_Schur_ratio}
	Let $\eta$ be a partition of length $k$. Then
	\begin{align}
		\frac{P_{\eta}^{(1/2)}(I_k)}{s_{\eta^2}(I_{2k})}
		[2k - 1]_{\eta}^{(1/2)}
		=
		h_{\eta'}(2)
		.
	\end{align}
\end{lemma}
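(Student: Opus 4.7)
The plan is to reduce both sides to expressions involving only the combinatorial quantities $h_\eta(\alpha)$, $d'_\eta(\alpha)$ and hypergeometric coefficients $[u]^{(\alpha)}_\eta$, then check an algebraic identity. No analysis is needed; this is pure bookkeeping with the formulas listed in Section \ref{sec:dual_integrals_real}.

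First, I would rewrite the Jack-polynomial value at the identity via formula \eqref{Jack_polynomial_at_identity} with $\alpha=1/2$ and $N=k$ (using the convention from Remark \ref{rem:jack_degeneracy}),
\begin{equation*}
P_\eta^{(1/2)}(I_k) = \frac{(1/2)^{|\eta|}\,[2k]_\eta^{(1/2)}}{h_\eta(1/2)}.
\end{equation*}
Next, I would rewrite the Schur value $s_{\eta^2}(I_{2k})$ using \eqref{beta2hookform} in the form $s_\lambda(I_N) = [N]_\lambda/d'_\lambda$, applied with $\lambda = \eta^2$ and $N = 2k$:
\begin{equation*}
s_{\eta^2}(I_{2k}) = \frac{[2k]_{\eta^2}}{d'_{\eta^2}}.
\end{equation*}
Then I would expand $[2k]_{\eta^2}$ and $d'_{\eta^2}$ via the splittings recorded in \eqref{gen_pochammer_props} and \eqref{dprime_h_properties}, namely $[2k]_{\eta^2} = [2k]_\eta^{(1/2)}[2k-1]_\eta^{(1/2)}$ and $d'_{\eta^2} = 2^{2|\eta|}h_\eta(1/2)\,d'_\eta(1/2)$.

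Substituting these into the quotient and multiplying by $[2k-1]_\eta^{(1/2)}$ makes the factors $[2k]_\eta^{(1/2)}$, $[2k-1]_\eta^{(1/2)}$ and $h_\eta(1/2)$ all cancel, leaving
\begin{equation*}
\frac{P_\eta^{(1/2)}(I_k)}{s_{\eta^2}(I_{2k})}\,[2k-1]_\eta^{(1/2)} = (1/2)^{|\eta|}\cdot 2^{2|\eta|}\,d'_\eta(1/2) = 2^{|\eta|}d'_\eta(1/2).
\end{equation*}
Finally, I would identify this with $h_{\eta'}(2)$ by applying the third identity in \eqref{dprime_h_properties}, $d'_{\lambda'}(\alpha) = \alpha^{|\lambda|}h_\lambda(\alpha^{-1})$, with $\lambda = \eta'$ and $\alpha=1/2$, which gives $d'_\eta(1/2) = 2^{-|\eta|}h_{\eta'}(2)$, hence $2^{|\eta|}d'_\eta(1/2) = h_{\eta'}(2)$, completing the proof.

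There is no real obstacle in this argument: the only point that requires a moment of care is consistency of conventions, specifically that in $P_\eta^{(1/2)}(I_k)$ the identity is understood as having $k$ distinct eigenvalues (per Remark \ref{rem:jack_degeneracy}) so that \eqref{Jack_polynomial_at_identity} is applied with $N=k$, not $N=2k$. Once this convention is fixed, the computation is a one-line cancellation.
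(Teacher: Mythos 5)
Your proposal is correct and is essentially the paper's own argument spelled out: the paper's proof consists of the single line ``combine \eqref{gen_pochammer_props}, \eqref{dprime_h_properties} and \eqref{Jack_polynomial_at_identity}'', and your computation (together with \eqref{beta2hookform} for $s_{\eta^2}(I_{2k})$, which is just \eqref{Jack_polynomial_at_identity} at $\alpha=1$) is exactly that combination. Your remark about reading $P_\eta^{(1/2)}(I_k)$ with $N=k$ per Remark \ref{rem:jack_degeneracy} is the right convention and matches how the quantity arises in Lemma \ref{zonal_func_gaussian_integral_antisymmetric}.
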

\begin{proof}
	This follows from combining \eqref{gen_pochammer_props}, \eqref{dprime_h_properties} and \eqref{Jack_polynomial_at_identity}. 
\end{proof}

The next Lemma is from \cite[Corollary 3.2 and 3.3]{FS09}. Recall that $\mathcal{A}_{k}(\mathbb{C})$ denotes the space of $k \times k$ complex anti-symmetric matrices and $\mathcal{S}_{k}(\mathbb{C})$ denotes the space of $k \times k$ complex symmetric matrices.
\begin{lemma}
 \label{Song_Feng_zonal_integral_skew}
	Let $A\in \mathbb{C}^{k \times k}$ and let $\rho$ be an invariant Borel measure on $\mathcal{A}_k(\mathbb{C})$, i.e.\ for any $U \in U(k)$ we have $d\rho(X) = d\rho ( U X U^T)$. Then
	\begin{equation}
		\int_{\mathcal{A}_{k}(\mathbb{C})} d\rho(X) \, P_{\eta}^{(1/2)} (A X A^T X^\dag)
		=
		\frac{s_{\eta^2}(A)}{s_{\eta^2}(I_{k})}
		\int_{\mathcal{A}_{k}(\mathbb{C})} d\rho(X) \, P_{\eta}^{(1/2)} (XX^\dag)
\end{equation}
and
\begin{equation}
		\int_{\mathcal{S}_{k}(\mathbb{C})} d\rho(X) \, P_{\eta}^{(2)} (A X A^T X^\dag) = \frac{s_{2\eta}(A)}{s_{2\eta}(I_k)}\int_{\mathcal{S}_{k}(\mathbb{C})} d\rho(X) \, P_{\eta}^{(2)} (XX^\dag).
	\end{equation}
\end{lemma}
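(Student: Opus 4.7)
My plan is to reduce the identity to a zonal integration formula by using the invariance $d\rho(UXU^T) = d\rho(X)$ to insert a Haar average over $U(k)$. Writing $I(A)$ for the left-hand side, I would express it as
\begin{equation*}
I(A) = \int_{\mathcal{A}_k(\mathbb{C})} d\rho(X) \int_{U(k)} dU\, P_\eta^{(1/2)}\!\left(A(UXU^T)A^T (UXU^T)^\dag\right),
\end{equation*}
and then conjugate inside the Jack polynomial, which is similarity-invariant as a symmetric function of eigenvalues, to rewrite the argument as $BXB^T X^\dag$ with $B := U^\dag A U$ (using $(U^\dag AU)^T = U^T A^T \bar U$). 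This reduces the claim to establishing the pointwise-in-$X$ identity
\begin{equation*}
\int_{U(k)} dU\, P_\eta^{(1/2)}\!\left(U^\dag A U\cdot X\cdot U^T A^T \bar U\cdot X^\dag\right) = \frac{s_{\eta^2}(A)}{s_{\eta^2}(I_k)}\, P_\eta^{(1/2)}(XX^\dag),
\end{equation*}
after which integration against $d\rho(X)$ gives the lemma.

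To prove the reduced identity I would expand $P_\eta^{(1/2)}$ in the Schur basis and analyse the resulting integrals $\int_{U(k)} s_\mu(U^\dag A U\cdot X\cdot U^T A^T \bar U\cdot X^\dag)\, dU$ using representation theory. The key input is that Jack polynomials with $\alpha = 1/2$ are zonal spherical functions for the symmetric pair $(GL_k, Sp(k))$: under the action $X \mapsto UXU^T$, irreducible $U(k)$-representations restrict to $Sp(k)$ with non-trivial invariants only for doubled highest weights, which singles out the doubled partitions $\eta^2$ and produces the Schur factor $s_{\eta^2}(A)$ on the right-hand side. The denominator $s_{\eta^2}(I_k)$ is fixed by specializing to $A = I_k$, where both sides collapse to $P_\eta^{(1/2)}(XX^\dag)$. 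Alternatively, one may apply the dual Cauchy identity \eqref{dual_Cauchy_identity_Jack} to both sides, match coefficients of $P_\lambda^{(2)}$, and reduce to standard Weyl integration formulas.

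The main obstacle is the treatment of the $U^T$ and $\bar U$ factors, which prevent a direct use of the standard Schur orthogonality $\int_{U(k)} s_\mu(AUBU^\dag) dU = s_\mu(A)s_\mu(B)/s_\mu(I_k)$ that sufficed for Lemma \ref{Jack_average_invariance_factorisation}. One needs instead to pass through the distribution of $UU^T$ under Haar measure (supported on unitary symmetric matrices, essentially the Circular Orthogonal Ensemble) and exploit the zonal integration theory attached to it. The second assertion, for symmetric matrices $\mathcal{S}_k(\mathbb{C})$ with Jack parameter $\alpha = 2$, follows by the same template: the relevant symmetric pair is $(GL_k, O(k))$, the zonal spherical functions are the $P_\eta^{(2)}$, and the surviving partitions are the even ones $2\eta$, producing the factor $s_{2\eta}(A)/s_{2\eta}(I_k)$ in place of $s_{\eta^2}(A)/s_{\eta^2}(I_k)$.
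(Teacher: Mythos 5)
The paper does not actually prove this lemma: it is imported wholesale from \cite{FS09} (Corollaries 3.2 and 3.3 there), so there is no internal proof to compare against and your argument has to stand on its own. Your opening reduction is correct and is the right first move: since $P_\eta^{(1/2)}(XX^\dagger)$ is constant on orbits $\{UXU^T\}$ and an invariant Borel measure disintegrates into orbital measures, the lemma is exactly equivalent to the pointwise identity
\begin{equation*}
\int_{U(k)}dU\,P_\eta^{(1/2)}\bigl(A(UXU^T)A^T(UXU^T)^\dagger\bigr) \;=\; \frac{s_{\eta^2}(A)}{s_{\eta^2}(I_k)}\,P_\eta^{(1/2)}(XX^\dagger),
\end{equation*}
and the conjugation to $B=U^\dagger AU$ using $(U^\dagger AU)^T=U^TA^T\bar U$ is sound. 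Identifying the symmetric pairs $(GL,Sp)$ with repeated weights $\eta^2$ for the anti-symmetric case and $(GL,O)$ with even weights $2\eta$ for the symmetric case is also the correct framework (small terminology slip: $\eta^2$ is what the paper calls a \emph{repeated} partition, while the \emph{doubled} ones $2\eta$ are the ones relevant to the second identity).

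The genuine gap is in the step where you deduce the factorisation $s_{\eta^2}(A)\cdot P_\eta^{(1/2)}(XX^\dagger)$ from the branching statement that $U(k)$-irreducibles have $Sp(k)$-invariants only for repeated highest weights. That multiplicity-one statement controls the Haar integral only when the stabiliser of $X$ under $Y\mapsto UYU^T$ is the full symplectic group, i.e.\ when $X$ lies in the orbit of the standard form $J$ (equivalently $XX^\dagger\propto I$). For generic anti-symmetric $X$ the stabiliser is only a product of copies of $Sp(2)$, its invariants are far from one-dimensional, and the argument as stated does not separate the $A$-dependence from the $X$-dependence. To close this one needs an extra input: for instance, that $Y\mapsto P_\eta^{(1/2)}(AYA^TY^\dagger)$ lies entirely in the single isotypic component $V_{\eta^2}^*\otimes\overline{V_{\eta^2}^*}$ of the polynomials in $(Y,\bar Y)$ on the anti-symmetric matrices (because the substitution $Y\mapsto AYA^T$ preserves isotypic components and $P_\eta^{(1/2)}(YW^\dagger)$ spans the one-dimensional space of invariants in that component), whence its average over the orbit of $X$ equals $f(A)\,P_\eta^{(1/2)}(XX^\dagger)$ for a scalar $f(A)$; only then can $f(A)=s_{\eta^2}(A)/s_{\eta^2}(I_k)$ be read off from the special case $X=J$, where your $Sp$-invariance argument genuinely applies. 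Two smaller points: for the anti-symmetric case the relevant coset object is $UJU^T$ (anti-symmetric unitaries), not $UU^T$/COE, which pertains to the symmetric case; and the proposed alternative via the dual Cauchy identity \eqref{dual_Cauchy_identity_Jack} is too vague to assess, since integrating that generating function over $U$ is essentially the same problem.
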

Specialising the above to a Gaussian measure and evaluating the right-hand sides, we show the following.
\begin{lemma} \label{zonal_func_gaussian_integral_antisymmetric}
We have
	\begin{equation}
		\frac{1}{C_{A}}
		\int_{\mathcal{A}_{2k}(\mathbb{C}) } dX e^{-\frac{1}{2} \Tr XX^\dag} P_{\eta}^{(1/2)} (A X A^T X^\dag)  =h_{\eta'}(2) \, s_{\eta^2}(A) \label{aint}
\end{equation}
	where $C_{A} = \int_{\mathcal{A}_{2k}(\mathbb{C}) } dX e^{-\frac{1}{2} \Tr XX^\dag}$ and
\begin{align}
\frac{1}{C_{S}}\int_{\mathcal{S}_{k}(\mathbb{C}) } dX e^{-\frac{1}{2}\Tr XX^\dag} P_{\eta}^{(2)} (A X A^T X^\dag) = h_{\eta'}(1/2)s_{2\eta}(A) \label{sint}
\end{align}
where $C_{S} = \int_{\mathcal{S}_{k}(\mathbb{C}) } dX e^{-\frac{1}{2}\Tr XX^\dag}$.
\end{lemma}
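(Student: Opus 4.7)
The plan is to prove both identities by the same three-step strategy: use invariance of the Gaussian weight to strip out the $A$-dependence, pass to eigenvalues of $XX^\dag$ to reduce each $A=I$ integral to a Laguerre eigenvalue integral, and then evaluate the Jack average via Lemma~\ref{Kaneko_Kadel_integral_Laguerre_limit}. A short manipulation of the combinatorial coefficients that pop out will then produce the prefactors $h_{\eta'}(2)$ and $h_{\eta'}(1/2)$.

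For the first step, the Gaussian weight $e^{-\frac{1}{2}\Tr XX^\dag}\,dX$ is preserved under $X\mapsto UXU^{T}$ for $U\in U(2k)$ (antisymmetric case) or $U\in U(k)$ (symmetric case), because $U(\cdot)U^T$ is an isometry of $\mathcal{A}_{2k}(\mathbb{C})$ and of $\mathcal{S}_{k}(\mathbb{C})$ and $\Tr XX^\dag$ is manifestly unchanged. Hence Lemma~\ref{Song_Feng_zonal_integral_skew} applies and reduces \eqref{aint} and \eqref{sint} to the scalar computations of $C_A^{-1}\int P_\eta^{(1/2)}(XX^\dag)\,e^{-\frac{1}{2}\Tr XX^\dag}dX$ and $C_S^{-1}\int P_\eta^{(2)}(XX^\dag)\,e^{-\frac{1}{2}\Tr XX^\dag}dX$, multiplied by the prefactors $s_{\eta^2}(A)/s_{\eta^2}(I_{2k})$ and $s_{2\eta}(A)/s_{2\eta}(I_k)$ respectively.

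For the second step I would diagonalize. The Youla decomposition $X=U\Sigma U^T$ on $\mathcal{A}_{2k}(\mathbb{C})$ makes $XX^\dag$ unitarily similar to $\diag(t_1,t_1,\ldots,t_k,t_k)$, so by Remark~\ref{rem:jack_degeneracy} $P_\eta^{(1/2)}(XX^\dag)=P_\eta^{(1/2)}(t_1,\ldots,t_k)$; a standard Jacobian computation identifies the induced measure on $\bm t\in[0,\infty)^k$ as the Laguerre Symplectic Ensemble $\propto \prod_j e^{-t_j}|\Delta(\bm t)|^4\,d\bm t$. The Autonne--Takagi decomposition on $\mathcal{S}_{k}(\mathbb{C})$ gives $k$ generically distinct eigenvalues of $XX^\dag$ with Laguerre Orthogonal density $\propto\prod_j e^{-t_j}|\Delta(\bm t)|\,d\bm t$. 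Both cases are then covered by Lemma~\ref{Kaneko_Kadel_integral_Laguerre_limit} with exponent $a=0$ and $\alpha\in\{1/2,2\}$, yielding
\begin{align*}
\frac{1}{C_A}\int_{\mathcal{A}_{2k}(\mathbb{C})} e^{-\frac{1}{2}\Tr XX^\dag}P_\eta^{(1/2)}(XX^\dag)\,dX &= [2k-1]_\eta^{(1/2)}P_\eta^{(1/2)}(I_k),\\
\frac{1}{C_S}\int_{\mathcal{S}_{k}(\mathbb{C})} e^{-\frac{1}{2}\Tr XX^\dag}P_\eta^{(2)}(XX^\dag)\,dX &= [(k+1)/2]_\eta^{(2)}P_\eta^{(2)}(I_k).
\end{align*}

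The antisymmetric identity is then closed immediately by Lemma~\ref{Jack_Schur_ratio}, which is exactly $P_\eta^{(1/2)}(I_k)[2k-1]_\eta^{(1/2)}/s_{\eta^2}(I_{2k})=h_{\eta'}(2)$. The symmetric case requires the parallel identity $P_\eta^{(2)}(I_k)[(k+1)/2]_\eta^{(2)}/s_{2\eta}(I_k)=h_{\eta'}(1/2)$, which I would verify by substituting \eqref{Jack_polynomial_at_identity} for $P_\eta^{(2)}(I_k)$, writing $s_{2\eta}(I_k)=[k]_{2\eta}/d'_{2\eta}$ with $d'_{2\eta}=h_\eta(2)d'_\eta(2)$ from \eqref{dprime_h_properties}, collapsing the two half-integer coefficients via the duplication formula $[k]_{2\eta}=2^{2|\eta|}[k/2]_\eta^{(2)}[(k+1)/2]_\eta^{(2)}$ in \eqref{gen_pochammer_props}, and finally using $d'_\eta(2)=2^{|\eta|}h_{\eta'}(1/2)$ from \eqref{dprime_h_properties}. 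The main obstacle I anticipate is the Jacobian calculation in the diagonalization step, which must correctly pin down both the Vandermonde power and the fact that the exponent parameter is $a=0$ for each ensemble; these are classical facts for Cartan-type decompositions but need careful bookkeeping of normalizations, after which everything else is a mechanical application of the combinatorial identities already assembled in the paper.
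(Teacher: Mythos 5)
Your strategy is exactly the paper's: Lemma \ref{Song_Feng_zonal_integral_skew} to strip out $A$, diagonalization of $XX^\dagger$ to a Laguerre-type eigenvalue integral, Lemma \ref{Kaneko_Kadel_integral_Laguerre_limit}, and then Lemma \ref{Jack_Schur_ratio} (or its $\alpha=2$ analogue, which you verify correctly and which the paper omits). For the anti-symmetric identity \eqref{aint} your argument is complete and correct; in particular the weight works out because the eigenvalues of $XX^\dagger$ are doubly degenerate there, so $\tfrac{1}{2}\Tr XX^\dagger=\sum_j t_j$ over the \emph{distinct} eigenvalues and Lemma \ref{Kaneko_Kadel_integral_Laguerre_limit} applies verbatim with $a=0$, $\alpha=1/2$.

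There is, however, a gap in your treatment of the symmetric case \eqref{sint}. On $\mathcal{S}_k(\mathbb{C})$ the eigenvalues of $XX^\dagger$ are generically simple, so the weight $e^{-\frac{1}{2}\Tr XX^\dagger}$ induces $\prod_j e^{-t_j/2}$ on the eigenvalues, \emph{not} $\prod_j e^{-t_j}$ as required by Lemma \ref{Kaneko_Kadel_integral_Laguerre_limit}. Rescaling $t_j\to 2t_j$ to match the lemma produces an extra factor $2^{|\eta|}$ from the homogeneity $P^{(2)}_\eta(2\bm t)=2^{|\eta|}P^{(2)}_\eta(\bm t)$ (the Vandermonde and volume factors cancel between numerator and normalization), so your displayed intermediate formula should read $2^{|\eta|}[(k+1)/2]^{(2)}_\eta P^{(2)}_\eta(I_k)$, and carrying this through your (otherwise correct) coefficient manipulation yields $2^{|\eta|}h_{\eta'}(1/2)\,s_{2\eta}(A)$ rather than $h_{\eta'}(1/2)\,s_{2\eta}(A)$. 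In fact this discrepancy points at an inconsistency in the source: the statement \eqref{sint} uses the weight $e^{-\frac{1}{2}\Tr XX^\dagger}$, whereas its application in \eqref{complex-sym} uses $e^{-\Tr XX^\dagger}$, and it is the latter normalization for which the right-hand side $h_{\eta'}(1/2)\,s_{2\eta}(A)$ is correct. So your proof of \eqref{sint} as literally stated silently drops a $2^{|\eta|}$; either track that factor or prove the identity with the weight $e^{-\Tr XX^\dagger}$, which is the version actually needed in the proof of Theorem \ref{prop_GinSE}.
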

\begin{proof}
We give the proof of \eqref{aint}. The proof of \eqref{sint} follows a similar pattern and we omit the details. Choosing the Gaussian measure in Lemma \ref{Song_Feng_zonal_integral_skew}, we obtain
	\begin{equation}
	\begin{split}
		&\frac{1}{C_{A}}\int_{\mathcal{A}_{2k}(\mathbb{C}) } dX e^{-\frac{1}{2} \Tr XX^\dag} P_{\eta}^{(1/2)} (A X A^T X^\dag)\\
		&=
		\frac{s_{\eta^2}(A)}{s_{\eta^2}(1^{2N})}
		\frac{1}{C_{A}}
		\int_{\mathcal{A}_{2k}(\mathbb{C})} dX e^{-\frac{1}{2} \Tr XX^\dag} \> P_{\eta}^{(1/2)} (XX^\dag). \label{aintrhs}
		\end{split}
	\end{equation}
Next, we pass to the doubly degenerate eigenvalues of $XX^\dag$  using a result from \cite[Section 3.3.2]{F10book} on Jacobians of complex anti-symmetric matrices. The latter section in \cite{F10book} also contains the required Jacobian for complex symmetric matrices when evaluating \eqref{sint}. Denoting the distinct eigenvalues $t_{1},\ldots,t_{N}$, the integral on the right-hand side of \eqref{aintrhs} is
	\begin{equation}
	\begin{split}
		&\frac{1}{c}\int_{[0,\infty)^N}\prod_{j=1}^{k}dt_{j}\,e^{-t_j} \, P_{\eta}^{(1/2)} (\bm t)|\Delta(\bm t)|^4 = [2k - 1]_{\eta}^{(1/2)}P_{\eta}^{(1/2)}(I_k)
		\end{split}
	\end{equation}
	where $c=\int_{[0,\infty)^k}\prod_{j=1}^{k}dt_{j}\,e^{-t_j} \,|\Delta(\bm t)|^4$ and we used the integral of Lemma \ref{Kaneko_Kadel_integral_Laguerre_limit}. Application of Lemma \ref{Jack_Schur_ratio} completes the proof.
\end{proof}

\begin{lemma}  \label{zonal_func_jacobi_dual_integral_antisymmetric}
	For a $2k \times 2k$ complex matrix $A$, we have
	\begin{equation}
	\begin{split}
		&\frac{1}{S^{(1)}_{k}}
		\int_{\mathcal{A}_{2k}(\mathbb{C})} dX \det ( I_{2k} + XX^\dag)^{-N/2+1-2k} P_\eta^{(1/2)}(AXA^TX^\dag)\\
		&=h_{\eta'} (2) \, s_{\eta^2}(A)\frac{(-1)^{|\eta|}}{[-N]_\eta^{(1/2)}} \label{a2kintJacobi}
		\end{split}
	\end{equation}
	where $S^{(1)}_{k} = \int_{\mathcal{A}_{2k}(\mathbb{C})} dX \det ( I_{2k} + XX^\dag)^{-N/2+1-2k}$ is evaluated in Lemma \ref{lem:threeints} below. Furthermore
\begin{equation}
\begin{split}
&\frac{1}{S^{(4)}_{k}}\int_{\mathcal{S}_{2k}(\mathbb{C})}dX\,\det(I_{2k}+XX^{\dagger})^{-N-1-2k}P_{\eta}(Z^{-1}XZ^{-1}X^{\dagger})\\
&= h_{\eta'}(1/2)s_{2\eta}(\bm z^{-1})\frac{(-1)^{|\eta|}}{[-N]^{(2)}_{\eta}} \label{s2kintJacobi}
\end{split}
\end{equation}
where $S^{(4)}_{k} = \int_{\mathcal{S}_{2k}(\mathbb{C})}dX\,\det(I_{2k}+XX^{\dagger})^{-N-1-2k}$.
\end{lemma}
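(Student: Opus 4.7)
The plan is to mirror the proof of Lemma \ref{zonal_func_gaussian_integral_antisymmetric}, replacing the Gaussian weight by the Jacobi-type weight $\det(I_{2k}+XX^\dagger)^{-c}$ and invoking the dual Kaneko--Kadell identity (Lemma \ref{Kaneko_Kadel_integral_dual}) in place of its Laguerre limit (Lemma \ref{Kaneko_Kadel_integral_Laguerre_limit}). I would handle the anti-symmetric case \eqref{a2kintJacobi} in detail; the symmetric case \eqref{s2kintJacobi} follows the same template.

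First, both measures $\det(I_{2k}+XX^\dagger)^{-N/2+1-2k}\,dX$ on $\mathcal A_{2k}(\mathbb C)$ and $\det(I_{2k}+XX^\dagger)^{-N-1-2k}\,dX$ on $\mathcal S_{2k}(\mathbb C)$ are invariant under $X\mapsto UXU^T$ for $U\in U(2k)$, so Lemma \ref{Song_Feng_zonal_integral_skew} factors out the $A$-dependence as $s_{\eta^2}(A)/s_{\eta^2}(I_{2k})$ in the anti-symmetric case and $s_{2\eta}(A)/s_{2\eta}(I_{2k})$ in the symmetric case (taking $A=Z^{-1}$), times a scalar integral of $P_\eta^{(1/2)}(XX^\dagger)$, respectively $P_\eta^{(2)}(XX^\dagger)$, against the same weight.

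Next I would reduce to an eigenvalue integral using the Jacobians in \cite[Section 3.3.2]{F10book}. For $X\in\mathcal A_{2k}(\mathbb C)$ the eigenvalues of $XX^\dagger$ are doubly degenerate $(t_1,t_1,\ldots,t_k,t_k)$, the induced density carries $|\Delta(\bm t)|^4$, and $\det(I_{2k}+XX^\dagger)=\prod_{j=1}^k(1+t_j)^2$; for $X\in\mathcal S_{2k}(\mathbb C)$ the eigenvalues $(t_1,\ldots,t_{2k})$ are generically distinct with Jacobian $|\Delta(\bm t)|$, and $\det(I_{2k}+XX^\dagger)=\prod_{j=1}^{2k}(1+t_j)$. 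The exponents $-N/2+1-2k$ and $-N-1-2k$ are chosen precisely so that the two resulting eigenvalue integrals fit Lemma \ref{Kaneko_Kadel_integral_dual}: the anti-symmetric case with $(\alpha,a,b)=(1/2,0,N)$ on $k$ variables, and the symmetric case with $(\alpha,a,b)=(2,0,N)$ on $2k$ variables. Applying the lemma yields
\[
\frac{(-1)^{|\eta|}[2k-1]_\eta^{(1/2)}}{[-N]_\eta^{(1/2)}}\,P_\eta^{(1/2)}(I_k)
\qquad\text{and}\qquad
\frac{(-1)^{|\eta|}[k+\tfrac12]_\eta^{(2)}}{[-N]_\eta^{(2)}}\,P_\eta^{(2)}(I_{2k})
\]
respectively.

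Finally, the anti-symmetric case is closed by Lemma \ref{Jack_Schur_ratio}, which gives $P_\eta^{(1/2)}(I_k)\,[2k-1]_\eta^{(1/2)}/s_{\eta^2}(I_{2k})=h_{\eta'}(2)$. For the symmetric case one needs the analogue
\[
\frac{P_\eta^{(2)}(I_{2k})\,[k+\tfrac12]_\eta^{(2)}}{s_{2\eta}(I_{2k})}=h_{\eta'}(1/2),
\]
which I would establish by combining $P_\eta^{(2)}(I_{2k})=2^{|\eta|}[k]_\eta^{(2)}/h_\eta(2)$ from \eqref{Jack_polynomial_at_identity}, the factorisation $[2k]_{2\eta}=2^{2|\eta|}[k]_\eta^{(2)}[k+\tfrac12]_\eta^{(2)}$ from \eqref{gen_pochammer_props}, and the hook-length identities $d'_{2\eta}=h_\eta(2)\,d'_\eta(2)$ together with $d'_\eta(2)=2^{|\eta|}h_{\eta'}(1/2)$ from \eqref{dprime_h_properties}. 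The only real obstacle is bookkeeping the Jacobian and the exponent in $(1+t_j)$ to verify that the parameters fit the Kaneko--Kadell form on the nose; once this is done, the rest is hook-length algebra already assembled in the excerpt.
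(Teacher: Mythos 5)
Your proposal follows essentially the same route as the paper: invariance via Lemma \ref{Song_Feng_zonal_integral_skew}, reduction to the eigenvalue integral with the $|\Delta(\bm t)|^4$ Jacobian, Lemma \ref{Kaneko_Kadel_integral_dual} with $(\alpha,a,b)=(1/2,0,N)$, and Lemma \ref{Jack_Schur_ratio} to finish. The paper only writes out the anti-symmetric case and declares the symmetric one analogous, whereas you also supply the symmetric-case parameters and verify the needed identity $P_\eta^{(2)}(I_{2k})[k+\tfrac12]_\eta^{(2)}/s_{2\eta}(I_{2k})=h_{\eta'}(1/2)$, which checks out correctly against \eqref{gen_pochammer_props}, \eqref{dprime_h_properties} and \eqref{Jack_polynomial_at_identity}.
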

\begin{proof}	
	We present the details for \eqref{a2kintJacobi}, the integral \eqref{s2kintJacobi} follows a similar pattern. We proceed as in the analogous Gaussian case by employing Lemma \ref{Song_Feng_zonal_integral_skew}, which shows that the left-hand side of \eqref{a2kintJacobi} is equal to
	\begin{equation}
		\frac{s_{\eta^2}(A)}{s_{\eta^2}(1^{2k})}\frac{1}{S^{(1)}_{k}}\int_{\mathcal{A}_{2k}(\mathbb{C})} dX \det ( I_{2N} + XX^\dag)^{-N/2+1-2k} P_\eta^{(1/2)}(XX^\dag) \label{intline2}.
	\end{equation}
We recall the result from \cite[Section 3.3.2]{F10book} on Jacobians of complex anti-symmetric matrices and pass to the distinct eigenvalues $t_{1},\ldots,t_{k}$ of $XX^\dag$, which implies that the integral in \eqref{intline2} is
	\begin{equation}
	\begin{split}
		&\frac{1}{c}\int_{[0,\infty)^k}\prod_{j=1}^{k} dt_{j}\,( 1 + t_j )^{-N+2-4k}P_\eta^{(1/2)}(\bm t)
		|\Delta(\bm t)|^4\\
		&=P_\eta^{(1/2)}(I_{k})[2k-1]_\eta^{(1/2)}\frac{(-1)^{|\eta|}}{ [-N]_\eta^{(1/2)}}
	\end{split}
	\end{equation}
	where $c = \int_{[0,\infty)^k}\prod_{j=1}^{k} dt_{j}\,( 1 + t_j )^{-N+2-4k}|\Delta(\bm t)|^4$ and we applied Lemma \ref{Kaneko_Kadel_integral_dual}. Again, application of Lemma \ref{Jack_Schur_ratio} completes the proof.
	\end{proof}
	
	\begin{lemma}
	\label{lem:threeints}
	We have the following integral identities:
\begin{equation}
S^{(2)}_{k} := \int_{\mathbb{C}_{k \times k}}dX\,\det(I_{k}+XX^{\dagger})^{-N-2k} = \pi^{k^{2}}\prod_{j=1}^{k}\frac{(N+j-1)!}{(N+k+j-1)!}, \label{int1}
\end{equation}
\begin{equation}
S^{(1)}_{k} := \int_{\mathcal{A}_{2k}(\mathbb{C})}dX\,\det(I_{2k}+XX^{\dagger})^{-\frac{N}{2}+1-2k}= \pi^{k(2k-1)}\prod_{j=1}^{k}\frac{(N+2j-2)!}{(N+2k+2j-3)!}, \label{int2}
\end{equation}
and
\begin{equation}
\begin{split}
S^{(4)}_{k} := \int_{\mathcal{S}_{2k}(\mathbb{C})}dX\,&\det(I_{2k}+XX^{\dagger})^{-N-1-2k}\\ &= 4^{k}\pi^{k(2k+1)}\prod_{j=1}^{2k}\frac{\Gamma(N+j/2+1/2)}{\Gamma(N+1+(2k+j)/2)}. \label{int3}
\end{split}
\end{equation}
\end{lemma}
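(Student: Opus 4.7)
The strategy for all three identities is the same: diagonalize $X$ by the appropriate decomposition for its symmetry class, thereby reducing the matrix integral to an integral over the eigenvalues of $XX^\dagger$, and then recognize the resulting scalar integral as the empty–partition specialization of the Kaneko--Kadell identity (Lemma \ref{Kaneko_Kadel_integral_dual}), which in turn is a Selberg integral \eqref{selberg}. The three cases differ only in the Jacobian (specifically in the power of the Vandermonde) and in the volume constants.

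For $S^{(2)}_{k}$, use the singular value decomposition $X=U\Sigma V^{\dagger}$ with $U,V\in U(k)$ and $\Sigma=\mathrm{diag}(\sigma_{1},\ldots,\sigma_{k})$. The well-known Jacobian gives $dX = c^{(2)}_{k}\,\prod_{j}\sigma_{j}\,\prod_{i<j}(\sigma_{i}^{2}-\sigma_{j}^{2})^{2}\,d\sigma\,dU\,dV$ modulo the $U(1)^{k}$ stabilizer, where $c^{(2)}_{k}$ collects the relevant volumes of $U(k)/U(1)^{k}$. Setting $t_{j}=\sigma_{j}^{2}$ and integrating out the unitary factors yields
\begin{equation*}
S^{(2)}_{k} \;=\; C^{(2)}_{k}\int_{[0,\infty)^{k}}\prod_{j=1}^{k}dt_{j}\,(1+t_{j})^{-N-2k}\,|\Delta(\bm t)|^{2},
\end{equation*}
which is precisely Lemma \ref{Kaneko_Kadel_integral_dual} at $\eta=\emptyset$, $\alpha=1$, $a=0$ and $b=N$. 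Invoking \eqref{selberg} and collecting Gamma functions produces the product in \eqref{int1}; the power $\pi^{k^{2}}$ then emerges from $C^{(2)}_{k}$ via the standard volume formula for $U(k)$.

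For $S^{(1)}_{k}$ and $S^{(4)}_{k}$, we repeat the reduction using the analogous decompositions recalled in \cite[Sec.~3.3.2]{F10book}. For $X\in\mathcal{A}_{2k}(\mathbb{C})$ the singular values come in $k$ doubly degenerate pairs and the Jacobian introduces $\prod_{i<j}(t_{i}-t_{j})^{4}$, so the resulting scalar integral is the trivial-partition Kaneko--Kadell at $\alpha=1/2$. For $X\in\mathcal{S}_{2k}(\mathbb{C})$ the Autonne--Takagi factorization $X=U\Sigma U^{T}$ produces $2k$ distinct eigenvalues of $XX^{\dagger}$ with Jacobian weight $\prod_{i<j}(t_{i}-t_{j})$, i.e.\ the $\alpha=2$ case of Kaneko--Kadell over $[0,\infty)^{2k}$. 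In both cases the Selberg evaluation \eqref{selberg} (with $\gamma=2$ and $\gamma=1/2$ respectively) converts the product of Gamma functions into the explicit ratios in \eqref{int2} and \eqref{int3}.

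The main obstacle is purely bookkeeping: identifying the Jacobian constants $C^{(i)}_{k}$ so that the combination with Selberg's formula reproduces the claimed products. In particular, the factor $4^{k}$ in $S^{(4)}_{k}$ reflects the $\sigma_{j}\,d\sigma_{j}=\tfrac{1}{2}\,dt_{j}$ substitutions inherent in the Takagi Jacobian applied to $2k$ singular values, while the powers $\pi^{k(2k-1)}$ and $\pi^{k(2k+1)}$ arise from the Haar volumes of the relevant coset spaces $U(2k)/\mathrm{Sp}(2k)$ (antisymmetric case) and $U(2k)/O(2k)$ (symmetric case), respectively. Once these prefactors are fixed, the remaining verification is the direct algebraic simplification of the Selberg output into the products stated in \eqref{int1}--\eqref{int3}.
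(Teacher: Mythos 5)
Your proposal follows essentially the same route as the paper: reduce each integral to the distinct (squared) singular values of $X$ using the Jacobians recalled from \cite[Section 3.3.2]{F10book}, and evaluate the resulting eigenvalue integral as a Selberg integral (equivalently, the trivial-partition case of Lemma \ref{Kaneko_Kadel_integral_dual} after the substitution $t_j = x_j/(1-x_j)$); the paper fixes the Jacobian proportionality constants by comparing against the corresponding Gaussian matrix integral rather than by quoting coset volumes, but that is the same bookkeeping. One small caveat: the factor $4^{k}$ in $S^{(4)}_{k}$ does not come from the substitutions $\sigma_{j}\,d\sigma_{j}=\tfrac{1}{2}\,dt_{j}$ (those would contribute $4^{-k}$, and they occur in all three cases anyway) but from the normalization of the diagonal entries in $\Tr(XX^{\dagger})$ for complex symmetric $X$, as in the Gaussian normalization \eqref{ginse_dual_meas}; since you defer all constants to bookkeeping this does not invalidate the argument, but that particular attribution should be corrected before the constants are actually chased through.
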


\begin{proof}
By a change of variables to the eigenvalues $t_{1},\ldots,t_{k}$ of $XX^{\dagger}$ we have
\begin{equation}
\int_{\mathbb{C}_{k \times k}}dX\,\det(I_{k}+XX^{\dagger})^{-N-2k} = \tilde{c}\int_{[0,\infty)^{k}}\prod_{j=1}^{k}dt_{j}\,(1+t_{j})^{-N-2k}\,|\Delta(\bm t)|^{2} \label{intsel1}
\end{equation}
where the proportionality constant $\tilde{c}$ is known, see for example \cite{FF11}
\begin{equation}
\tilde{c} = \frac{\pi^{k^{2}}}{\prod_{j=0}^{k-1}j!(1+j)!}.
\end{equation}
To evaluate \eqref{intsel1} make the change of variables, for each $j=1,\ldots,k$, $t_{j} = x_{j}/(1-x_{j})$, so that $(1+t_{j})^{-1} = (1-x_{j})$. Then \eqref{intsel1} is equal to
\begin{equation}
\begin{split}
&\tilde{c}\int_{[0,1]^{k}}\prod_{j=1}^{k}dx_{j}\,(1-x_{j})^{N}\,|\Delta(\bm x)|^{2} = \tilde{c}\prod_{j=0}^{k-1}\frac{j!(1+j)!(N+j)!}{(N+k+j)!}\\
&=\pi^{k^{2}}\prod_{j=0}^{k-1}\frac{(N+j)!}{(N+k+j)!}
\end{split}
\end{equation}
where we used Selberg's integral \eqref{selberg}. For \eqref{int2}, note that for $X \in \mathcal{A}_{2k}(\mathbb{C})$, the eigenvalues of $XX^{\dagger}$ are non-negative and doubly degenerate. Taking $t_{1},\ldots,t_{k}$ for the distinct ones, the eigenvalue dependent part of the Jacobian is proportional to $|\Delta(\bm t)|^{4}$ see \cite[Section 3.3.2]{F10book}. Then there exists some constant $\tilde{d}$ such that
\begin{equation}
\int_{\mathcal{A}_{2k}(\mathbb{C})}dX\,\det(I_{2k}+XX^{\dagger})^{-\frac{N}{2}+1-2k} = \tilde{d}\int_{[0,\infty)^{k}}\prod_{j=1}^{k}dt_{j}\,(1+t_{j})^{-N+2-4k}\,|\Delta(\bm t)|^{4}. \label{intselbergbeta4}
\end{equation}
To compute $\tilde{d}$ we adopt a similar strategy to \cite{FF11}, noting that
\begin{equation}
1 = \frac{1}{\pi^{k(2k-1)}}\,\int_{\mathcal{A}_{2k}(\mathbb{C})}dX\,e^{-\frac{1}{2}\mathrm{Tr}(XX^{\dagger})}= \frac{\tilde{d}}{\pi^{k(2k-1)}}\int_{[0,\infty)^{k}}\prod_{j=1}^{k}dt_{j}\,e^{-t_{j}}\,|\Delta(\bm t)|^{4}.
\end{equation}
This is again a particular case of Selberg's integral and solving for $\tilde{d}$ gives
\begin{equation}
\tilde{d} = \frac{2^{k}\pi^{k(2k-1)}}{\prod_{j=0}^{k-1}(2j)!(2+2j)!}.
\end{equation}
To evaluate \eqref{intselbergbeta4} make the change of variables $t_{j} = x_{j}/(1-x_{j})$ as before. Then another instance of Selberg's integral \eqref{selberg} shows that \eqref{intselbergbeta4} is equal to
\begin{equation}
\begin{split}
&\tilde{d}\int_{[0,1]^{k}}\prod_{j=1}^{k}dx_{j}\,(1-x_{j})^{N}\,|\Delta(\bm x)|^{4} = \tilde{d}\,2^{-k}\prod_{j=0}^{k-1}\frac{(2j)!(2+2j)!(N+2j)!}{(N+2(k+j)-1)!}\\
&=\pi^{k(2k-1)}\prod_{j=0}^{k-1}\frac{(N+2j)!}{(N+2(k+j)-1)!}.
\end{split}
\end{equation}
We omit the proof of \eqref{int3} as it follows a similar pattern to the evaluation of \eqref{int1} and \eqref{int2} above.
\end{proof}

\section{Asymptotics of non-integer moments in the real Ginibre ensemble}
\label{ap:barnes}
In this Appendix we give the proof of the asymptotics \eqref{expanzero}. We exploit the fact that for any real $\gamma$, one has $|\det(G)|^{2\gamma} = \det(GG^{\mathrm{T}})^{\gamma}$, noting that if $G$ is taken from the real Ginibre ensemble, the symmetric positive definite matrix $W = GG^{\mathrm{T}}$ is well studied and known as a real Wishart matrix or Laguerre Orthogonal Ensemble (LOE) see \cite[Chapter 3]{F10book} for background. For such ensembles we can follow an approach that was used to prove a central limit theorem for the log determinant of a Wishart matrix in \cite{G63} and was applied to several Hermitian ensembles in \cite{R07}.
\begin{proof}[Proof of Theorem \ref{th:expanzero}]
We have that $\mathbb{E}(|\det(G_{N})|^{2\gamma}) = N^{-\gamma N}\mathbb{E}(\det(W)^{\gamma})$ where $W=GG^{\mathrm{T}}$. The joint probability density function of eigenvalues $\lambda_{1},\ldots,\lambda_{N}$ of $W$ is well known, for example in \cite[Chapter 3]{F10book},
\begin{equation}
P(\lambda_{1},\ldots,\lambda_{N}) =\frac{1}{c^{\mathrm{LOE}}_{N,-1/2}}\,\prod_{j=1}^{N}\lambda_{j}^{-\frac{1}{2}}e^{-\frac{\lambda_{j}}{2}}|\Delta(\bm \lambda)|
\end{equation}
where the normalization constant is given in terms of a limiting form of Selberg's integral \cite{FW08}
\begin{equation}
\begin{split}
c^{\mathrm{LOE}}_{N,a} &= \int_{[0,\infty)^{N}}\prod_{j=1}^{N}d\lambda_{j}\,\lambda_{j}^{a}e^{-\frac{\lambda_{j}}{2}}|\Delta(\bm \lambda)|\\
&= \pi^{-\frac{N}{2}}2^{N(a+(N+3)/2)}\prod_{j=0}^{N-1}\Gamma(3/2+j/2)\Gamma(a+1+j/2).
\end{split}
\end{equation}
Now provided $\gamma > -\frac{1}{2}$ our quantity of interest is
\begin{equation}
\begin{split}
\mathbb{E}(|\det(G_{N})|^{2\gamma}) &= N^{-\gamma N}\frac{c^{\mathrm{LOE}}_{N,\gamma-1/2}}{c^{\mathrm{LOE}}_{N,-1/2}}\\
&=N^{-\gamma N}2^{\gamma N}\prod_{j=0}^{N-1}\frac{\Gamma(\gamma+1/2+j/2)}{\Gamma(1/2+j/2)}\\
&=N^{-\gamma N}2^{\gamma N}\prod_{j=0}^{N/2-1}\frac{\Gamma(\gamma+1/2+j)}{\Gamma(1/2+j)}\prod_{j=0}^{N/2-1}\frac{\Gamma(\gamma+1+j)}{\Gamma(1+j)}\\
&=N^{-\gamma N}2^{\gamma N}\frac{G(1/2)G(N/2+\gamma+1/2)G(N/2+\gamma+1)}{G(1/2+\gamma)G(1+\gamma)G(N/2+1/2)G(N/2+1)}, \label{barnesGdet}
\end{split}
\end{equation}
where for simplicity we have assumed that $N$ is even. The Barnes G-function satisfies the following asymptotic expansion \cite[Eqn (4.185)]{F10book},
\begin{equation*}
\begin{split}
&\log \left(\frac{G(N+a+1)}{G(N+b+1)}\right) = (a-b)N\log(N) + (b-a)N\\
&+\frac{a-b}{2}\log(2\pi)+\frac{a^{2}-b^{2}}{2}\log(N)+o(1), \qquad N \to \infty,
\end{split}
\end{equation*}
which is uniform for the parameters $a$ and $b$ varying in compact subsets of $\mathbb{C}$. Applying these asymptotics to the last line of \eqref{barnesGdet} completes the proof of \eqref{expanzero}. The expansion in \eqref{expanzero} is fully consistent with the $x=0$ one of \eqref{GinOEmoms}, for the following reasons. By the functional relation \eqref{iterateG} we have, for integer $k \in \mathbb{N}$,
\begin{equation}
\begin{split}
G(k+1/2) &= \left(\prod_{j=0}^{k-1}\Gamma(k-j-1/2)\right)\,G(1/2)\\
G(k+1) &= \prod_{j=0}^{k-1}\Gamma(k-j).
\end{split}
\end{equation}
Applying this, followed by the duplication formula for the Gamma function we have
\begin{equation}
G(k+1)G(k+1/2) = G(1/2)\pi^{\frac{k}{2}}2^{k-k^{2}}\prod_{j=0}^{k-1}(2j)!. \label{dupG} 
%G(1/2)\prod_{j=0}^{k-1}\Gamma(k-j-1/2)\Gamma(k-j)
%= G(1/2)\sqrt{\pi}^{k}\prod_{i=0}^{k-1}\Gamma(2k-2i-1)4^{1-k+i}\\
\end{equation}
Inserting \eqref{dupG} into \eqref{expanzero} for $\gamma=k$ reduces expansion \eqref{expanzero} to the particular case $x=0$ of expansion \eqref{GinOEmoms}.
\end{proof}

\bibliography{bibliographynew}

\begin{thebibliography}{10}

\bibitem{A20}
I.~Afanasiev.
\newblock On the correlation functions of the characteristic polynomials of
  real random matrices with independent entries.
\newblock {\em Journal of Mathematical Physics, Analysis, Geometry},
  16(2):91--118, 2020.

\bibitem{AAW23}
G.~Akemann, N.~Ayg{\"u}n, and T.~R. W{\"u}rfel.
\newblock Generalised unitary group integrals of {I}ngham-{S}iegel and
  {F}isher-{H}artwig type.
\newblock {\em J. Math. Phys.}, 65(023501), 2024.

\bibitem{AB07}
G.~Akemann and F.~Basile.
\newblock Massive partition functions and complex eigenvalue correlations in
  matrix models with symplectic symmetry.
\newblock {\em Nuclear Phys. B}, 766, 2007.

\bibitem{AKP10}
G.~Akemann, M.~Kieburg, and M.~J. Phillips.
\newblock Skew-orthogonal {L}aguerre polynomials for chiral real asymmetric
  random matrices.
\newblock {\em J. Phys. A: Math. Theor.}, 43(375207), 2010.

\bibitem{A08}
G.~Akemann, M.~J. Phillips, and H-J. Sommers.
\newblock Characteristic polynomials in real {G}inibre ensembles.
\newblock {\em Journal of Physics A: Mathematical and Theoretical},
  42(1):012001, 2008.

\bibitem{AV03}
G.~Akemann and G.~Vernizzi.
\newblock Characteristic polynomials of complex random matrix models.
\newblock {\em Nuclear Phys. B}, 660(3):532--556, 2003.

\bibitem{BK21}
E.~C. Bailey and J.~P. Keating.
\newblock Maxima of log-correlated fields: some recent developments.
\newblock {\em J. Phys. A}, 55(053001, 76), 2022.

\bibitem{B00}
A.~B. Balantekin.
\newblock Character expansions, {I}tzykson-{Z}uber integrals, and the {QCD}
  partition function.
\newblock {\em Phys. Rev. D (3)}, 62(8):085017, 8, 2000.

\bibitem{Ber04}
M.~C. Berg\`{e}re.
\newblock Biorthogonal polynomials for potentials of two variables and external
  sources at the denominator.
\newblock \texttt{arXiv:hep-th/0404126}, 2004.

\bibitem{BW16}
D.~Betea and M.~Wheeler.
\newblock Refined {C}auchy and {L}ittlewood identities, plane partitions and
  symmetry classes of alternating sign matrices.
\newblock {\em J. Combin. Theory Ser. A}, 137:126--165, 2016.

\bibitem{BS09}
A.~Borodin and C.~D. Sinclair.
\newblock The {G}inibre ensemble of real random matrices and its scaling
  limits.
\newblock {\em Comm. Math. Phys.}, 291(1):177--224, 2009.

\bibitem{BH00}
E.~Br\'{e}zin and S.~Hikami.
\newblock Characteristic polynomials of random matrices.
\newblock {\em Comm. Math. Phys.}, 214:111--135, 2000.

\bibitem{BH01}
E.~Br\'{e}zin and S.~Hikami.
\newblock Characteristic polynomials of real symmetric random matrices.
\newblock {\em Comm. Math. Phys.}, 223(2):363--382, 2001.

\bibitem{BG06}
D.~Bump and A.~Gamburd.
\newblock On the averages of characteristic polynomials from classical groups.
\newblock {\em Comm. Math. Phys.}, 265(1):227--274, 2006.

\bibitem{BY22}
S.~Byun and P.~J. Forrester.
\newblock Progress on the study of the {G}inibre ensembles {I}: {GinUE}.
\newblock \texttt{arXiv:2211.16223}, 2022.

\bibitem{BY23}
S.~Byun and P.~J. Forrester.
\newblock Progress on the study of the {G}inibre ensembles {II}: {GinOE and
  GinSE}.
\newblock \texttt{arXiv:2301.05022}, 2023.

\bibitem{CFKRS03}
J.~B. Conrey, D.~W. Farmer, J.~P. Keating, M.~O. Rubinstein, and N.~C. Snaith.
\newblock Autocorrelation of random matrix polynomials.
\newblock {\em Comm. Math. Phys.}, 237(3):365--395, 2003.

\bibitem{DS20}
A.~Dea\~{n}o and N.~Simm.
\newblock Characteristic polynomials of complex random matrices and
  {P}ainlev\'{e} transcendents.
\newblock {\em Int. Math. Res. Not. IMRN}, 2022(1):210--264, 2022.

\bibitem{DS94}
P.~Diaconis and M.~Shahshahani.
\newblock On the eigenvalues of random matrices.
\newblock {\em J. Appl. Probab.}, 31A:49--62, 1994.

\bibitem{EKS94}
A.~Edelman, E.~Kostlan, and M.~Shub.
\newblock How many eigenvalues of a random matrix are real?
\newblock {\em J. Amer. Math. Soc.}, 7(1):247--267, 1994.

\bibitem{FS09}
Z.~M. Feng and J.~P. Song.
\newblock Integrals over the circular ensembles relating to classical domains.
\newblock {\em J. Phys. A: Math. Theor.}, 42, 325204, 2009.

\bibitem{FF11}
J.~Fischmann and P.J. Forrester.
\newblock One-component plasma on a spherical annulus and a random matrix
  ensemble.
\newblock {\em J. Stat. Mech.}, 2011(10):P10003, 2011.

\bibitem{F06}
P.~J. Forrester.
\newblock Quantum conductance problems and the {J}acobi ensemble.
\newblock {\em J. Phys. A: Math. Gen.}, 39(22):6861--6870, 2006.

\bibitem{F10book}
P.~J. Forrester.
\newblock {\em Log-Gases and Random Matrices}.
\newblock London Mathematical Society Monographs. Princeton University Press,
  Princeton, NJ, 2010.

\bibitem{FR07}
P.~J. Forrester and E.~M. Rains.
\newblock Symmetrized models of last passage percolation and non-intersecting
  lattice paths.
\newblock {\em J. Stat. Phys.}, 129(5-6):833--855, 2007.

\bibitem{FR09}
P.~J. Forrester and E.~M. Rains.
\newblock Matrix averages relating to {G}inibre ensembles.
\newblock {\em J. Phys. A: Math. Theor.}, 42, 385205(38), 2009.

\bibitem{FW08}
P.~J. Forrester and S.~O. Warnaar.
\newblock The importance of the {S}elberg integral.
\newblock {\em Bull. Amer. Math. Soc.}, pages 489--534, 2008.

\bibitem{F01}
Y.~V. Fyodorov.
\newblock Spectra of random matrices close to unitary and scattering theory for
  discrete-time systems.
\newblock {\em AIP Conference Proceedings}, 553(1):191--196, 2001.

\bibitem{FK07}
Y.~V. Fyodorov and B.~A. Khoruzhenko.
\newblock On absolute moments of characteristic polynomials of a certain class
  of complex random matrices.
\newblock {\em Comm. Math. Phys.}, 273(3):561--599, 2007.

\bibitem{FKP23}
Y.~V. Fyodorov, B.~A. Khoruzhenko, and M.~Poplavskyi.
\newblock Extreme eigenvalues and the emerging outlier in rank-one
  non-{H}ermitian deformations of the {G}aussian unitary ensemble.
\newblock {\em Entropy}, 25(1), 2023.

\bibitem{G65}
J.~Ginibre.
\newblock Statistical ensembles of complex, quaternion, and real matrices.
\newblock {\em J. Math. Phys.}, 6:440--449, 1965.

\bibitem{G63}
N.~R. Goodman.
\newblock The distribution of the determinant of a complex {W}ishart
  distributed matrix.
\newblock {\em Annals of Mathematical Statistics}, 34:178--180, 1963.

\bibitem{G16}
J.~Grela.
\newblock Diffusion method in random matrix theory.
\newblock {\em J. Phys. A: Math. Theor.}, 49:18, 015201, 2016.

\bibitem{HSS92}
P.~J. Hanlon, R.~P. Stanley, and J.~R. Stembridge.
\newblock Some combinatorial aspects of the spectra of normally distributed
  random matrices.
\newblock In {\em Hypergeometric functions on domains of positivity, {J}ack
  polynomials, and applications}, volume 138 of {\em Contemp. Math.}, pages
  151--174. Amer. Math. Soc., 1992.

\bibitem{IOW96}
M.~Ishikawa, S.~Okada, and M.~Wakayama.
\newblock Applications of minor-summation formula. {I}. {L}ittlewood's
  formulas.
\newblock {\em J. Algebra}, 183(1):193--216, 1996.

\bibitem{IW99}
M.~Ishikawa and M.~Wakayama.
\newblock Applications of minor-summation formula {II}. {P}faffians and {S}chur
  polynomials.
\newblock {\em J. Comb. Theory, Ser. A}, 88:136--157, 1999.

\bibitem{Joh00}
K.~Johansson.
\newblock Random growth and random matrices.
\newblock In {\em European {C}ongress of {M}athematics}, volume 201 of {\em
  Progress in Mathematics}, pages 445--456. Birkh\"{a}user, 2001.

\bibitem{K97}
K.~W.~J. Kadell.
\newblock The {S}elberg-{J}ack symmetric functions.
\newblock {\em Adv. Math.}, 130(1):33--102, 1997.

\bibitem{K21}
T.~Kanazawa.
\newblock Unitary matrix integral for two-color {QCD} and the {GSE-GUE}
  crossover in random matrix theory.
\newblock {\em Physics Letters B}, 819:136416, 2021.

\bibitem{K93}
J.~Kaneko.
\newblock {S}elberg integrals and hypergeometric functions associated with
  {J}ack polynomials.
\newblock {\em SIAM J. Math. Anal.}, 24(4):1086--1110, 1993.

\bibitem{K02}
E.~Kanzieper.
\newblock Eigenvalue correlations in non-{H}ermitean symplectic random
  matrices.
\newblock {\em J. Phys. A}, 35(31):6631--6644, 2002.

\bibitem{KS00a}
J.~P. Keating and N.~C. Snaith.
\newblock Random matrix theory and {$L$}-functions at {$s=1/2$}.
\newblock {\em Comm. Math. Phys.}, 214(1):91--110, 2000.

\bibitem{KS00b}
J.~P. Keating and N.~C. Snaith.
\newblock Random matrix theory and {$\zeta(1/2+it)$}.
\newblock {\em Comm. Math. Phys.}, 214(1):57--89, 2000.

\bibitem{KL21}
B.~A. Khoruzhenko and S.~Lysychkin.
\newblock Truncations of random symplectic unitary matrices.
\newblock \texttt{arXiv:2111.02381}, 2021.

\bibitem{KSZ10}
B.~A. Khoruzhenko, H-J. Sommers, and K.~\.{Z}yczkowski.
\newblock Truncations of random orthogonal matrices.
\newblock {\em Phys. Rev. E.}, 82, 040106(4), 2010.

\bibitem{KG10}
M.~Kieburg and T.~Guhr.
\newblock A new approach to derive {P}faffian structures for random matrix
  ensembles.
\newblock {\em J. Phys. A: Math. Theor.}, 43(135204), 2010.

\bibitem{P22}
P.~Kivimae.
\newblock Concentration of equilibria and relative instability in disordered
  non-relaxational dynamics.
\newblock \texttt{arXiv:2212.11452}, 2022.

\bibitem{LZ22}
D-Z. Liu and L.~Zhang.
\newblock Phase transition of eigenvalues in deformed {G}inibre ensembles.
\newblock \texttt{arXiv:2204.13171}, 2022.

\bibitem{Macdonald}
I.~G. Macdonald.
\newblock {\em Symmetric Functions and Hall Polynomials}.
\newblock Oxford Mathematical Monographs, 1979.

\bibitem{M19book}
E.~S. Meckes.
\newblock {\em The Random Matrix Theory of the Classical Compact Groups}.
\newblock Cambridge Tracts in Mathematics. Cambridge University Press, 2019.

\bibitem{MS13}
F.~Mezzadri and N.~J. Simm.
\newblock Tau-function theory of chaotic quantum transport with
  {$\beta=1,2,4$}.
\newblock {\em Comm. Math. Phys.}, 324(2):465--513, 2013.

\bibitem{M82book}
R.~J. Muirhead.
\newblock {\em Aspects of Multivariate Statistical Theory}.
\newblock Wiley, 1982.

\bibitem{NK02}
S.~M. Nishigaki and A.~Kamenev.
\newblock Replica treatment of non-{H}ermitian disordered {H}amiltonians.
\newblock {\em J. Phys. A: Math. Gen.}, 35(21):4571--4590, 2002.

\bibitem{O01}
A.~Okounkov.
\newblock Infinite wedge and random partitions.
\newblock {\em Selecta Math. (N.S.)}, 7(1):57--81, 2001.

\bibitem{O04}
Y.~A. Orlov.
\newblock New solvable matrix integrals.
\newblock {\em International Journal of Modern Physics A}, 19:276--293, 2004.

\bibitem{O24}
M.~Osman.
\newblock Least non-zero singular value and the distribution of eigenvectors of
  non-{H}ermitian random matrices.
\newblock \texttt{arXiv:2404.01149}, 2024.

\bibitem{RV07}
B.~Rider and B.~Vir\'{a}g.
\newblock The noise in the circular law and the {G}aussian free field.
\newblock {\em Int. Math. Res. Not. IMRN}, 2007(2):Art. ID rnm006, 33, 2007.

\bibitem{R07}
A.~Rouault.
\newblock Asymptotic behavior of random determinants in the {L}aguerre, {G}ram
  and {J}acobi ensembles.
\newblock {\em ALEA Lat. Am. J. Probab. Math. Stat.}, 3:181--230, 2007.

\bibitem{ST21}
L.~Santilli and M.~Tierz.
\newblock Schur expansion of random-matrix reproducing kernels.
\newblock {\em J. Phys. A}, 54(435202, 29), 2021.

\bibitem{SW03}
B.~Schlittgen and T.~Wettig.
\newblock Generalizations of some integrals over the unitary group.
\newblock {\em J. Phys. A}, 36(12):3195--3201, 2003.

\bibitem{SS22}
A.~Serebryakov, N.~Simm, and G.~Dubach.
\newblock Characteristic polynomials of random truncations: Moments, duality
  and asymptotics.
\newblock {\em Random Matrices: Theory and Applications}, 12(1):2250049, 2023.

\bibitem{S09}
H-J. Sommers and B.~A. Khoruzhenko.
\newblock Schur function averages for the real {G}inibre ensemble.
\newblock {\em J. Phys. A}, 42(22):222002, 8, 2009.

\bibitem{S89}
R.~P. Stanley.
\newblock Some combinatorial properties of {J}ack symmetric functions.
\newblock {\em Adv. Math.}, 77(1):76--115, 1989.

\bibitem{T84book}
A.~Takemura.
\newblock {\em Zonal polynomials}.
\newblock IMS, 1984.

\bibitem{TZ14}
R.~Tribe and O.~Zaboronski.
\newblock The {G}inibre evolution in the large-{N} limit.
\newblock {\em J. Math. Phys.}, 55(6):063304, 2014.

\bibitem{TZ23}
R.~Tribe and O.~Zaboronski.
\newblock Averages of products of characteristic polynomials and the law of
  real eigenvalues for the real {G}inibre ensemble.
\newblock \texttt{arXiv:2308.06841}, 2023.

\bibitem{WW21}
Q.~Wang and Y.~Wang.
\newblock Moments of characteristic polynomials in certain random neural
  networks.
\newblock {\em Statistics \& Probability Letters}, 172, 109044, 2021.

\bibitem{WW19}
C.~Webb and M.~D. Wong.
\newblock On the moments of the characteristic polynomial of a {G}inibre random
  matrix.
\newblock {\em Proc. Lond. Math. Soc.}, 118(5):1017--1056, 2019.

\bibitem{ZS00}
K.~\.{Z}yczkowski and H-J. Sommers.
\newblock Truncations of random unitary matrices.
\newblock {\em J. Phys. A: Math. Gen.}, 33(10):2045--2057, 2000.

\end{thebibliography}
\bibliographystyle{plain}

\end{document}